\newtheorem{theorem}{Theorem}[section]
\declaretheorem[name=Theorem,sibling=theorem]{thm-restate}
\newtheorem{lemma}[theorem]{Lemma}
\newtheorem{proposition}[theorem]{Proposition}
\newtheorem{corollary}[theorem]{Corollary}
\theoremstyle{definition}
\newtheorem{definition}[theorem]{Definition}
\renewcommand{\Pr}{\mathop{\bf Pr\/}}
\newcommand{\Ex}[1]{{\mathbb{E}}\left[#1\right]}
\newcommand{\classP}{\textsf{P}}
\newcommand{\classNP}{\textsf{NP}}
\newcommand{\classAPX}{\textsf{APX}}
\newcommand{\DTIME}{\textsf{DTIME}}
\newcommand{\eps}{\varepsilon}
\newcommand{\ot}{\otimes}
\newcommand{\calD}{\mathcal{D}}
\newcommand{\calL}{\mathcal{L}}
\newcommand{\calN}{\mathcal{N}}
\newtheorem{problem}{Problem}
\newcommand{\NN}[0]{\mathbb{N}}
\newcommand{\RR}[0]{\mathbb{R}}
\newcommand{\tSigma}{\tilde{\Sigma}}
\newcommand{\OPT}{\mathrm{OPT}}
\newcommand{\ALG}{\mathrm{ALG}}
\newcommand{\D}{\displaystyle}
\renewcommand{\Pr}[1]{\mbox{\rm\bf Pr}\left[#1\right]} 
\newcommand{\growingmid}{\mathrel{}\middle|\mathrel{}}
\newcommand{\IS}{\textsc{Independent Set}}
\newcommand{\VC}{\textsc{Vertex Cover}}
\newcommand{\MkVC}{\textsc{Max-k-Vertex-Cover}}
\newcommand{\BVE}{\textsc{Bipartite Vertex Expansion}}
\newcommand{\CBVE}{\textsc{Colored Bipartite Vertex Expansion}}
\newcommand{\PART}{\textsc{Partition}}
\newcommand{\sdpopt}{\mathcal{V}_{SDP}}
\title{Algorithmic Persuasion with Evidence}
\author{Martin Hoefer\thanks{RWTH Aachen University, Germany. mhoefer@cs.rwth-aachen.de} \and Pasin Manurangsi\thanks{Google Research, USA. pasin@google.com} \and Alexandros Psomas\thanks{Purdue University, USA. apsomas@purdue.edu}
}
\date{}
\begin{document}

\maketitle

\begin{abstract}%
In a game of persuasion with evidence, a sender has private information. By presenting evidence on the information, the sender wishes to persuade a receiver to take a single action (e.g., hire a job candidate, or convict a defendant). The sender's utility depends solely on whether or not the receiver takes the action. The receiver's utility depends on both the action and the sender's private information. 

We study three natural variations. First, we consider the problem of computing an equilibrium of the game without commitment power. Second, we consider a persuasion variant, where the sender commits to a signaling scheme and the receiver, after seeing the evidence, takes the action or not. Third, we study a delegation variant, where the receiver first commits to taking the action if being presented certain evidence, and the sender presents evidence to maximize the probability the action is taken. 
We study these variants through the computational lens, and give hardness results, optimal approximation algorithms, and polynomial-time algorithms for special cases. Among our results is an approximation algorithm that rounds a semidefinite program that might be of independent interest, since, to the best of our knowledge, it is the first such approximation algorithm in algorithmic economics.
\end{abstract}%

\section{Introduction}

%\alex{minor: the last sentence makes it sound like that's our main result}

Persuasion is a fundamental challenge arising in diverse areas such as recommendation problems in the Internet, consulting and lobbying, employee hiring. Persuasion problems occupy a central role in economics and received significant interest over the last two decades. A prominent approach is \emph{persuasion with evidence} as introduced by Glazer and Rubinstein~\cite{GlazerR04,GlazerR06}. %, which has attracted a lot of subsequent work. 
In this problem, a sender wishes to persuade a receiver to take a single action by presenting evidence. The sender's utility depends solely on whether or not the action is taken, while the receiver's utility depends on both the action as well as the sender's private information. Consider, for example, a prosecutor trying to convince a judge that a defendant is guilty and should be convicted, or a job candidate trying to convince a company that she has the best qualifications and should be hired. How should these pairs of agents interact? 

The literature on persuasion games in economics and game theory is vast; see Sobel~\cite{sobel2013giving} for a survey. In sharp contrast, very little is known about \emph{computation} in this domain, especially for the persuasion problem with evidence. How does the restriction to evidence impact the computational complexity of persuasion strategies? Our main contribution of this paper is to initiate the systematic study of \emph{persuasion with evidence though a computational lens}.

We examine three natural model variants that arise from the power to commit to certain behavior. If there is no commitment power, the scenario is an extensive-form game. We prove that finding a subgame-perfect equilibrium is always possible in polynomial time. However, the sender and the receiver can significantly improve their utility when they enjoy commitment power. %, respectively. 

If the sender has commitment power, then she can commit in advance which evidence is presented in each possible instantiation of her private information, and the receiver seeing the evidence then takes the action or not. We refer to this situation as \emph{constrained persuasion}, since the sender with commitment power wants to persuade the rational receiver to take the action. The sender is constrained to providing concrete evidence instead of just making a recommendation as is the case in the so called Bayesian persuasion paradigm~\cite{kamenica2011bayesian}. Constrained persuasion is a natural model in the example of prosecutor and judge, where the prosecutor (sender) with private information would first present evidence before the judge (receiver) makes a decision. Although this scenario seems structurally rather simple, we show that the sender's task in constrained persuasion can in general become computationally (highly) intractable. Unless $\classP = \classNP$, optimal persuasion is hard to approximate within a polynomial factor of the input size. However, many persuasion scenarios exhibit a natural condition that we term ``global signal'': At least one signal (such as, e.g., staying silent and presenting no evidence at all) is available independently of the private information held by the sender. In this case, the persuasion problem becomes tractable.

If the receiver has commitment power, she commits to taking the action if and only if being faced with a specific set of evidence. We refer to this situation as \emph{constrained delegation}, since we assume that the receiver with commitment power delegates inspection of the state of nature to a sender, whose incentive becomes to provide convincing evidence to support taking the action. Constrained delegation better fits the second example, where the company (receiver) can give the candidate (sender) a test to present evidence on the private information about qualifications, and commit to hiring her if she performs well. We show that the receiver's task in delegation is also intractable -- unless $\classP = \classNP$, optimal delegation can become hard to approximate within a factor of $2-\varepsilon$, for any constant $\varepsilon > 0$. Notably, this result applies even in instances with the natural condition of a global signal.

These computational differences complement conceptual differences known from the economics literature. Namely, persuasion lacks a condition termed ``credibility'' that was shown for delegation. Formally, credibility implies that there is a deterministic optimal solution that does not require randomization, see~\citet{GlazerR06} for details. We proceed to study algorithms with matching approximation guarantees for constrained persuasion and delegation, as well as a number of exact and approximation algorithms for various special cases. This includes, in particular, an approximation algorithm for a class of delegation problems that solves and rounds a semidefinite program (SDP). This last result might be of independent interest and, to the best of our knowledge, it is the first natural problem in information structure design, as well as mechanism design, where the SDP toolbox is used.

%\footnote{In fact, it is known that the SDP we use provides the optimal approximation achievable in polynomial time~\cite{Raghavendra08,RaghavendraS09a}, assuming the Unique Games Conjecture. See Section~\ref{sec: delegation} for details.}.
%\alex{wanted to say gives ``the best approximation ratio'', but that seems like a pain in the ass to justify... ``has been used'' sounds too weak on the other hand...}

\section{Preliminaries}

Following~\cite{GlazerR06, Sher11, Sher14}, we study the fundamental problem of persuasion with evidence. There are two players, a sender and a receiver. The receiver is tasked with either taking a specific action and ``accept'' (henceforth $A$), or sticking to the status quo and ``reject'' (henceforth $R$). The sender wants to convince the receiver to take action $A$. There is a state of nature $\theta$ drawn from a distribution $\calD$ with support $\Theta$ of size $n$. We denote the probability that $\theta$ is drawn by $q_\theta$. %States of nature are categorized as either ``acceptable'' or ``not acceptable'' (i.e. ``rejectable'') for the receiver. 
The set $\Theta$ is partitioned into the set of acceptable states $\Theta_A$ and the set of rejectable ones $\Theta_R = \Theta \setminus \Theta_A$. We denote the total probability on acceptable states by $q_A = \sum_{\theta \in \Theta_A} q_{\theta}$, and the total probability on rejectable states by $q_R = \sum_{\theta \in \Theta_R} q_\theta$.

Both players know $\calD$. The sender knows the realization of the state of nature, the receiver does not. The sender has utility 1 whenever the receiver takes action $A$, and 0 otherwise. Formally, for the sender utility we have $u_s(A,\theta) = 1$ and $u_s(R,\theta) = 0$, for all $\theta \in \Theta$.
The utility of the receiver depends on the combination of the chosen action $a \in \{ A , R \}$ and the state of nature $\theta$. She has utility $1$ if she makes the ``right'' decision --- accept in an acceptable state of nature or reject in a rejectable state of nature --- and $0$ otherwise. Formally, $u_r(a,\theta) = 1$ when (1) $a = A$ and $\theta \in \Theta_A$, or (2) $a = R$ and $\theta \in \Theta_R$. Otherwise, $u_r(a,\theta) = 0$. 

The sender strives to send a message to the receiver according to a signaling strategy that is known to all parties. This message should persuade the receiver to accept. On the other hand, upon receiving the message, the receiver strives to infer the state of nature and make the right accept/reject decision. We focus on games with evidence, where the messages that can be sent are not arbitrary. Every state of nature has intrinsic characteristics (e.g., a candidate for a position has grades, degrees, or test scores) that \emph{can} be (but don't \emph{have} to be) revealed to the receiver, and cannot be forged.
More formally, there is a set $\Sigma$ of $m$ possible messages or \emph{signals} that the sender can report to the receiver. We are given as input a bipartite graph $H = (\Theta \cup \Sigma, E)$, where an edge $e = (\theta,\sigma) \in E$ implies that signal $\sigma$ is allowed to be sent in state $\theta$. We use $N(\theta) \subseteq \Sigma$ to denote the neighborhood of $\theta$, i.e., the set of allowed signals for state $\theta$. Similarly, $N(\sigma) \subseteq \Theta$ is the set of states in which signal $\sigma$ can be sent. To avoid trivialities, we assume that none of the neighborhoods $N(\cdot)$ are empty, i.e., there are no isolated nodes in $H$.

We study the computational complexity of games with evidence for different forms of interaction between the sender and the receiver. In the case of \emph{constrained persuasion}, the game starts with the sender committing to a \emph{signaling scheme}. A signaling scheme $\varphi$ is a mapping $\varphi : E \to [0,1]$, where $\varphi(\theta,\sigma)$ is the joint probability that state $\theta$ is realized and signal $\sigma$ is sent in state $\theta$. Clearly, for any signaling scheme we have $\sum_{\sigma \in N(\theta)} \varphi(\theta,\sigma) = q_{\theta}$ for every state $\theta \in \Theta$. After the sender has committed to a scheme $\varphi$, nature draws $\theta \in \Theta$ with probability $q_\theta$, and $\theta$ is revealed to the sender. Then, the sender sends signal $\sigma$ with probability $\varphi(\theta,\sigma)/q_{\theta}$. The receiver then decides on an action A or R. Finally, depending on the (state of nature, action)-pair, the sender and receiver get payoffs as described by the utilities above.

\begin{problem}{\textsc{(Constrained Persuasion)} $\;$}
Find a signaling scheme $\varphi^*$ for commitment of the sender such that, upon a best response of the receiver, the sender utility is as high as possible.
\end{problem}

In the case of \emph{constrained delegation}, the game starts with the receiver committing to an action for every possible signal $\sigma \in \Sigma$, according to a \emph{decision scheme}. A decision scheme $\psi$ is a mapping $\psi : \Sigma \to [0,1]$, where $\psi(\sigma)$ is the probability to choose action A. After the receiver has committed to a scheme $\psi$, nature draws $\theta \in \Theta$ with probability $q_\theta$, and $\theta$ is revealed to the sender. Then, the sender decides which signal $\sigma$ she will report (under the constraint that $\sigma \in N(\theta)$). The receiver then takes action A with probability $\psi(\sigma)$, and R otherwise. Finally, depending on the (state of nature, action)-pair, the sender and receiver get payoffs as described by the utilities above.

\begin{problem}{\textsc{(Constrained Delegation)} $\;$}
Find a decision scheme $\psi^*$ for commitment of the receiver such that, upon a best response of the sender, the receiver utility is as high as possible.
\end{problem}

Finally, in the game without commitment power, we look for a pair $(\varphi,\psi)$ of signaling and decision schemes that constitute a \emph{Bayes-Nash equilibrium} in the extensive-form game, where nature first determines the state of nature, the sender then picks $\varphi$ to provide evidence, and then the receiver uses $\psi$ to accept or reject based on the evidence provided. Given that the sender picks $\varphi$, the receiver  picks $\psi$ as a best response for every given evidence. Similarly, given that the receiver responds to evidence with $\psi$, the signaling scheme $\varphi$ is a best response for the sender. 

\begin{problem}{\textsc{(Constrained Equilibrium)} $\;$}
Find a pair of signaling scheme $\varphi$ and decision scheme $\psi$ that represents a Bayes-Nash equilibrium in the persuasion game with evidence and without commitment power.
\end{problem}

\subsection{Structural Properties}

While the persuasion problem with evidence appears rather elementary, it turns out that both persuasion and delegation variants are $\classNP$-hard, and even \classNP-hard to approximate. Hence, even in this seemingly simple domain, it is necessary to identify additional structure to obtain positive results. We mostly consider structural properties of the neighborhoods of the states of nature. 

\noindent \textbf{Unique Accepts and Rejects.} In an instance with \emph{unique accepts}, there is a single acceptable state, i.e., $|\Theta_A| = 1$. Similarly, for \emph{unique rejects} we have $|\Theta_R| = 1$. This is equivalent to assuming that every acceptable (rejectable, resp.) state $\theta$ has the same neighborhood $N(\theta)$.

\noindent \textbf{Degree-bounded States.} In an instance with \emph{degree-$k$ states}, every state $\theta \in \Theta$ has $|N(\theta)| \le k$. Similarly, for \emph{degree-$k$ accepts}, every acceptable state $\theta \in \Theta_A$ has $|N(\theta)| \le k$, and for \emph{degree-$k$ rejects} every rejectable state $\theta \in \Theta_R$ has $|N(\theta)| \le k$.

%\smallskip 
%
%\noindent \textbf{Degree-bounded Signals.} In an instance with \emph{degree-$k$ signals}, every signal $\sigma \in \Sigma$ has $|N(\sigma)| \le k$. Similarly, for \emph{accept-degree-$k$ signals}, every signal $\sigma \in \Sigma$ has $|N(\sigma) \cap \Theta_A| \le k$, and for \emph{reject-degree-$k$ signals} every signal $\sigma \in \Sigma$ has $|N(\sigma) \cap \Theta_R| \le k$. The classes of accept-degree-1 signals generalize the class of unique accepts (similar for reject-degree-1 signals and unique rejects).

\noindent \textbf{Foresight.} \citet{Sher14} considers instances with \emph{foresight} defined as follows. For an acceptable state $\theta \in \Theta_A$, a signal $\sigma$ is called \emph{minimally forgeable for $\theta$} if (1)  $\sigma \in N(\theta)$, that is, $\sigma$ is valid evidence for $\theta$, and (2) $\sigma \in N(\theta')$ implies $\sigma' \in N(\theta')$ for every other signal $\sigma' \in N(\theta)$ and every rejectable state $\theta' \in \Theta_R$. In an instance \emph{with foresight} every acceptable state has a minimally forgeable signal. Intuitively, in such a problem every acceptable state $\theta$ has a (not necessarily unique) signal that is maximally informative about $\theta$ with respect to the set of rejectable states. Foresight strictly generalizes other properties studied in previous work, e.g. normality~\cite{BullW07}. Normality requires a signal for every state (not only the acceptable ones) that satisfies the condition of minimally forgeable, and it satisfies the condition w.r.t.\ all states (not only w.r.t.\ rejectable ones). In addition, foresight is a generalization of instances with unique rejects, as well as a generalization the class of degree-1 accepts.

\noindent \textbf{Global Signal.} In an instance with \emph{global signals}, there is at least one signal $\sigma$ with $N(\sigma) = \Theta$, i.e., the signal can be sent in every possible state. For example, one can think of ``being silent'' as such a global signal.

\noindent \textbf{Proof of Membership.} In an instance with \emph{proof of membership}, the set of signals $\Sigma$ is the set of all subsets of $\Theta$, and the sender is constrained so that when the state is $\theta$ she can only send a signal $\sigma$ if $\theta \in \sigma$. This special structure is also considered by~\citet{grossman1981informational} and~\citet{milgrom1981good}.
Note that this class is a special case of instances with global signal.

\noindent \textbf{Laminar Neighborhoods.} In an instance with \emph{laminar signals}, the family of neighborhoods of states $\{ N(\theta) \mid \theta \in \Theta\}$ forms a laminar family, i.e., for two states $\theta, \theta'$ the sets of allowed signals fulfill either $N(\theta) \subseteq N(\theta')$ or $N(\theta) \cap N(\theta') = \emptyset$. In an instance with \emph{laminar states}, the family of neighborhoods of signals $\{ N(\sigma) \mid \sigma \in \Sigma \}$ forms a laminar family.

In an instance with \emph{laminar states}, consider a connected component $C$ of the state-signal graph $H$. If $H$ has several connected components, the instance can be treated separately for each connected component. Let us consider a single component, or, equivalently, assume $H$ is connected. Due to laminarity, there is at least one signal $\sigma$ that has a maximal set of neighboring states, i.e., for every signal $\sigma'$ we have $N(\sigma') \subseteq N(\sigma)$. We assume that every state has an incident signal, so $N(\sigma) = \Theta$, i.e., every instance with (connected $H$ and) laminar states has global signals.

%\smallskip 
%
%\noindent\textbf{Laminar Accept/Reject.} In an instance with \emph{accept-laminar signals}, the family of neighborhoods of states $\{ N(\theta) \mid \theta \in \Theta_A\}$ forms a laminar family, i.e., for two states $\theta, \theta' \in \Theta_A$ we have either $N(\theta) \subseteq N(\theta')$ or $N(\theta) \cap N(\theta') = \emptyset$. For accept-laminar signals we do not make assumptions on the signaling structure of rejectable states. The instances with \emph{reject-laminar signals} are defined symmetrically. Note that instances with unique accept obviously have accept-laminar signals. Similarly, instances with unique reject have reject-laminar signals.

%In an instance with \emph{laminar accept states}, the family of acceptable neighborhoods of signals $\{ N(\sigma) \cap \Theta_A \mid \sigma \in \Sigma \}$ forms a laminar family. For \emph{laminar reject states} we assume the rejectable neighborhoods of all signals form a laminar family. Again, any instance with unique accepts has laminar accept states (same for unique rejects and laminar reject states). \Martin{As of now I am not sure we can show anything for these instances except some hardness...}

\subsection{Results and Contribution}

We provide polynomial-time exact and approximation algorithms as well as hardness results for the general problems and the domains with more structure described above. 

For the constrained equilibrium problem, we show that a Bayes-Nash equilibrium can always be computed in polynomial time by repeatedly solving a maximum flow problem. We compare the utility obtained in an equilibrium with the one achievable with commitment power, for the sender and the receiver, respectively. Formally, we define and bound the ratio of the utilities for best and worst-case equilibria, in the spirit of prices of anarchy and stability. For the receiver, it is known that the price of stability is $1$~\cite{GlazerR06}; we show that the price of anarchy is $2$. For the sender we show that both ratios are unbounded.
This substantial utility gain provides further motivation to study problems with commitment power.
Our results for constrained delegation and persuasion are summarized in Table~\ref{Table1}. 

\begin{table}[t]
\begin{center}
\begin{tabular}{|c||c|c||c|c|} \hline
Scenario & \multicolumn{2}{c||}{Constrained Delegation} & \multicolumn{2}{c|}{Constrained Persuasion}  \\ 
  & Upper & Lower & Upper & Lower \\ \hline\hline
General       & 2 & $2-\varepsilon$ ($\classP \ne \classNP$) & $O(n)$ & $n^{1-\varepsilon}$ ($\classP \ne \classNP$)  \\ \hline
Degree-2 States & 1.1 & \classAPX-hard~\cite{Sher14} & $O(n)$ & $n^{1-\varepsilon}$ ($\classP \ne \classNP$)  \\ \hline
Degree-$d$ States & $2-1/d^2$ & \classAPX-hard~\cite{Sher14} & $O(n)$ & $n^{1-\varepsilon}$ ($\classP \ne \classNP$)  \\ \hline
Degree-1 Rejects & 2 & \classAPX-hard~\cite{Sher14} & $O(n)$ & $n^{1-\varepsilon}$ ($\classP \ne \classNP$)  \\ \hline
Degree-1 Accepts & 1~\cite{Sher14} & & $O(n)$ & $n^{1-\varepsilon}$ ($\classP \ne \classNP$)   \\ \hline
Foresight & 1~\cite{Sher14} & & $O(n)$ & $n^{1-\varepsilon}$ ($\classP \ne \classNP$)  \\ \hline
Unique Rejects & 1~\cite{Sher14} & & 1 &  \\ \hline
Unique Accepts & 1 & & PTAS & Strongly \classNP-hard  \\ \hline
Global Signal & 2 & $2-\varepsilon$ ($\classP \ne \classNP$) & 1 &   \\ \hline
Proof of Membership & 1 &  & 1 & \\ \hline
Laminar States & 1 & & 1 & \\ \hline
Laminar Signals & 1 &  & %\Martin{Todo}
& Weakly \classNP-hard \\ \hline
\end{tabular}
\caption{\label{Table1} Approximation results shown in this paper, as well as results shown or implied by~\cite{Sher14}.}
\end{center}
\end{table}

For the constrained delegation problem, we show two interesting non-trivial approximation results in Section~\ref{subsec: approximations for delegation}. For instances with degree-$d$ states we give a $(2-\frac{1}{d^2})$-approximation algorithm via LP rounding. For degree-2 states, we propose a SDP-based algorithm to compute a 1.1-approximation. To the best our knowledge, this is the first application of advanced results from the SDP toolbox in the context of information design, as well as mechanism design. 

We discuss tractable special cases in Section~\ref{subsec: efficient cases in delegation}.~\citet[Theorem 7]{Sher14} shows that in instances with foresight the optimal decision scheme can be found in polynomial time by solving a network flow problem. Unique rejects and degree-1 accepts are special cases, so the same result holds. For proof of membership, the optimal decision scheme is very simple. In case of laminar states or laminar signals, the instance does not necessarily fulfill the conditions of foresight. In both cases, we provide new polynomial-time algorithms based on dynamic programming.

For constrained persuasion, the strong hardness arises from deciding which action should be preferred by the receiver for each signal. It holds even in several seemingly special cases with degree-2 states and degree-1 accepts or degree-1 rejects. As a consequence, good approximation algorithms can be obtained only in significantly more limited scenarios than for delegation. For unique accepts, we prove strong \classNP-hardness (i.e. there is no FPTAS unless \classP = \classNP) and provide a polynomial-time approximation scheme (PTAS). In contrast, for unique rejects, the problem can even be solved in polynomial time. 

In the natural scenario when a global signal is available, we show a transformation into a standard Bayesian persuasion problem with direct signals, in which a sender with commitment power simply transmits the recommended action the receiver should take. This problem can be solved optimally in polynomial time via linear programming. This stands in strong contrast to delegation, where availability of a  global signal has no effect on the hardness of approximation. More generally, we prove the positive result for a very general version of the 2-action constrained persuasion problem with arbitrary utilities for receiver and sender. By applying the result to each component of the state-signal graph $H$, we can also obtain an optimal signaling scheme for instances with laminar states in polynomial time. The optimal signaling scheme can be obtained easily for proof of membership. Finally, for laminar signals, we show weak \classNP-hardness. It is an interesting open problem to strengthen this lower bound and to obtain a non-trivial approximation algorithm for this case.

While our hardness results hold for more general scenarios, the majority of our positive results crucially use the fact that the receiver's action is binary (accept or reject). In constrained delegation, the general $2$-approximation algorithm which picks the better of ``always accept'' and ``always reject'' can be naturally extended to the problem with more receiver actions (with the approximation guarantee degrading as the number of actions grows), but there is no natural extension for our  more specialized algorithms that beat the factor of $2$ in special cases. In constrained persuasion, a signaling scheme partitions the signal space into two sets, $\Sigma_A$ and $\Sigma_R$, in the sense that the receiver takes action $A$ if and only if she gets signal $\sigma \in \Sigma_A$ (and $R$ for $\Sigma_R$), and our positive results in crucially rely on the fact that a signal is either in a set or its complement. Therefore, to get positive results for more general settings, a new approach seems necessary. Understanding the landscape of constrained persuasion and constrained delegation in more general settings, e.g., when there are multiple receiver actions, is left as an interesting research direction.

\subsection{Related Work}\label{sec: related work}

There is a large body of literature on strategic communication, see~\citet{sobel2013giving} for an extensive review. The works most closely related to ours are~\citet{GlazerR06} and~\citet{Sher14}.~\citet{GlazerR06} introduce the problem of constrained delegation. They show, among other things, that the optimal decision scheme in constrained delegation is deterministic. Furthermore, they prove that there is always a Bayes-Nash equilibrium where the receiver plays the optimal decision scheme from constrained delegation, i.e., the price of stability for the receiver is 1. This condition is termed ``credibility'' and it is strengthened by~\citet{Sher14} to sequential equilibria. It is easy to see that this is not true when sender moves first. This conceptual difference between persuasion and delegation is reflected as a difference in the problems' computational complexity. Deterministic optimal strategies and ``credibility'' hold also beyond the simple model with 2 actions -- when receiver utility is a concave transformation of sender utility, see~\citet{Sher11}.~\citet{Sher14} builds on the model of~\citet{GlazerR06} and characterizes optimal rules for static as well as dynamic persuasion. Furthermore, and more relevant to our interest here, he proves an $\classNP$-hardness result for constrained delegation, as well as provides a polynomial-time algorithm for optimal delegation in instances with foresight. Here we strengthen this hardness result to a hardness of approximation within a factor of $2-\varepsilon$ (and provide a matching, alas trivial, approximation algorithm). While this subsumes \classNP-hardness in general, we observe that his hardness proof applies in case of degree-2 states and degree-1 rejects, and that it even implies \classAPX-hardness for such instances.

\citet{GlazerR04} study a related setting, where the state of nature is multi-dimensional, and the receiver can verify at most one dimension. The authors characterize the optimal mechanism as a solution to a particular linear programming problem, show that it takes a fairly simple form, and show that random mechanisms may be necessary to achieve the optimum.
\citet{carroll2019strategic} study the problem of fully revealing the sender's information in a setting with multidimensional states, where the receiver can verify a single dimension. Importantly, the dimension the receiver chooses to reveal depends on the sender's message.

A number of works in the algorithmic economics literature investigate the computational complexity of persuasion and information design. Computational aspects of the Bayesian persuasion model of~\citet{kamenica2011bayesian} are studied in, e.g.,~\cite{dughmi2017algorithmic, dughmi2019hardness, dughmi2016algorithmic, dughmi2019persuasion, emek2014signaling, Hahn2020secretary, Hahn2020prophet}, but in these works there are no limits on the senders' signals, i.e., $H$ is the complete bipartite graph. Closer to our work are~\citet{Dughmi2016limited} and~\citet{Gradwohl2020limited} who study computational problems in Bayesian persuasion with limited signals, where the number of signals is smaller than the number of actions.

To be consistent with most works in algorithmic economics we use the terms ``price of anarchy'' and ``price of stability'' to refer to the ratios of the optimal utility of a player with commitment power, over their utility in the worst/best equilibrium. The ``value of commitment'', the ratio of the utility of a player when she has commitment power over her utility when she does not have commitment power, is a related notion studied in~\cite{vardy2004value,letchford2014value} in the context of Stackelberg games.

\section{Equilibria, the Price of Anarchy, and the Price of Stability}

We first study the scenario without commitment power. Our interest here is to obtain a signaling scheme $\varphi : E \to [0,1]$ and a decision scheme $\psi : \Sigma \to [0,1]$, such that the pair $(\varphi,\psi)$ forms a Bayes-Nash equilibrium. 

We prove this result for a general class of games, in which the receiver has two actions (denoted A and R for consistency). Moreover, sender and receiver can have utilities $u_s, u_r : \{A,R\} \times \Theta \to \RR$ that yield arbitrary positive or negative values for every (state of nature, action)-pair. 

We conjecture that our results can be strengthened to the refined concept of \emph{sequential equilibrium} (which was studied in~\cite{GlazerR06,Sher11,Sher14}) using suitable sequences of belief systems. For simplicity, we here stick to the more straightforward notion of Bayes-Nash equilibrium.

\begin{theorem}
  \label{thm:equilibrium}
  A Bayes-Nash equilibrium can be computed in polynomial time when the receiver has two actions.
\end{theorem}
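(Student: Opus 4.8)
The plan is to reduce the search for a sequential equilibrium to finding a partition of the signals into an \emph{accept} set $A$ and a \emph{reject} set $B=\Sigma\setminus A$, together with a signaling scheme $\varphi$ supporting it. First I would invoke the structural results of~\cite{GlazerR06} to restrict attention to a deterministic receiver, i.e.\ a decision scheme $\psi=\mathbbm{1}[\,\cdot\in A\,]$. Against such a $\psi$ the sender's best response is transparent: in state $\theta$ she sends some signal in $N(\theta)\cap A$ whenever this set is nonempty (securing acceptance, utility $1$), and otherwise sends a signal in $N(\theta)\cap B$. Consequently $(\varphi,\psi)$ is a sequential equilibrium exactly when (i) every state with an $A$-neighbor routes all its mass into $A$; (ii) each $\sigma\in A$ carries at least as much acceptable as rejectable mass (so accepting is a best response, posterior $\ge 1/2$); and (iii) each $\sigma\in B$ carries at least as much rejectable as acceptable mass (so rejecting is a best response), with off-path signals assigned beliefs concentrated on rejectable states.

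The core subroutine is a maximum-flow feasibility test for condition (ii) given a candidate $A$. Let $R_A=\{\theta\in\Theta_R: N(\theta)\cap A\neq\emptyset\}$ be the rejectable mass the sender is forced to push onto $A$, and $S_A=\{\theta\in\Theta_A:N(\theta)\cap A\neq\emptyset\}$ the acceptable mass available to ``cover'' it. I would build the network with source arcs $s\to\theta$ of capacity $q_\theta$ for $\theta\in R_A$, infinite-capacity arcs $\theta\to\sigma$ and $\sigma\to\theta'$ for $\sigma\in A$ along the bipartite adjacencies, and sink arcs $\theta'\to t$ of capacity $q_{\theta'}$ for $\theta'\in S_A$. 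By flow conservation at each $\sigma$, a flow routes equal acceptable and rejectable mass through $\sigma$; hence if the maximum flow saturates all of $\sum_{\theta\in R_A}q_\theta$, the induced routing (padded with the remaining acceptable mass, which only helps) gives $\text{acc}\ge\text{rej}$ on every $\sigma\in A$, so (ii) holds.

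When the maximum flow falls short, the min-cut exposes a nonempty set of signals in $A$ whose rejectable demand cannot be covered by acceptable mass; these are precisely the signals on which rejectable mass strictly dominates, so moving them into $B$ both repairs (ii) and is consistent with (iii). I would therefore iterate: solve the max-flow for the current $A$, and if it is infeasible move the offending over-subscribed signals from $A$ into $B$ and recompute. Since $A$ strictly shrinks each round, the process halts after at most $m$ max-flow computations, and the empty accept set is trivially feasible, so termination always yields a valid $A$. Finally I would read off $\varphi$ from the last flow, discard any accept signal carrying no mass, set $\psi=\mathbbm{1}[\,\cdot\in A\,]$, take on-path beliefs from Bayes' rule and off-path beliefs concentrated on $\Theta_R$, and verify (i)--(iii); by the discussion above this certifies a sequential equilibrium, whose existence is in any case guaranteed by~\cite{GlazerR06}.

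The main obstacle is the bookkeeping that couples the two sides of the partition across rounds. I must show that the signals removed for being rejectable-heavy genuinely satisfy the reject-side inequality (iii) under the final, globally consistent routing, and that rerouting after a removal---which can strip an acceptable state of all its $A$-neighbors and dump its mass onto reject signals---never breaks rejectable-dominance on $B$. Making this precise amounts to arguing that the over-subscription slack identified by each min-cut is large enough to absorb the rerouted acceptable mass, and that the monotone shrinking of $A$ furnishes a clean potential so the iteration neither cycles nor undoes earlier progress. The remaining subtlety---assigning off-path beliefs so that the pair meets the consistency requirement of sequential (rather than merely Nash) equilibrium---is handled by the standard fully-mixed perturbation argument, placing vanishing probability on every edge.
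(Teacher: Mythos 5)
Your plan is a legitimate dual of the paper's argument: the paper builds the accept set bottom-up, running a max-flow from \emph{acceptable} states (sources) through signals to \emph{rejectable} states (sinks) and growing accept regions from unsaturated source edges via residual reachability, whereas you shrink a candidate accept set $A$ top-down by testing coverage of rejectable demand and evicting over-subscribed signals. The per-round feasibility test is sound. But the proof as written has a genuine gap, and it sits exactly where you flag it: nothing in the proposal establishes that condition (iii) holds at termination, i.e.\ that the stranded states (those with $N(\theta)\subseteq B$) admit a \emph{single} routing under which every reject signal carries at least as much rejectable as acceptable mass. The danger is concrete: a signal $\sigma$ may be evicted because of rejectable demand from states that, at termination, still have neighbors in the final $A$ and therefore route their entire mass \emph{into} $A$ (the sender strictly prefers accept signals), while an acceptable state whose only neighbor is $\sigma$ stays stranded at $\sigma$. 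Ruling this out requires the residual-graph facts that the paper proves for its mirrored construction: (a) any rejectable state with positive flow into an evicted signal is itself residual-reachable, hence \emph{all} of its surviving accept-neighbors are evicted in the same round, so its mass remains available as cover in $B$ forever after; and (b) any acceptable state adjacent to an evicted signal has a saturated sink edge in that round's flow, with all of its cover arriving through evicted signals. Only with (a) and (b) does flow conservation at each evicted signal yield rejectable-dominance under a routing that persists to termination. Your proposal names this as ``the main obstacle'' but offers no argument for it, so the core of the theorem is left unproven; moreover ``which signals the min-cut exposes'' must be pinned down as \emph{all signals reachable from unsatisfied rejectable states in the residual graph} --- evicting a smaller set breaks (a).

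Separately, the fallback you lean on for termination --- ``the empty accept set is trivially feasible'' --- is false, so correctness cannot be inherited from it. Take one acceptable state $\theta_a$ with $q_{\theta_a}=0.9$ and one rejectable state $\theta_r$ with $q_{\theta_r}=0.1$, both with the single allowed signal $\sigma_1$. The sender is forced to send $\sigma_1$ from both states, the posterior at $\sigma_1$ is $0.9$ acceptable, and the receiver's unique best response is to accept; rejecting everything is not an equilibrium, so $A=\emptyset$ is not a valid terminal configuration in general. (Your iteration would in fact never reach $A=\emptyset$ here, but that protection is precisely the unproven bookkeeping above, not the emptiness claim.) In short: the algorithmic skeleton can be made to work and is an interesting mirror image of the paper's proof, but as submitted the argument is incomplete at its decisive step and rests on one incorrect assertion.
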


The proof of the theorem is deferred to Appendix~\ref{app: BNE Computation}.

How desirable is an equilibrium for the sender and the receiver? By how much can each player benefit when he or she enjoys commitment power? Towards this end, we bound the ratios of the optimal utility achievable with commitment power over the utilities in the worst and best equilibrium. Intuitively, commitment power might be interpreted as a form of control over the game, so we use the term \emph{price of anarchy} and \emph{price of stability} to refer to the ratios, respectively. 

More formally, for the price of anarchy we bound the ratio of the optimal utility achievable with commitment over the \emph{worst} utility in any Bayes-Nash equilibrium. For the price of stability we bound the ratio of the optimal utility achievable with commitment over the \emph{best} utility in any Bayes-Nash equilibrium.

For the receiver, the optimal scheme with commitment leads to an equilibrium~\cite{GlazerR06}, so the price of stability is 1. The price of anarchy is 2 (c.f.\ Proposition~\ref{prop: trivial 2 approximation} below). For the sender, both prices of anarchy and stability are easily shown to be unbounded. 

\begin{proposition}\label{prop:priceof}
  The price of anarchy for the receiver is 2 and this is tight. The prices of anarchy and stability for the sender are unbounded.
\end{proposition}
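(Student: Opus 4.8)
The plan is to prove Proposition~\ref{prop:priceof} in three parts, corresponding to the three claims: the price of anarchy for the receiver is at most $2$, this bound is tight, and both prices for the sender are unbounded.

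\medskip

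\noindent\textbf{Receiver, upper bound of $2$.} First I would lower-bound the receiver's utility in \emph{any} sequential equilibrium. The key observation is that the receiver can always guarantee utility at least $\max\{q_A, q_R\} \ge 1/2$ by a simple non-adaptive strategy: committing to always accept yields utility $q_A$ (correct exactly on the acceptable states), and always rejecting yields $q_R$. Since any sequential-equilibrium response of the receiver is a best response to the sender's signaling scheme, the receiver's equilibrium utility must be at least the value of her best constant action, hence at least $\max\{q_A,q_R\} \ge 1/2$. On the other hand, the optimal receiver utility with commitment power is trivially at most $1$ (she gets utility $1$ only when she is correct in every state, and utility is bounded by the total probability mass $1$). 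Therefore the ratio of optimal-with-commitment to any-equilibrium utility is at most $1/(1/2) = 2$. I would phrase this so that it simultaneously establishes the ``trivial $2$-approximation'' alluded to in Proposition~\ref{prop: trivial 2 approximation}: the algorithm that simply compares ``always accept'' and ``always reject'' and picks the better one is a $2$-approximation for constrained delegation.

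\medskip

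\noindent\textbf{Receiver, tightness.} To show the factor $2$ cannot be improved, I would construct a small instance together with a specific sequential equilibrium in which the receiver earns (essentially) $1/2$ while the optimal delegation scheme earns (essentially) $1$. A natural candidate uses a global-signal type gadget: make acceptable and rejectable states share a common signal so that, in a \emph{bad} equilibrium, the receiver is forced into an uninformative posterior and her best response achieves only $\max\{q_A,q_R\}$, while a cleverly committed decision scheme could separate the states and achieve near-optimal utility. The main obstacle here is verifying that the exhibited pair $(\varphi,\psi)$ is genuinely a sequential equilibrium---i.e., checking both the receiver's best-response condition at every signal (including off-path beliefs) and the sender's best-response condition given $\psi$---rather than merely a Nash equilibrium of the normal form. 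I expect this equilibrium-verification to be the most delicate step, and I would lean on the structural characterization behind Theorem~\ref{thm:equilibrium} to argue that the constructed schemes satisfy the sequential-rationality requirements.

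\medskip

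\noindent\textbf{Sender, unboundedness.} Finally, for the sender I would exhibit a family of instances where the sender's optimal utility with commitment power is a positive constant (say close to $1$), but in \emph{every} sequential equilibrium the sender's utility tends to $0$. The conceptual driver is the asymmetry noted in the paper: delegation enjoys ``credibility'' but persuasion does not, so without commitment the sender cannot bind the receiver to an accepting response and the receiver's sequential rationality forces rejection on the rejectable states. A clean construction takes a single acceptable state and many rejectable states all sharing the acceptable state's signals, so that with commitment the sender can pool and induce acceptance with constant probability, whereas any equilibrium response of the receiver accepts only with vanishing probability as the number of rejectable states grows. Since this makes the ratio (optimal sender utility)/(equilibrium sender utility) grow without bound, and the same instance works for both the best and the worst equilibrium, it establishes unboundedness of both prices. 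As these are appendix-level constructions, I would defer the full numeric details and equilibrium checks to Appendix~\ref{app:priceof}, presenting here only the instances and the intuition for why they behave as claimed.
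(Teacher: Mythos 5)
Your receiver upper bound is complete and is essentially the paper's own argument: sequential rationality at every signal means the receiver's chosen action is correct with conditional probability at least $1/2$ (equivalently, her equilibrium payoff weakly dominates that of her best constant strategy, $\max\{q_A,q_R\}\ge 1/2$), while commitment yields at most $1$. The genuine gap is in the other two parts: you name the right kind of gadget in each case but never exhibit an instance or verify an equilibrium, and those small constructions plus their (short) verifications \emph{are} the content of the paper's proof. For tightness the paper takes one acceptable state $\theta_a$ and one rejectable state $\theta_r$ with $q_{\theta_a}=q_{\theta_r}=1/2$, a signal $\sigma_1$ with $N(\sigma_1)=\{\theta_a\}$, and a shared signal $\sigma_2$. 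The bad equilibrium has the sender pool everything on $\sigma_2$ and the receiver accept; verification is three lines, not the delicate step you anticipate: the posterior at $\sigma_2$ is $(1/2,1/2)$, so accepting is a best response; $\sigma_1$ is off path, but $N(\sigma_1)=\{\theta_a\}$ forces the belief there to sit on $\theta_a$, so accepting $\sigma_1$ is sequentially rational; and since both signals are then accepted, the sender is indifferent and pooling is a best response. (The pooling must be total: with a $50/50$ split from $\theta_a$, the posterior at $\sigma_2$ favors $\theta_r$ and rejecting becomes the strict best response.) Optimal delegation, $\psi(\sigma_1)=1$, $\psi(\sigma_2)=0$, earns $1$, giving ratio exactly $2$. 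No recourse to the machinery of Theorem~\ref{thm:equilibrium} is needed, and your phrase that the receiver is ``forced into an uninformative posterior'' mis-describes the situation: the separating, utility-$1$ equilibrium coexists (that is exactly why the price of stability is $1$); the pooling equilibrium is simply a second, worse equilibrium.

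For the sender, your plan is directionally right (it is the credibility asymmetry), but the mechanism you describe is off in a way that matters for writing the proof. The paper uses a \emph{single} instance: $q_{\theta_a}=1/4$, $q_{\theta_r}=3/4$, two signals, complete bipartite $H$. With commitment the sender earns $1/2$ by pooling mass $1/4$ of $\theta_r$ with $\theta_a$ on one signal. Without commitment, \emph{every} sequential equilibrium gives the sender exactly $0$, not merely a vanishing amount: if some signal were accepted with positive probability, the sender's best response would route all rejectable mass to the most-accepted signals, and since $q_R>q_A$ at least one positive-mass accepted signal would then carry strictly more rejectable than acceptable mass, contradicting the receiver's sequential rationality at that signal. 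So the ratio is $(1/2)/0$ on one fixed instance; no family and no limit in the number of rejectable states is needed --- the driver is $q_A<q_R$ together with the sender's inability to refrain from chasing accepted signals, not the count of rejectable states. Your family-based variant would also establish unboundedness, but only after carrying out exactly this equilibrium argument, which is the step your proposal defers.
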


\begin{proof}
  For the price of anarchy for the receiver, consider any equilibrium. For every signal $\sigma \in \Sigma$, the best response for the receiver is to choose the accept/reject decision that is correct with larger conditional probability. Hence, for every signal the receiver makes the right decision with probability at least 0.5. Clearly, in the optimum he can be correct with probability at most 1.
  
  For tightness, consider one acceptable state $\theta_a$ and one rejectable state $\theta_r$, both with $q_{\theta_a} = q_{\theta_r} = 0.5$. There are two signals $\sigma_1, \sigma_2$ and three edges $(\theta_a,\sigma_1)$, $(\theta_a,\sigma_2)$ and $(\theta_r,\sigma_2)$. In the optimal scheme, the receiver sets $\psi^*(\sigma_1) = 1$ and $\psi^*(\sigma_2) = 0$ which leads to a utility of 1. In the worst equilibrium, the receiver sets $\psi^*(\sigma_2) = 1$ and the sender sets $\varphi((\theta_a,\sigma_1)) = \varphi((\theta_a,\sigma_2)) = 0.5$. The decision for $\sigma_1$ does not matter. In this case, the receiver obtains a utility of 0.5. 

  For the prices of anarchy and stability for the sender, consider one acceptable state $\theta_a$ and one rejectable state $\theta_r$, with $q_{\theta_a} = 0.25$ and $q_{\theta_r} = 0.75$. There are two signals $\sigma_1, \sigma_2$. $H$ is the complete bipartite graph. An optimal scheme for a sender with commitment turns $\sigma_1$ into an accept signal, i.e., $\varphi((\theta_a,\sigma_1)) = \varphi((\theta_r,\sigma_1)) = 0.25$ and $\varphi((\theta_r,\sigma_2)) = 0.5$. This yields a utility of 0.5 for the sender.
 
  Consider any equilibrium. A positive acceptance probability $\psi(\sigma) > 0$ requires that for signal $\sigma$ the conditional probability for $\theta_a$ is at least as high as for $\theta_r$, i.e., $\varphi((\theta_a,\sigma)) \ge \varphi((\theta_r,\sigma))$. Since $q_{\theta_a} < q_{\theta_r}$ this can happen for at most one signal. Suppose w.l.o.g.\ that this signal is $\sigma_1$, i.e., $\psi(\sigma_1) > 0$ and $\psi(\sigma_2) = 0$. Then $\varphi((\theta_r,\sigma_1)) \le 0.25$ and, thus, $\varphi((\theta_r,\sigma_2)) > 0$. Not that this implies a contradiction to the sender playing a best response against $\psi$ -- given $\psi$, it would be a better to set $\varphi((\theta_r,\sigma_1)) = 0.75$ and signal $\sigma_1$ always. This shows that in every equilibrium we have $\psi(\sigma_1) = \psi(\sigma_2) = 0$. Hence, the receiver always rejects and the sender has utility 0. Both prices in this example would be 0.5 divided by 0, i.e., unbounded. 
\end{proof}

\section{Constrained Delegation}\label{sec: delegation}

In constrained delegation, the game starts with the receiver committing to a decision scheme $\psi : \Sigma \to [0,1]$, where $\psi(\sigma)$ is the probability to choose action A if the sender reports signal $\sigma$. The first insight is due to~\citet[Proposition 1]{GlazerR06}; for completeness we include a proof in Appendix~\ref{app: proof of GR 06}.

\begin{lemma}[\citet{GlazerR06}]\label{lem: GR 06 deterministic psi}
   In constrained delegation, there is an optimal decision scheme $\psi^*$ that is deterministic, i.e., $\psi^*(\sigma) \in \{ 0 ,1 \}$ for all $\sigma \in \Sigma$.
\end{lemma}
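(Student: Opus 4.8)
The plan is to show that any optimal randomized decision scheme $\psi$ can be converted into a deterministic one without decreasing the receiver's utility. Recall the structure: the receiver commits to $\psi : \Sigma \to [0,1]$, then the sender, knowing the realized state $\theta$, best-responds by picking a signal $\sigma \in N(\theta)$ to maximize the acceptance probability $\psi(\sigma)$. Thus for a fixed $\psi$, the sender in state $\theta$ achieves acceptance probability $\max_{\sigma \in N(\theta)} \psi(\sigma)$ (breaking ties arbitrarily, say in the sender's favor). The receiver's expected utility is then
\[
  U_r(\psi) = \sum_{\theta \in \Theta_A} q_\theta \cdot p_\theta + \sum_{\theta \in \Theta_R} q_\theta \cdot (1 - p_\theta),
\]
where $p_\theta = \max_{\sigma \in N(\theta)} \psi(\sigma)$ is the acceptance probability experienced in state $\theta$. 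The key observation is that $U_r$ depends on $\psi$ only through these maxima.

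\textbf{Rounding argument.} My first step would be to argue via a threshold/rounding scheme. Consider the finitely many distinct values taken by $\psi$ on $\Sigma$, and for a threshold $t \in [0,1]$ define the deterministic scheme $\psi_t(\sigma) = \mathbbm{1}[\psi(\sigma) \ge t]$. I would draw $t$ uniformly at random from $[0,1]$ and compute the \emph{expected} receiver utility $\E_t[U_r(\psi_t)]$. The crucial identity is that for each state $\theta$, the acceptance probability under the random threshold equals $\Pr_t[\max_{\sigma \in N(\theta)} \psi_t(\sigma) = 1] = \Pr_t[t \le \max_{\sigma \in N(\theta)} \psi(\sigma)] = p_\theta$, since $\max_\sigma \psi_t(\sigma) = 1$ exactly when $t$ falls below the largest $\psi(\sigma)$ over $N(\theta)$. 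Hence the per-state acceptance probability is preserved \emph{in expectation}, giving $\E_t[U_r(\psi_t)] = U_r(\psi)$. By averaging, there exists a threshold $t^*$ with $U_r(\psi_{t^*}) \ge U_r(\psi)$, and $\psi_{t^*}$ is deterministic. Applying this to an optimal $\psi$ yields an optimal deterministic scheme.

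\textbf{Main obstacle.} The subtle point I would be most careful about is the sender's tie-breaking and, more importantly, the requirement that the sender's best response be computed \emph{consistently} before and after rounding. The identity $\max_{\sigma \in N(\theta)} \psi_t(\sigma) = \mathbbm{1}[t \le p_\theta]$ relies only on the value $p_\theta = \max_{\sigma \in N(\theta)}\psi(\sigma)$ and not on which signal attains it, so the sender's acceptance probability in each state transforms cleanly; the rejectable states contribute $1 - p_\theta$, which is likewise handled by the same threshold event since $1 - \Pr_t[t \le p_\theta] = 1 - p_\theta$. Thus linearity of expectation goes through for both $\Theta_A$ and $\Theta_R$ simultaneously, and no conflict arises. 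An alternative, equally clean route avoiding randomization is a direct exchange argument: since $U_r$ is a linear function of the vector $(\psi(\sigma))_{\sigma}$ once we fix, for each state, which signal attains its maximum — i.e., on each cell of the arrangement induced by the ordering of the $\psi(\sigma)$ values the utility is affine — one can push $\psi$ to a vertex of the polytope $[0,1]^{\Sigma}$ intersected with that cell without decreasing utility, and vertices are $0/1$-valued. I would present the threshold-rounding version as the cleaner argument, and I expect verifying the per-state identity for both acceptable and rejectable states to be the only genuinely delicate step.
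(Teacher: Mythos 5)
Your proof is correct, and it takes a genuinely different route from the paper's. The paper (Appendix B.1) argues by an iterative exchange: it groups signals by their acceptance probabilities, takes the group $\Sigma_1$ attaining the maximum, and either raises all of its values to $1$ (when accepting is conditionally at least as good as rejecting given a signal in $\Sigma_1$) or lowers them to the second-largest value $p^{2nd}$; while the common value stays strictly above $p^{2nd}$, the sender's assignment of states to signals is unchanged, so the receiver's utility is linear in that value and pushing to an endpoint cannot decrease it, and when the top two groups merge the paper invokes receiver-favorable tie-breaking to keep the step utility-nondecreasing. Iterating collapses all values into $\{0,1\}$. Your random-threshold rounding replaces this induction by a one-shot averaging argument, and its key identity is sound: the receiver's utility depends on $\psi$ only through the per-state maxima $p_\theta = \max_{\sigma \in N(\theta)}\psi(\sigma)$, thresholding commutes with taking maxima, and $\mathbb{E}_t\left[\mathbbm{1}[p_\theta \ge t]\right] = p_\theta$ for $t$ uniform on $[0,1]$, so the expected utility of $\psi_t$ equals that of $\psi$ and some $t^*$ does at least as well. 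What your route buys is brevity and complete immunity to tie-breaking: in this binary-action, $\{0,1\}$-utility model the receiver's payoff in state $\theta$ is $p_\theta$ or $1-p_\theta$ no matter which maximizing signal the sender picks, so no assumption on ties is needed, whereas the paper's proof must reason about the sender re-routing states when probability levels merge. What the paper's proof buys is a constructive, derandomized perturbation (essentially your ``exchange argument on cells of the arrangement'' alternative made concrete), which makes explicit that only the top probability level ever needs to move; both arguments ultimately rest on the same fact that the receiver's utility is affine in the acceptance probabilities once the sender's best response is held fixed.
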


Given a deterministic decision scheme $\psi$, the sender's problem is trivial: after learning $\theta$, report an arbitrary signal $\sigma \in N(\theta)$ such that $\psi(\sigma) = 1$ if one exists. Otherwise, report an arbitrary signal $\sigma \in N(\theta)$. In the following, we focus on the computational complexity of the receiver's problem: How hard is it to compute the optimal $\psi$? What about a good approximation algorithm?

This problem turns out to be much easier than the sender's problem in constrained persuasion studied below. It readily admits a trivial $2$-approximation algorithm. Let $\psi_A$ be the scheme that accepts all signals, i.e., $\psi_A(\sigma) = 1$ for all $\sigma$, and $\psi_R$ the scheme that rejects all signals. The better of $\psi_A$ and $\psi_R$ results in utility $\max\{ q_A , q_R \}$ for the receiver, which is, of course, at least $1/2$. Trivially, the receiver can obtain at most a utility of 1.

\begin{proposition}\label{prop: trivial 2 approximation}
For the constrained delegation problem, the better of $\psi_A$ and $\psi_R$ is a $2$-approximation to the optimal decision scheme $\psi^*$.
\end{proposition}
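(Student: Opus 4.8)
The plan is to show that the receiver's optimal utility $\OPT$ is at most $1$, and that the better of the two trivial schemes $\psi_A$ and $\psi_R$ achieves utility at least $\OPT/2$; together these give a $2$-approximation. The upper bound $\OPT \le 1$ is immediate, since the receiver's utility is an expectation of payoffs in $\{0,1\}$ and hence can never exceed $1$. So the real content is bounding the utility of the better of the two trivial schemes from below by $\OPT/2$, and in fact it suffices to bound it below by $1/2$, which dominates $\OPT/2$.

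First I would compute exactly the receiver's utility under each of the two trivial deterministic schemes. Under $\psi_A$, which accepts on every signal, the receiver always takes action $A$ regardless of what the sender reports, so the receiver earns utility $1$ precisely on the acceptable states and $0$ on the rejectable states; thus its expected utility is $\sum_{\theta \in \Theta_A} q_\theta = q_A$. Symmetrically, under $\psi_R$ the receiver always rejects, earning utility exactly on the rejectable states, so its expected utility is $q_R$. Note that here the sender's behavior is irrelevant: since the decision scheme is constant across signals, the sender cannot influence the action, so these utilities hold regardless of the sender's best response.

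Next I would combine these two computations. Since $\Theta_A$ and $\Theta_R$ partition $\Theta$ and $\calD$ is a probability distribution over $\Theta$, we have $q_A + q_R = \sum_{\theta \in \Theta} q_\theta = 1$. Therefore $\max\{q_A, q_R\} \ge \tfrac{1}{2}(q_A + q_R) = \tfrac12$, so the better of $\psi_A$ and $\psi_R$ yields receiver utility $\max\{q_A, q_R\} \ge \tfrac12$. Combining with $\OPT \le 1$, the better of the two schemes has utility at least $\tfrac12 \ge \OPT/2$, which is the claimed $2$-approximation.

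I do not expect any genuine obstacle here; the statement is essentially a one-line averaging argument, and indeed the excerpt already sketches it in the preceding paragraph. The only points requiring minor care are to justify that the sender's response does not matter for constant schemes (so the utilities really are $q_A$ and $q_R$ and not merely lower bounds), and to state the comparison to $\OPT$ rather than to the absolute value $1$, so that the conclusion is literally an approximation guarantee relative to $\psi^*$ rather than just a utility bound.
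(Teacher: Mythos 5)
Your proof is correct and follows exactly the paper's own argument: the paper also observes that the better of $\psi_A$ and $\psi_R$ yields utility $\max\{q_A, q_R\} \ge 1/2$ while the optimal receiver utility is trivially at most $1$. Your added remarks (that the sender's response is irrelevant under a constant scheme, and that $q_A + q_R = 1$) simply make explicit steps the paper leaves implicit.
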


In Section~\ref{subsec: hardness of delegation} we show that the factor 2 is essentially optimal in the worst case, unless \classP\ = \classNP.
In Section~\ref{subsec: approximations for delegation} we present our results on approximation algorithms. The results on special cases with optimal schemes are discussed in Section~\ref{subsec: efficient cases in delegation}.

\subsection{Hardness}\label{subsec: hardness of delegation}

\citet[Theorem 7]{Sher14} shows \classNP-hardness of constrained delegation, even in the special case with degree-1 rejects. His proof easily extends to show \classAPX-hardness, even for degree-1 rejects and degree-2 states; we provide the arguments in Appendix~\ref{app:APX} for completeness. Our main result in this section is a stronger hardness result that matches the guarantee of the trivial algorithm in Proposition~\ref{prop: trivial 2 approximation}.

\begin{theorem} \label{thm:2-inapprox-dsv}
For any constant $\eps \in (0, 1)$, it is \classNP-hard to approximate constrained delegation within a factor of $(2-\eps)$.
\end{theorem}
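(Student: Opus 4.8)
The plan is to prove a $(2-\eps)$ hardness of approximation for constrained delegation by reducing from a problem whose gap version is \classNP-hard, and whose structure naturally encodes the tension between accepting and rejecting signals. Since the trivial algorithm achieves factor $2$ by taking the better of ``accept everything'' ($q_A$) and ``reject everything'' ($q_R$), the hard instances must be ones where neither blanket strategy is good, i.e.\ where $q_A \approx q_R \approx 1/2$, yet distinguishing the profitable signals from the harmful ones is computationally hard. Concretely, I would aim to build instances where the optimum decision scheme achieves receiver utility close to $1$ (by correctly accepting almost all acceptable states while the sender cannot exploit the accepting signals to push through rejectable states), whereas any efficiently computable scheme can do no better than roughly $1/2 + o(1)$.

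The natural source of such a gap is a label-cover / $2$-prover style problem, or more simply a gap version of a constraint satisfaction problem such as \textsc{Max-E3-Lin} or \textsc{Label Cover}, where it is \classNP-hard to distinguish nearly-satisfiable instances from instances where only a tiny fraction of constraints can be satisfied. First I would set up the bipartite evidence graph $H=(\Theta\cup\Sigma,E)$ so that signals $\sigma$ correspond to potential ``assignments'' and states $\theta$ to the objects being tested. The key modeling step is that a deterministic scheme $\psi$ (optimal by Lemma~\ref{lem: GR 06 deterministic psi}) partitions $\Sigma$ into an accepting set $S=\psi^{-1}(1)$ and a rejecting set, and the sender, knowing $\theta$, will route $\theta$ through any accepted signal in $N(\theta)$ if one exists. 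Thus the receiver earns for an acceptable $\theta$ exactly when $S\cap N(\theta)\neq\emptyset$, and for a rejectable $\theta$ exactly when $S\cap N(\theta)=\emptyset$. The gadget must therefore arrange that ``good'' accepting sets $S$ (corresponding to satisfying assignments) cover essentially all acceptable states while touching essentially no rejectable state, and that no other $S$ can simultaneously cover many accepts and avoid many rejects.

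The central quantitative step is to choose the weights $q_\theta$ so that the acceptable and rejectable mass are balanced and so the soundness and completeness analyses translate into receiver utilities $\approx 1$ versus $\approx 1/2$. In the completeness case, a YES-assignment yields an $S$ with receiver utility $q_A + q_R - (\text{tiny error}) = 1 - o(1)$. In the soundness case, I would argue that for any set $S$, either $S$ fails to cover a constant fraction of the accept mass (losing from the $q_A$ side) or $S$ intersects a constant fraction of the reject mass (losing from the $q_R$ side), so that one cannot beat roughly $\max\{q_A,q_R\}+o(1) \approx 1/2 + o(1)$ --- this is where the combinatorial covering-versus-avoiding tradeoff of the underlying hard problem must be exploited, and where a careful expander-like or list-decoding argument controlling how much accept-coverage forces reject-intersection is needed.

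The hard part will be exactly this soundness argument: ensuring that no clever accepting set can cheaply cover many acceptable states without ``dragging along'' comparably much rejectable mass, even when the set $S$ bears no resemblance to a genuine assignment. Getting the loss on both sides to be $\Omega(1)$ simultaneously (rather than letting an adversary pay a small price on one side to win big on the other) is the crux, and I expect it to require a smoothness or robustness property of the base hardness reduction --- e.g.\ starting from \textsc{Smooth Label Cover} so that inconsistent signal choices provably propagate into rejectable-state coverage. Once the two-sided loss is established, balancing $q_A$ and $q_R$ to push the soundness value to $1/2+o(1)$ while keeping completeness at $1-o(1)$ yields the ratio $(2-\eps)$ for every constant $\eps>0$, completing the proof.
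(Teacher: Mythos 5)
Your high-level architecture matches the paper's: balanced masses $q_A = q_R = 1/2$, a deterministic scheme (via Lemma~\ref{lem: GR 06 deterministic psi}) viewed as an accepting set $S$ of signals, completeness $\approx 1$ versus soundness $\approx 1/2$, and Label Cover as the ultimate hardness source. But there is a genuine gap, located exactly at the step you yourself flag as ``the crux,'' and the quantitative soundness claim you state is not strong enough to give the theorem. You assert that in the NO case every $S$ ``either fails to cover a constant fraction of the accept mass or intersects a constant fraction of the reject mass,'' and conclude soundness $\approx 1/2 + o(1)$. That conclusion does not follow: writing the utility as $\tfrac12(1 + \alpha - \beta)$, where $\alpha$ is the covered fraction of accept mass and $\beta$ the touched fraction of reject mass, a set with $\alpha = 1 - c$ and $\beta = c'$ for small constants $c, c'$ has utility $1 - (c+c')/2$, far above $1/2$. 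What is actually needed is an extreme dichotomy whose constants can be driven arbitrarily close to $0$ as functions of $\eps$: either $\alpha \le \tau$ (the set covers essentially no accept mass) or $\beta \ge 1 - \gamma$ (the set touches essentially \emph{all} reject mass). A standard gap instance of Label Cover or \textsc{Max-E3-Lin} does not hand you this; it gives a constant-versus-constant gap, which translates only into some fixed constant inapproximability factor, not $2-\eps$ for every constant $\eps$.

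The missing idea is the gap amplification the paper performs. It reduces through a colored variant of \BVE: signals are the left vertices, rejectable states the right vertices, and---crucially---the NO-case expansion guarantee is boosted from ``every large colorful $S$ has $|N(S)| > \tfrac{\alpha}{t}|V|$'' (Theorem~\ref{thm:cbvx-intermediate}, obtained from Label Cover via a Cauchy--Schwarz decoding argument) to ``every large colorful $S$ has $|N(S)| > (1-\gamma)|V|$'' (Theorem~\ref{thm:cbvx}) by an OR-product: each rejectable state becomes an $\ell$-tuple of right vertices, adjacent to a signal iff \emph{some} coordinate is. This tuple trick is what makes touching any nontrivial fraction of rejects blow up into touching almost all of them. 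On the accept side the paper does not encode assignments directly; it uses one acceptable state per color class ($N(\theta_i) = U_i$, mass $\tfrac{1}{2k}$), so colorfulness forces small sets to cover few accepts---and this colored gadget is also how the paper sidesteps the fact that plain \BVE\ (Theorem~\ref{thm:bvx}) is only hard under $\classNP \nsubseteq \bigcap_{\delta>0}\DTIME(2^{n^{\delta}})$ rather than under $\classP \ne \classNP$. Your appeal to Smooth Label Cover and ``expander-like or list-decoding'' arguments points at the right neighborhood of tools, but it does not supply this amplification mechanism, and without it the reduction cannot reach $(2-\eps)$ for every constant $\eps$.
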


For simplicity, we sketch below an outline for a reduction that does \emph{not} give the \classNP-hardness, but nonetheless encapsulates the main ideas of the proof. After the outline, we roughly explain the changes needed to achieve the \classNP-hardness; the full proof is deferred to Appendix~\ref{app: lower bound for delegation}. 

We reduce from the \BVE\ problem. In this problem, we are given a bipartite graph $(U, V, E)$ and positive real number $\beta$. The goal is to select (at least) $\beta |U|$ vertices from $U$ such that their neighborhood (in $V$) is as small as possible.~\citet{KhotS16} show the following strong inapproximability result:

\begin{theorem}[\cite{KhotS16}] \label{thm:bvx}
Assuming $\classNP \nsubseteq \bigcap_{\delta > 0} \DTIME(2^{n^{\delta}})$, for any positive constants $\tau, \gamma > 0$, there exists $\beta \in (0, 1)$ such that no polynomial-time algorithm can, given a bipartite graph $(U, V, E)$, distinguish between the following two cases:
\begin{itemize}
\item (YES) There exists $S^* \subseteq U$ of size at least $\beta |U|$ where $|N(S^*)| \leq \gamma |V|$.
\item (NO) For every $S \subseteq U$ of size at least $\tau \beta |U|$, $|N(S)| > (1 - \gamma) |V|$.
\end{itemize}
\end{theorem}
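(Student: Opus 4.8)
The plan is to derive this near-perfect separation as a \emph{quasi}-\classNP-hardness for \BVE, reducing from a \textsc{Label Cover} instance with sub-constant soundness. The PCP theorem composed with the low-error construction of Moshkovitz--Raz (or Dinur--Harsha) yields, for size $n$, perfect completeness and soundness $s$ as small as $2^{-\log^{1-o(1)} n}$, at the price of a quasi-polynomial blow-up; this blow-up is precisely what the hypothesis $\classNP \nsubseteq \bigcap_{\delta>0}\DTIME(2^{n^{\delta}})$ tolerates, and sub-constant $s$ is what lets us drive the gap all the way from $\gamma$ to $1-\gamma$. The conceptual picture is to hide a single non-expanding ``cluster'' inside a bipartite graph that is otherwise a strong \emph{disperser}: distinguishing the two cases amounts to detecting whether the cluster is present, which encodes satisfiability of the \textsc{Label Cover} instance.

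Concretely, I would index the left side $U$ by pairs $(a,\ell)$, where $a$ is a vertex of the \textsc{Label Cover} constraint graph and $\ell$ a candidate label from a label set of size $L$, and take the right side $V$ to be a structured universe (for instance $B\times W$ for a ``window'' set $W$) equipped with a disperser gadget, so that $N(a,\ell)$ is the set of $V$-coordinates ``explained'' by assigning label $\ell$ to $a$. The gadget is designed so that a \emph{globally consistent} labeling collapses the neighborhoods of its selected vertices onto a common sub-universe: given a satisfying assignment $\sigma$, the set $S^* = \{(a,\sigma(a))\}$ has density $\beta\approx 1/L$ in $U$, and because every edge constraint is met, all the sets $N(a,\sigma(a))$ land in one small region, giving $|N(S^*)|\le \gamma|V|$. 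This establishes completeness and fixes $\beta$ as a function of $\tau$ and $\gamma$ only, matching the quantifier order of the statement.

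For soundness, suppose the underlying instance is a NO instance of \textsc{Label Cover} yet some $S\subseteq U$ with $|S|\ge \tau\beta|U|$ satisfies $|N(S)|\le (1-\gamma)|V|$, i.e.\ $S$ \emph{avoids} a $\gamma$-fraction of $V$. The disperser property guarantees that an unstructured set of this size cannot avoid so large a portion of $V$; hence $S$ must be highly \emph{structured}, concentrating its mass on a short list of labels per vertex $a$ whose neighborhoods are strongly correlated. I would then decode these lists into a randomized labeling and argue, using the smoothness of the \textsc{Label Cover} and the alignment forced by avoidance, that it satisfies a fraction of constraints far exceeding $s$ --- contradicting sub-constant soundness. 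Combined with completeness, a polynomial-time distinguisher would decide \textsc{Label Cover} in polynomial time, placing the underlying \classNP\ problem in $\bigcap_{\delta>0}\DTIME(2^{n^{\delta}})$.

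The main obstacle is this soundness decoding, and in particular making the disperser and the consistency encoding coexist. The gadget must be pseudorandom enough that in the NO case \emph{every} set of density $\tau\beta$ expands past $(1-\gamma)|V|$ unless it is genuinely label-consistent, yet structured enough that in the YES case the planted cluster truly collapses to density $\gamma$; quantitatively, one needs the per-vertex list lengths extracted from a non-expanding $S$ to stay short enough that smoothness makes projected labels agree with non-negligible probability. Balancing these requirements --- choosing the disperser density, the window size $|W|$, the soundness $s$, and the label-set size $L$ so that $\beta$ remains a fixed constant in $(0,1)$ while $\tau$ and $\gamma$ are arbitrary positive constants --- is where the real work lies.
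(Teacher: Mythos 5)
First, a framing point: the paper does not prove this statement at all --- it is imported verbatim from Khot and Saket~\cite{KhotS16}, and in fact the paper deliberately routes \emph{around} it. Because Theorem~\ref{thm:bvx} is only known under the assumption $\classNP \nsubseteq \bigcap_{\delta > 0} \DTIME(2^{n^{\delta}})$, the authors instead prove a colored variant (\CBVE, Theorems~\ref{thm:cbvx-intermediate} and~\ref{thm:cbvx}) under plain \classNP-hardness, reducing from constant-soundness Label Cover (Theorem~\ref{thm:label-cover}) with an elementary Cauchy--Schwarz decoding, and then amplifying with a one-sided version of what \cite{KhotS16} call the OR-product. So the honest comparison is between your proposal and the Khot--Saket strategy, and in spirit you are on their track: a label-extended bipartite graph built from low-soundness Label Cover (where the quasi-polynomial blowup of high-repetition parallel repetition is exactly what the subexponential-time hypothesis absorbs), followed by gap amplification.

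However, as written your proposal has a genuine gap in both of its load-bearing components. (a) The gadget is never constructed: the ``disperser with a plantable collapsing cluster'' is a list of desiderata, and the soundness step --- decoding a non-expanding set $S$ of density $\tau\beta$ into short per-vertex label lists and then into an assignment beating soundness $s$ --- is described only as what would need to hold; you concede yourself that this balancing ``is where the real work lies.'' This is precisely the hard part of \cite{KhotS16}, not a routine verification, and invoking ``smoothness'' does not close it: the Moshkovitz--Raz PCP you cite is not smooth, so you would need Khot's smooth parallel repetition, which must be stated and whose parameters must be threaded through the list-decoding bound. (b) The quantifiers do not cohere. With sub-constant soundness $s = 2^{-\log^{1-o(1)} n}$ the alphabet size $L$ necessarily grows with $n$, so your completeness density $\beta \approx 1/L$ tends to $0$ and is \emph{not} a constant in $(0,1)$ depending only on $\tau,\gamma$, contradicting the theorem's quantifier order. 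The known fix is to decouple the density parameters from the alphabet by a product step --- the OR-product, which amplifies a weak base gap (YES neighborhood $\frac{1}{t}|V|$ versus NO neighborhood $\frac{\alpha}{t}|V|$, cf.\ Theorem~\ref{thm:cbvx-intermediate}) up to $\gamma$ versus $1-\gamma$ via the identity $\frac{|N_G(S)|}{|V|} = 1 - \left(1 - \frac{|N_H(S)|}{|V'|}\right)^{\ell}$ --- whereas your sketch tries to extract the full $\gamma$ vs.\ $1-\gamma$ separation in one shot from the disperser, which is exactly the unsubstantiated step. Finally, note that if your aim is to support Theorem~\ref{thm:2-inapprox-dsv}, the paper shows you can avoid Theorem~\ref{thm:bvx} entirely: colorfulness replaces the size constraint, constant-soundness Label Cover suffices (soundness $\tau^2/\alpha$ via Cauchy--Schwarz), and one obtains genuine \classNP-hardness rather than hardness under a subexponential-time assumption.
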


%We remark that it is entirely possible that the above problem is \classNP-hard (instead of under the assumption $\classNP \nsubseteq \bigcap_{\delta > 0} \DTIME(2^{n^{\delta}})$), which would immediately strengthen Theorem~\ref{thm:2-inapprox-dsv}; however, such a result is not yet known. 

The main idea of our reduction is quite simple. Roughly speaking, given a bipartite graph $(U, V, E)$, we set $\Sigma = U$, $\Theta_R = V$ and the edge set between them is exactly $E$. To get a high utility on $\Theta_R$, we must pick a signal set $T \subseteq \Sigma$ such that $|N(T)|$ is small, and set $\psi(\sigma) = 1$ for all $\sigma \in T$; this does not mean much so far, since we could just pick $T = \emptyset$. This is where the set of acceptable states comes in: we let $\Theta_A$ be equal to $U^{\ell} = \{ (u_1,\dots,u_{\ell}) | u_i \in U \}$ for some appropriate $\ell \in \NN$, and there is an edge between $\theta = (u_1,\dots,u_{\ell})$ and $\sigma = u$ if $u_i = u$ for some $i \in [\ell]$. Intuitively, this forces us to pick $T$ that is not too small as otherwise $\Theta_A$ won't contribute to the total utility. Finally, we need to pick a distribution $\calD$ over $\Theta$ such that $q_A = q_R$, as otherwise the trivial algorithm already gets better than a 2-approximation. %Our proof formalizes this outline; see Appendix~\ref{app: lower bound for delegation} for details.

As stated earlier, the above reduction does not yet give \classNP-hardness, because Theorem~\ref{thm:bvx} relies on a stronger assumption\footnote{We remark that it is entirely possible that Theorem~\ref{thm:bvx} holds under \classNP-hardness (instead of under the assumption $\classNP \nsubseteq \bigcap_{\delta > 0} \DTIME(2^{n^{\delta}})$) but this is not yet known.} that $\classNP \nsubseteq \bigcap_{\delta > 0} \DTIME(2^{n^{\delta}})$. To overcome this, we instead use a ``colored version'' of the problem, where every vertex in $U$ is colored and the subset $S \subseteq U$ must only contain vertices of different colors (i.e., be ``colorful''). It turns out that the above reduction can be easily adapted to work with such a variant as well, by changing the acceptable states $\Theta_A$ to ``test'' this condition instead of the condition that $|S|$ is small. Furthermore, we show, via a reduction from the Label Cover problem, that this colored version of \BVE\ is \classNP-hard to approximate. Together, these imply Theorem~\ref{thm:2-inapprox-dsv}. Our proof formalizes this outline; see Appendix~\ref{app: lower bound for delegation} for details.
 
\paragraph{Global Signals.} In constrained delegation the existence of global signals, i.e., a set of signals that every state has access to, does not substantially change the receiver's problem (c.f.~\cite[pg.\ 103]{Sher14}). Specifically, if some global signal $\sigma$ is accepted, then $\sigma$ will be sent from every single state of nature, resulting in a trivial solution with receiver utility $q_A$. If all global signals are rejected, the receiver is left to solve the problem on the remaining, possibly arbitrary state-signal graph $H$.

\begin{corollary}
For any constant $\eps \in (0, 1)$, it is \classNP-hard to approximate constrained delegation with global signals within a factor of $(2-\eps)$.
\end{corollary}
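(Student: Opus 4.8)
The plan is to give a gap-preserving reduction from general constrained delegation, for which Theorem~\ref{thm:2-inapprox-dsv} already supplies $(2-\eps)$-hardness, to the special case with global signals. The transformation is the obvious one suggested by the preceding discussion: given an arbitrary instance $I$ of constrained delegation, I would construct an instance $I'$ by adding a single new signal $\sigma_0$ with $N(\sigma_0) = \Theta$, leaving $\Theta$, $\Theta_A$, $\Theta_R$, $\calD$, and all existing edges unchanged. Clearly $I'$ has a global signal and is computable in polynomial time, so it suffices to show that this operation preserves the optimal receiver utility exactly, i.e.\ $\OPT(I') = \OPT(I)$. The whole reduction then rests on this single identity.

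First I would establish $\OPT(I') \geq \OPT(I)$: take an optimal deterministic scheme $\psi^*$ for $I$ (deterministic by Lemma~\ref{lem: GR 06 deterministic psi}) and extend it by setting $\psi^*(\sigma_0) = 0$. Since the global signal is rejected, the sender never profits from sending it, so the sender's best response and the resulting receiver utility are identical to those in $I$, giving the inequality.

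For the reverse inequality $\OPT(I') \leq \OPT(I)$, I would take an optimal deterministic scheme $\psi'$ for $I'$ (again via Lemma~\ref{lem: GR 06 deterministic psi}) and split into two cases according to $\psi'(\sigma_0)$. If $\psi'(\sigma_0) = 0$, then restricting $\psi'$ to the original signal set yields a scheme for $I$ of equal value, because the sender never uses the rejected global signal. If instead $\psi'(\sigma_0) = 1$, then in every state the sender can report $\sigma_0$ and be accepted, so the receiver accepts everywhere and earns exactly $q_A$; but the all-accept scheme $\psi_A$ in $I$ already guarantees receiver utility $q_A$ (every state has a neighbor, so the sender is accepted everywhere there too), whence $\OPT(I) \geq q_A$. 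In either case $\OPT(I) \geq \OPT(I')$.

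Combining the two inequalities gives $\OPT(I') = \OPT(I)$, so the YES/NO utility gap underlying Theorem~\ref{thm:2-inapprox-dsv} is transported verbatim to instances with global signals, and any polynomial-time $(2-\eps)$-approximation for the latter would decide the hard instances, contradicting $\classP \ne \classNP$. I do not expect any genuine obstacle here; the only point needing a little care is the case $\psi'(\sigma_0) = 1$, where one must observe that accepting the global signal collapses the receiver's utility to $q_A$, a value already matched in the original instance by $\psi_A$ (cf.\ Proposition~\ref{prop: trivial 2 approximation}), so nothing is gained by the presence of $\sigma_0$.
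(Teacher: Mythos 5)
Your proposal is correct and follows essentially the same route as the paper: append a global signal $\sigma_0$ to the hard instances of Theorem~\ref{thm:2-inapprox-dsv}, and observe that accepting any global signal collapses the receiver's utility to $q_A$ (already achievable by $\psi_A$ in the original instance), while rejecting all global signals leaves the original problem intact. Your writeup merely formalizes this as the exact identity $\OPT(I') = \OPT(I)$, which is precisely the paper's two-case observation.
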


\subsection{Approximation Algorithms}\label{subsec: approximations for delegation}

By Theorem~\ref{thm:2-inapprox-dsv} there is no hope for a $(2-\epsilon)$-approximation algorithm for the constrained delegation problem. Proposition~\ref{prop: trivial 2 approximation} provides a matching guarantee. 

As a consequence, we examine in which way instance parameters influence the existence of polynomial-time approximation algorithms. In particular, the maximum degree $d$ is a main force that drives the hardness result. For the case of degree at most $d$, we give a $2-\frac{1}{d^2}$ approximation algorithm via LP rounding. When $d = 2$, we improve upon this by giving a 1.1-approximation algorithm via SDP rounding.
%Finally, in the case where the degree is at most $d$ and $|\Theta| \in \Omega(|\Sigma|^d)$ we give a PTAS.

\subsubsection{Better than $2$ via LP Rounding} % (With a Twist)}

For instances with degree-$d$-states we take the better of (1) rounding the natural linear program for constrained delegation and (2) the trivial scheme of Proposition~\ref{prop: trivial 2 approximation}.

\begin{theorem}\label{lem: delegation lp rounding}
For constrained delegation with degree-$d$ states there is a polynomial-time $\left(2-\frac{1}{d^2}\right)$-approximation algorithm.
\end{theorem}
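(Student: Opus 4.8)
The plan is to formulate the receiver's problem as an integer program, relax it to a linear program, round via a randomized threshold, and then combine the rounded scheme with the trivial scheme of Proposition~\ref{prop: trivial 2 approximation}. By Lemma~\ref{lem: GR 06 deterministic psi} it suffices to find a near-optimal \emph{deterministic} scheme, and for such a scheme the receiver's utility has a clean combinatorial form. Writing $y_\sigma \in \{0,1\}$ for the indicator that signal $\sigma$ is accepted (i.e.\ $\psi(\sigma)=1$), an acceptable state $\theta \in \Theta_A$ is decided correctly exactly when \emph{some} $\sigma \in N(\theta)$ is accepted, while a rejectable state $\theta \in \Theta_R$ is decided correctly exactly when \emph{every} $\sigma \in N(\theta)$ is rejected (the sender reports an accepted signal whenever one is available). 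This yields the natural relaxation
\begin{align*}
\max \quad & \sum_{\theta \in \Theta} q_\theta c_\theta \\
\text{s.t.} \quad & c_\theta \le \sum_{\sigma \in N(\theta)} y_\sigma && \text{for all } \theta \in \Theta_A, \\
& c_\theta \le 1 - y_\sigma && \text{for all } \theta \in \Theta_R,\ \sigma \in N(\theta), \\
& 0 \le c_\theta \le 1, \quad 0 \le y_\sigma \le 1.
\end{align*}
With integral $y$, maximizing over $c_\theta$ forces $c_\theta$ to its correct combinatorial value, so the integer optimum equals the receiver's optimum and the LP optimum $(y^*,c^*)$ satisfies $\mathrm{LP} = L_A + L_R \ge \OPT$, where $L_A := \sum_{\theta \in \Theta_A} q_\theta c^*_\theta$ and $L_R := \sum_{\theta \in \Theta_R} q_\theta c^*_\theta$.

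Next I would round: draw $\tau$ uniformly from $[0,1]$ and accept $\sigma$ iff $y^*_\sigma > \tau$. For $\theta \in \Theta_R$ the state is correct iff $\tau \ge \max_{\sigma \in N(\theta)} y^*_\sigma$; since each constraint gives $y^*_\sigma \le 1 - c^*_\theta$, this probability is at least $c^*_\theta$. For $\theta \in \Theta_A$ the state is correct iff $\tau < \max_{\sigma \in N(\theta)} y^*_\sigma$, and the degree bound $|N(\theta)| \le d$ together with $\sum_{\sigma \in N(\theta)} y^*_\sigma \ge c^*_\theta$ forces $\max_{\sigma} y^*_\sigma \ge c^*_\theta/d$, so this probability is at least $c^*_\theta/d$. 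Hence the expected receiver utility is at least $L_A/d + L_R$; since the utility is piecewise constant in $\tau$ with breakpoints only at the values $y^*_\sigma$, derandomizing by taking the best threshold among the $O(m)$ distinct such values gives a deterministic scheme $\psi_1$ of value at least $L_A/d + L_R$. The trivial scheme gives $\psi_2$ of value $\max\{q_A, q_R\} \ge q_A \ge L_A$.

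Finally I would combine the two. Because $\max\{\cdot,\cdot\}$ dominates any convex combination, setting $\lambda = \tfrac{1}{2 - 1/d^2} = \tfrac{d^2}{2d^2-1} \in (0,1]$ shows that the better of $\psi_1,\psi_2$ has value at least
\begin{equation*}
\lambda\left(\tfrac{L_A}{d} + L_R\right) + (1-\lambda)L_A = \tfrac{d^2+d-1}{2d^2-1}\,L_A + \tfrac{d^2}{2d^2-1}\,L_R \ge \tfrac{d^2}{2d^2-1}(L_A + L_R) = \tfrac{\mathrm{LP}}{2-1/d^2} \ge \tfrac{\OPT}{2-1/d^2},
\end{equation*}
where the middle inequality uses $d^2 + d - 1 \ge d^2$, valid since $d \ge 1$. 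This establishes the claimed $(2 - 1/d^2)$-approximation, and each step runs in polynomial time. The main conceptual point — and essentially the only place any ingenuity is required — is recognizing that threshold rounding handles the rejectable (AND-type) states losslessly but loses exactly a factor of $d$ on the acceptable (OR-type) states, and that this is precisely the loss that the all-accept trivial scheme compensates; balancing the two via the above choice of $\lambda$ is what produces the exact factor $2 - 1/d^2$ rather than a cruder bound.
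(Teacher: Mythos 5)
Your proof is correct, and while it follows the same top-level architecture as the paper's --- relax an integer program, round it, and hedge against the rounded solution with the trivial scheme of Proposition~\ref{prop: trivial 2 approximation} --- the rounding and the accounting are genuinely different. The paper writes the rejectable-state constraint in aggregated form, $\sum_{\sigma \in N(\theta)} \psi_\sigma \le |N(\theta)|(1 - c_\theta)$, rounds each signal \emph{independently} with probability $\hat\psi_\sigma$, and bounds rejectable states via $\prod_{\sigma}(1-\hat\psi_\sigma) \ge 1 - \sum_\sigma \hat\psi_\sigma \ge 1 - d + d\hat c_\theta$ (Lemma~\ref{lemma: rounded lp bound}); this bound can be vacuous for small $\hat c_\theta$, so the paper's combination step must also invoke the $q_R$ part of the trivial scheme, through the inequality $(2d - \tfrac{1}{d})\max\{q_A, q_R, \mathbb{E}[\cdot]\} \ge (d-\tfrac{1}{d}) q_A + (d-1) q_R + \mathbb{E}[\cdot]$, precisely to cancel the $q_R(1-d)$ deficit. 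You instead impose the per-signal constraints $y_\sigma \le 1 - c_\theta$ (a tighter relaxation) and use a single correlated \emph{threshold} rounding, which handles the AND-type rejectable states losslessly: your rounded value is $L_A/d + L_R$ rather than (in your notation) $L_A/d + q_R(1-d) + dL_R$, so your combination needs only $q_A \ge L_A$ and one convex-combination parameter $\lambda$. This buys you a cleaner analysis in which the source of the factor-$d$ loss (the OR-type acceptable states) and its remedy (the all-accept scheme) are cleanly isolated, and it also yields a deterministic algorithm after threshold enumeration, whereas the paper's guarantee is stated in expectation over the rounding. One cosmetic nitpick: your list of candidate thresholds should include $\tau = 0$ (equivalently, the leftmost piece $[0, \min_\sigma y^*_\sigma)$, which accepts every signal with $y^*_\sigma > 0$), since the pieces of the step function are the half-open intervals between consecutive distinct values of $y^*_\sigma$ together with that leftmost interval; this does not affect correctness. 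Both routes land on exactly the same constant $2 - 1/d^2$.
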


\begin{proof}
Consider the following integer program for constrained delegation (c.f.\ \cite{GlazerR06,Sher14}).
\begin{maxi!}{}{\sum_{\theta \in \Theta} c_\theta q_\theta}{}{}
   \addConstraint{\sum_{\sigma \in N(\theta)} \psi_\sigma}{\geq c_\theta, \hspace{0.5cm} \text{for all } \theta \in \Theta_A \label{eq: good states cost}}
   \addConstraint{\sum_{\sigma \in N(\theta)} \psi_\sigma}{\leq | N(\theta) | (1-c_\theta) \hspace{0.5cm} \text{for all } \theta \in \Theta_R \label{eq: bad states cost}}
   \addConstraint{\psi_\sigma \in \{ 0, 1 \}, \text{for all } \sigma \in \Sigma~~ \text{ and } ~~ c_\theta \in \{ 0 , 1 \}, \text{for all } \theta \in \Theta}{}
\end{maxi!}
The variable $\psi_\sigma$ encodes whether the action is accept or reject for signal $\sigma$. The variable $c_\theta$ encodes whether the receiver makes the correct choice when the state of nature is $\theta$. Constraint~\eqref{eq: good states cost} states that, if $\theta \in \Theta_A$, she can't make the correct choice when she rejects all signals available from $\theta$. Constraint~\eqref{eq: bad states cost} states that, if $\theta \in \Theta_R$, making the correct choice means rejecting all signals available from $\theta$; the $|N(\theta)|$ term ensures that the constraint can still be satisfied even when $c_\theta = 0$.

Our algorithm first solves the linear relaxation of this integer program; let $\hat{\psi}_\sigma$ and $\hat{c}_\theta$ be the fractional optimum. We round this solution by (independently) setting $\psi_\sigma = 1$ with probability $\hat{\psi}_\sigma$, and 0 otherwise. We can optimally pick $c_\theta$ given the $\psi_\sigma$'s. The rounded solution is feasible by definition; we show that it is a good approximation to the optimal LP value, i.e., $\sum_{\theta \in \Theta} \hat{c}_\theta q_\theta$.

Let $G = \frac{1}{|\Theta_A|} \sum_{\theta \in \Theta_A} \hat{c}_\theta q_\theta$ and $B = \frac{1}{|\Theta_R|}\sum_{\theta \in \Theta_R} \hat{c}_\theta q_\theta$ be the average contribution to the LP objective from the acceptable and rejectable states, respectively. The LP value is $G |\Theta_A| + B |\Theta_R |$. We start by showing the following lower bound on the expected value of the rounded solution.

\begin{lemma}\label{lemma: rounded lp bound}
$\mathbb{E}[\sum_{\theta \in \Theta} c_\theta q_\theta] \geq \frac{G |\Theta_A|}{d} + q_R (1-d) + d B |\Theta_R|$.
\end{lemma}

\begin{proof}
First, consider a state $\theta \in \Theta_A$. The probability that $c_\theta = 1$ is at least the probability that we rounded one of the $\psi_\sigma$ variables to $1$, for $\sigma \in N(\theta)$, i.e., 
\begin{equation}\label{eq: c theta good}
 \Pr{c_\theta = 1} \geq \max_{\sigma \in N(\theta)} \hat{\psi}_\sigma \geq \frac{\hat{c}_{\theta}}{|N(\theta)|} \geq \frac{\hat{c}_{\theta}}{d}\enspace,
\end{equation}
where we used the fact that $\hat{c}_\theta$ satisfies Constraint~\eqref{eq: good states cost}. For a state $\theta \in \Theta_R$, the probability that $c_\theta = 1$ is exactly the probability that none of its signals were selected, which is $\prod_{\sigma \in N(\theta)} (1-\hat{\psi}_\sigma) \ge 1 - \sum_{\sigma \in N(\theta)} \hat{\psi}_\sigma$. Thus
\begin{equation}\label{eq: c theta bad}
 \Pr{c_\theta = 1} \geq 1 - \sum_{\sigma \in N(\theta)} \hat{\psi}_\sigma \geq  1 - | N(\theta) | (1-\hat{c}_\theta) \geq 1 - d + d \hat{c}_\theta\enspace,
\end{equation}
where we used the fact that $\hat{c}_\theta$ satisfies Constraint~\eqref{eq: bad states cost}.
Adding up~\eqref{eq: c theta good} and~\eqref{eq: c theta bad}, the expected value of our rounded solution is
\begin{align*}
\mathbb{E}\left[\sum_{\theta \in \Theta} c_\theta q_\theta\right] &\geq \sum_{\theta \in \Theta_A} \frac{q_\theta \hat{c}_{\theta}}{d}  + \sum_{\theta \in \Theta_R} q_\theta ( 1 - d + d \hat{c}_\theta )  \geq \frac{G |\Theta_A|}{d} + q_R (1-d) + d B |\Theta_R|. 
\end{align*}
\end{proof}

Our final algorithm, i.e., the better of the trivial scheme and the rounded LP solution, has expected value at least $\max\{ q_A, q_R, \mathbb{E}[\sum_{\theta \in \Theta} c_\theta q_\theta] \}$. We have that
\begin{align*}
\left(2d - \frac{1}{d}\right) &\max\left\{ q_A, q_R, \mathbb{E}\left[\sum_{\theta \in \Theta} c_\theta q_\theta\right] \right\} \geq \left(d - \frac{1}{d}\right) q_A + (d-1) q_R +  \mathbb{E}\left[\sum_{\theta \in \Theta} c_\theta q_\theta\right]\\
 &\overset{\text{Lemma}~\ref{lemma: rounded lp bound}}{\geq} \left(d - \frac{1}{d}\right) q_A + (d-1) q_R + \frac{G |\Theta_A|}{d} + q_R (1-d) + d B |\Theta_R| \\
 &\overset{(G |\Theta_A| \leq q_A)}{\geq} d G |\Theta_A|  + d B |\Theta_R|\enspace,
\end{align*}
which is $d$ times the value of the optimum fractional value of the LP. The theorem follows. 
\end{proof}

\subsubsection{Better than $2$ via Semidefinite Programming}\label{subsubsec: sdp}

In this subsection we give a 1.1-approximation algorithm for constrained delegation with degree-2 states, where every state of nature $\theta$ has at most two allowed signals, $\sigma_u$ and $\sigma_v$. The approach stems from an observation that the problem belongs to the class of \emph{constraint satisfaction problems (CSPs)}; we make use of the toolbox for semidefinite program (SDP) rounding in approximating CSPs (e.g.~\cite{GoemansW94,feige1995approximating,lewin2002improved}).

Consider the integer program~\eqref{eq: ip objective} for our problem below. We assume w.l.o.g. that every state has \emph{exactly} two adjacent signals; if there is a state $\theta$ with a single neighbor $\sigma$, we can add a parallel edge $(\theta,\sigma)$ in $H$ and the analysis remains valid. Note that the integer program here is \emph{not} the same as the one used in the previous subsection. An intuitive reason for the change is that the variables $c_{\theta}$ there are redundant: given $\{\psi_{\sigma}\}_{\sigma \in \Sigma}$, the values of $\{c_{\theta}\}_{\theta \in \Theta}$ are already fixed. In particular, each $c_{\theta}$ can be expressed as a degree-$d$ polynomial\footnote{Note that linear functions do not suffice to express $c_{\theta}$. In particular, if we rewrite~\eqref{eq: bad states cost} for $\theta = (\sigma_i, \sigma_j)$ as $c_{\theta} \leq 1 - \frac{\psi_{\sigma_i} + \psi_{\sigma_j}}{2}$, then it is still possible to have $c_\theta = 1/2$ when $\psi_{\sigma_i} = 1, \psi_{\sigma_j} = 0$. } in $\{\psi_{\sigma}\}_{\sigma \in N(\theta)}$, which is exactly how the integer program below is written.
\begin{maxi!}{x \in \{-1,1\}^m}{\frac{1}{4} \sum_{\theta = (\sigma_i,\sigma_j) \in \Theta_A} (3 - x_i - x_j - x_i x_j) q_{\theta} + \frac{1}{4} \sum_{\theta = (\sigma_i,\sigma_j) \in \Theta_R} (1 + x_i + x_j + x_i x_j) q_{\theta}}{}{\label{eq: ip objective}}
\end{maxi!}
In the program above $x_i = -1$ is interpreted as accepting when the signal is $\sigma_i$. One can check that $\frac{1}{4} \left( 3 - x_i - x_j - x_i x_j \right)$ is equal to $1$ iff at least one of $x_i ,x_j$ is $-1$ (and zero otherwise), i.e., a state of nature $\theta \in \Theta_A$ contributes to the objective only when at least one of its allowed signals is accepted. Similarly, $\frac{1}{4} (1 + x_i + x_j + x_i x_j)$ is equal to $1$ if and only if both $x_i$ and $x_j$ are equal to $1$.

We will solve the semidefinite relaxation of this program, and give a rounding algorithm. The SDP is the following, where we replaced $x_i$ by $w_i$, to distinguish these vector variables from the variables of our integer program above.

%\begin{maxi!}{}{\sum_{\theta \in \Theta} c_\theta q_\theta}{}{}
%   \addConstraint{\sum_{\sigma \in N(\theta)} \psi_\sigma}{\geq c_\theta, \hspace{0.5cm} \forall \theta \in \Theta_A \label{eq: good states cost}}
%   \addConstraint{\sum_{\theta \in N(\theta)} \psi_\sigma}{\leq | N(\theta) | (1-c_\theta) \hspace{0.5cm} \forall \theta \in \Theta_R \label{eq: bad states cost}}
%   \addConstraint{\psi_\sigma \in \{ 0, 1 \}, \forall \sigma \in \Sigma~~ \text{ and } ~~ c_\theta \in \{ 0 , 1 \}, \forall \theta \in \Theta}{}
%\end{maxi!}
%
\begin{subequations}
\begin{align}
\max \quad &\frac{1}{4} \sum_{\theta = (\sigma_i,\sigma_j) \in \Theta_A} (3 - w_i \cdot w_0 - w_j \cdot w_0 - w_i \cdot w_j) q_{\theta} \nonumber \\
& \qquad + \frac{1}{4} \sum_{\theta = (\sigma_i,\sigma_j) \in \Theta_R} (1 + w_i \cdot w_0 + w_j \cdot w_0 + w_i \cdot w_j) q_{\theta} \label{eq: sdp objective}\\
\text{s.t.} \quad &w_i \cdot w_i = 1 \hspace{0.5cm} \text{for all } i \in [m]\cup \{ 0 \} \label{eq: dot product is one}\\
& w_i \cdot w_0 + w_j \cdot w_0 + w_i \cdot w_j \geq -1 
\hspace{0.5cm} \text{for all } i,j \in [m]\label{eq: triangle ineq1} \\
& -w_i \cdot w_0 + w_j \cdot w_0 - w_i \cdot w_j \geq -1 \hspace{0.5cm} \text{for all } i,j \in [m]\label{eq: triangle ineq2} \\ 
& -w_i \cdot w_0 - w_j \cdot w_0 + w_i \cdot w_j \geq -1 \hspace{0.5cm} \text{for all } i,j \in [m] \label{eq: triangle ineq3} \\
&w_i \in \mathbb{R}^{m+1} \hspace{0.5cm} \text{for all } i \in [m]\cup \{ 0 \} \notag
\end{align}
\end{subequations}

Constraint~\eqref{eq: dot product is one} is standard. Constraints~\eqref{eq: triangle ineq1}-\eqref{eq: triangle ineq3} encode the triangle inequalities, which are satisfied by every valid solution to the original program; these strengthen the relaxation a bit (see~\cite{feige1995approximating, lewin2002improved}). Let $\sdpopt$ denote the optimal value of this semidefinite program (SDP). We generally cannot find the exact solution to an SDP, but it is possible to find a feasible solution with value at least $\sdpopt - \epsilon$ in time polynomial in $1/\epsilon$ (see~\citet{alizadeh1995interior}). In our analysis we will (as is typically the case) ignore the $\epsilon$ factor as it can be made arbitrarily small given sufficient time.

It is known that the SDP written above provides the optimal approximation achievable in polynomial time for any 2-CSPs~\cite{Raghavendra08,RaghavendraS09a} including our problem, assuming the Unique Games Conjecture (UGC). However, a generic rounding algorithm from this line of work (see e.g.~\cite{RaghavendraS09a}) does not give a concrete approximation ratio. Below, we describe a specific family of rounding algorithms for which we can provide the concrete approximation ratio of 1.1.

\paragraph{Rounding Algorithm.} 
Given solution vectors $\{ w_0, w_1, \dots w_m \}$, $w_i \in \mathbb{R}^{m+1}$, for this SDP we produce a feasible solution $x_i \in \{ -1, 1\}$ (for $i \in [m]$) to the original integer program as follows.
Let $\xi_i = w_0 \cdot w_i$, and $\tilde{w}_i = \frac{w_i - \xi_i w_0}{\sqrt{1 - \xi_i^2}}$ be the part of $w_i$ orthogonal to $w_0$, normalized to a unit vector. Our rounding algorithm mostly follows the rounding procedure of~\citet{lewin2002improved}, which they call $\mathcal{THRESH}^-$. First, pick a $(m + 1)$-dimensional vector\footnote{In other words, the $i$-th dimension $r_i$ is sampled independently from a Gaussian with zero mean and variance one.} $r \sim \calN(0, 1)$ $r \in \mathbb{R}^{m+1}$. Then, set $x_i = -1$ (which corresponds to accepting signal $\sigma_i$) if and only if $\tilde{w}_i \cdot r \leq T(\xi_i)$, where $T(.)$ is a threshold function, and set $x_i = 1$ otherwise. Specifically, $T(x) = \Phi^{-1}( \frac{1-\nu(x)}{2} )$, where $\Phi^{-1}(.)$ is the inverse of the normal distribution function, and $\nu: [-1,1] \rightarrow [-1,1]$ is a function. Later in the analysis --- and this is essentially the point in which various SDP rounding methods diverge from each other, e.g. see~\cite{sjogren2009rigorous} for the different choices for MAX-2-SAT and MAX-2-AND --- we will optimize over a family of $\nu(.)$, exploiting structure in our problem, in order to improve our approximation ratio.

\paragraph{Generic Analysis.} 
We now derive a generic analysis for $\mathcal{THRESH}^-$ algorithms; note that these are similar arguments as in~\cite{lewin2002improved,austrin2007balanced}. However, in the end, we will pick a different function $\nu$ than previous works, which results in better approximation ratios for our problem.

First, notice that $\tilde{w}_i \cdot r$ is a standard $\calN(0,1)$ variable, and therefore by the choice of $T(.)$ we have that $\Pr{x_i = -1} = \frac{1-\nu(\xi_i)}{2}$, which implies that
\begin{equation}\label{eq: nu xi}
\Ex{ x_i } = \nu(\xi_i) \enspace.
\end{equation}
Now, we need to also analyze the quadratic terms. Let $\Gamma_c(\mu_1,\mu_2) = Pr[X_1 \leq t_1 \text{ and } X_2 \leq t_2]$, where $t_i = \Phi^{-1}(\frac{1-\mu_i}{2})$, and $X_1,X_2 \in \calN(0,1)$ with covariance $c$ (in other words, $\Gamma_c$ is the bivariate normal distribution function with covariance $c$, with a transformation on the input).

Let $\rho = w_i w_j$ and $\tilde{\rho} = \tilde{w}_i \tilde{w}_j = \frac{\rho - \xi_i \xi_j}{\sqrt{1-\xi_i^2}\sqrt{1-\xi_j^2}}$. Observe that the products $\tilde{w}_i \cdot r$ and $\tilde{w}_j \cdot r$ are $\calN(0,1)$ random variables with covariance $\tilde{\rho}$. Thus, the probability that $\tilde{w}_i \cdot r \leq T(\xi_i)$ and $\tilde{w}_j \cdot r \leq T(\xi_j)$ (i.e., both $x_i, x_j$ are set to $-1$) is exactly $\Gamma_{\tilde{\rho}}( \nu(\xi_i), \nu(\xi_j) )$. The probability that $x_i = x_j = 1$ is equal to $\Gamma_{\tilde{\rho}}( -\nu(\xi_i), -\nu(\xi_j) )$. Austrin~\cite[Proposition 2.1]{austrin2007balanced} shows that $\Gamma_c(-\mu_1,-\mu_2) = \Gamma_c(\mu_1,\mu_2) + \mu_1/2 + \mu_2/2$. Using this fact we can calculate the probability that $x_i = x_j$, which, in turn, gives that
\begin{equation}\label{eq: nu xi xj}
\Ex{ x_i x_j } = 4 \Gamma_{\tilde{\rho}}( \nu(\xi_i), \nu(\xi_j) ) + \nu(\xi_i) + \nu(\xi_j) - 1\enspace.
\end{equation}

With Equations~\eqref{eq: nu xi} and~\eqref{eq: nu xi xj} in hand we can calculate the expected value of our rounding algorithm (i.e., the expected value of~\eqref{eq: ip objective}) for every choice of $\nu$, and compare it against the value of the SDP in~\eqref{eq: sdp objective}. Specifically, we will aim for a term-by-term approximation. Define the following quantities:
\begin{align*}
 \ell^{OR}_{\nu}( \xi_i,\xi_j,\rho ) &= \frac{3 - \xi_i - \xi_j - \rho}{4 - 2\nu(\xi_i) - 2\nu(\xi_j) - 4 \Gamma_{\tilde{\rho}}( \nu(\xi_i), \nu(\xi_j) )} \\  %\label{eq: OR approximation} \\
  \ell^{AND}_{\nu}( \xi_i,\xi_j,\rho ) &= \frac{1 + \xi_i + \xi_j + \rho}{2\nu(\xi_i) + 2\nu(\xi_j) + 4 \Gamma_{\tilde{\rho}}( \nu(\xi_i), \nu(\xi_j))}\enspace, %\label{eq: AND approximation}
\end{align*}
and let
\begin{align*}
\ell^{OR}(\nu) = \min_{\xi_i,\xi_j,\rho} \ell^{OR}_{\nu}( \xi_i,\xi_j,\rho) && 
\text{ and } && \ell^{AND}(\nu) = \min_{\xi_i,\xi_j,\rho} \ell^{AND}_{\nu}( \xi_i,\xi_j,\rho)\enspace,
\end{align*}
where the minimization is over all choices of $\xi_i, \xi_j, \rho \in [-1, 1]$ that satisfy the triangle inequalities (Constraints~\eqref{eq: triangle ineq1}-\eqref{eq: triangle ineq3}).
It is now straightforward to see that the term-by-term analysis implies that, for any choice of $\nu$, our approximation ratio is at most $\max\{\ell^{OR}(\nu), \ell^{AND}(\nu)\}$.

%Notice that without the signals in $\Theta_R$, for every choice of $\nu$ our approximation ratio would be exactly the minimum of $\ell^{OR}_{\alpha}( \xi_i,\xi_j,\rho )$ over all choices of $\xi_i, \xi_j$ and $\rho$ that satisfy the triangle inequality (Constraint~\eqref{eq: triangle ineq}).~\citet{austrin2007balanced} shows that picking a linear function $\alpha(x) = \beta x$ (specifically, $\beta \approx 0.94016$) gives an approximation ratio $\approx 0.94$, i.e. $\min_{\xi_i,\xi_j,\rho} \ell^{OR}_{\beta x}( \xi_i,\xi_j,\rho ) \approx 0.94$. Similarly, without the signals in $\Theta_A$ the approximation ratio is the minimum of $\ell^{AND}_{\alpha}$;~\citet{sjogren2009rigorous} gives a rotation function $f$, sets $\alpha(x) = 2\Phi(2.5 sign(x) cot(f(arccos|x|))) - 1$, and shows that this gives a $\approx 0.874$ approximation. But, as opposed to classic CSPs, our problem is trivial when $\Theta_R$ or $\Theta_A$ is the empty set (by accepting or rejecting all signals). Our final approximation algorithm leverages this fact.

\paragraph{Choosing $\nu$ and Putting Things Together.}
We are left to choose the function $\nu$ that results in the smallest approximation ratio $\max\{\ell^{OR}(\nu), \ell^{AND}(\nu)\}$. We consider a rounding function of the form $\nu(y) = \alpha \cdot y + \beta$ for parameters $\alpha, \beta$ to be chosen. Using extensive computational effort, we found that $\alpha = 0.8825$ and $\beta = 0.0384$ perform well. Once we have a choice for $\alpha$ and $\beta$, it remains to prove the approximation ratio.

We have a computer-assisted proof showing that the approximation ratio is at most 1.1; our computer-based proof approach is similar to that of~\cite{sjogren2009rigorous}. Roughly speaking, we divide the cube $(\xi_i, \xi_j, \rho) \in [-1, 1]^3$ into a certain number of subcubes. For each subcube, we (numerically) compute an upper bound to $\max\{\ell^{OR}_{\nu}( \xi_i,\xi_j,\rho), \ell^{AND}_{\nu}( \xi_i,\xi_j,\rho)\}$. If this upper bound is already at most 1.1, then we are finished with the subcube. Otherwise, we divide it further into a certain number of subcubes. By continuing this process, we eventually manage to show that for the whole region $[-1, 1]^3$ that satisfies the triangle inequalities, the ratio must be at most 1.1, as desired. (The smallest subcube our proof considers has edge length 0.00078.) 

\paragraph{Comparison to Prior Work.} As stated earlier, our algorithm, with the exception of the choice of $\nu$, is similar to~\cite{lewin2002improved} and the follow-up works (e.g.~\cite{austrin2007balanced,sjogren2009rigorous}). However, perhaps surprisingly, we end up with a better approximation ratio than the \textsc{Max 2-AND} problem\footnote{This is the problem where we are given a set of clauses, each of which is an AND of two literals. The goal is to assign the variables as to maximize the number of satisfied clauses.}, whose approximation ratio is known to be at least 1.143 assuming the UGC~\cite{Austrin10}. To understand the difference, recall that \textsc{Max 2-AND} can be written as $\max \frac{1}{4} \sum_{(i, j, b_i, b_j)} (1 + b_i x_i + b_j x_j + b_i b_j x_i x_j)$ where $b_i, b_j \in \{\pm 1\}$ (representing whether the variable is negated in the clause). This is very similar to our problem~\eqref{eq: ip objective}, except that \textsc{Max 2-AND} has the aforementioned $b_i,b_j$-terms for negation. It turns out that this is also the cause that we can achieve better approximation ratio. Specifically, these negation terms led previous works~\cite{lewin2002improved,austrin2007balanced,sjogren2009rigorous,Austrin10} to only consider $\nu$ that is an odd function, i.e., $\nu(y) = \nu(-y)$ for all $x \in [-1, 1]$. For example, Austrin~\cite{austrin2007balanced} considers a function of the form $\nu(y) = \alpha \cdot y$. We note here that, due to the aforementioned UGC-hardness of \textsc{Max 2-AND}, we cannot hope to get an approximation ratio smaller than 1.143 using odd $\nu$. Nonetheless, since we do not have ``negation'' in our problem, we are not only restricted to odd $\nu$, allowing us to consider a more general family of the form $\nu(y) = \alpha \cdot y + \beta$ for $\beta \ne 0$. This ultimately leads to our better approximation ratio.

\subsection{Optimal Constrained Delegation in Polynomial Time}\label{subsec: efficient cases in delegation}

%\noindent \textbf{Unique Accepts and Rejects.}
%\noindent \textbf{Proof of Membership.}
%\noindent \textbf{Degree-bounded States.}
%\noindent \textbf{Foresight.}
%\noindent \textbf{Global Signal.}

\subsubsection{Unique Accepts and Rejects} Let us briefly consider the cases in which we have a unique acceptable or a unique rejectable state. Constrained delegation with unique rejects is a special case of foresight, since in this case, for every acceptable state, every incident signal is minimally forgeable. Hence, an optimal scheme can be found in polynomial time~\cite{Sher14}. For unique accepts, there is a simple algorithm to compute an optimal decision scheme.

\begin{proposition}
  For constrained delegation with unique accepts there is a polynomial-time algorithm to compute an optimal decision scheme $\psi^*$.
\end{proposition}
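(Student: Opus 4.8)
The plan is to reduce the search to deterministic schemes via Lemma~\ref{lem: GR 06 deterministic psi} and then show that an optimal such scheme accepts at most one signal, which makes the problem solvable by a single pass over the neighbors of the unique acceptable state. Write $\theta_a$ for the unique state in $\Theta_A$, and for a deterministic scheme let $\Sigma_A = \{\sigma \in \Sigma : \psi(\sigma) = 1\}$ be the set of accepted signals. The first step is to rewrite the receiver's utility purely as a function of $\Sigma_A$. Since, under a deterministic scheme, the sender reports an accepted signal from $N(\theta)$ whenever one exists, the acceptable state yields utility $q_{\theta_a}$ exactly when $N(\theta_a) \cap \Sigma_A \neq \emptyset$, and each rejectable state $\theta$ yields $q_\theta$ exactly when $N(\theta) \cap \Sigma_A = \emptyset$. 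Hence the objective becomes
\[
U(\Sigma_A) = q_{\theta_a}\cdot \mathbf{1}[N(\theta_a)\cap \Sigma_A \neq \emptyset] \;+\; \sum_{\theta \in \Theta_R} q_\theta \cdot \mathbf{1}[N(\theta)\cap \Sigma_A = \emptyset].
\]

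The key structural step is the claim that some optimal $\Sigma_A$ has $|\Sigma_A| \le 1$. I would argue this by a simple monotonicity/exchange argument split into two cases. If an optimal $\Sigma_A$ satisfies $N(\theta_a) \cap \Sigma_A = \emptyset$, then $\theta_a$ contributes nothing and $U(\Sigma_A) \le q_R$, a value already attained by $\Sigma_A = \emptyset$. Otherwise, pick any $\sigma^* \in N(\theta_a) \cap \Sigma_A$ and replace $\Sigma_A$ by the singleton $\{\sigma^*\}$; the acceptable state is still accepted, and since shrinking $\Sigma_A$ can only turn previously accepted rejectable states back into (correctly) rejected ones, the rejectable contribution does not decrease. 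Thus $U(\{\sigma^*\}) \ge U(\Sigma_A)$, establishing the claim and showing the relevant singleton can be taken inside $N(\theta_a)$.

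Given the claim, the algorithm is immediate: compute the reject-all value $q_R$, and for each $\sigma \in N(\theta_a)$ compute $U(\{\sigma\}) = q_{\theta_a} + q_R - \sum_{\theta \in N(\sigma)\cap \Theta_R} q_\theta$; then output whichever of $\Sigma_A = \emptyset$ and the best singleton gives the larger utility. Each quantity is a sum over $O(n)$ states, so the whole procedure runs in polynomial time, and the structural claim guarantees optimality. The only real content is the exchange argument, and I expect the main (mild) obstacle to be stating it cleanly, in particular verifying that enlarging $\Sigma_A$ beyond a single neighbor of $\theta_a$ never helps, because every additional accepted signal can only weakly decrease the mass of correctly rejected rejectable states while leaving $\theta_a$'s acceptance unchanged.
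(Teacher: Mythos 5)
Your proposal is correct and follows essentially the same route as the paper's proof: both observe that an optimal deterministic scheme accepts at most one signal, and that signal (if any) lies in $N(\theta_a)$, reducing the problem to comparing the reject-all scheme with the $|N(\theta_a)| \le m$ singleton schemes. Your exchange argument simply makes explicit the structural claim the paper states without detailed justification, so there is no gap and no substantive difference in approach.
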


\begin{proof}
  Since there is only one acceptable state $\theta_a$, an optimal decision scheme must turn at most one signal from the ones incident to $\theta_a$ into an accept signal (or simply reject all signals). There are at most $m+1$ such schemes that must be considered. The best one for the receiver is an optimal decision scheme for the instance. 
\end{proof}

\subsubsection{Proof of Membership} When the set of signals is the power set of $\Theta$, and $\sigma \in N(\theta)$ if and only if $\theta \in \sigma$, the receiver's problem is trivial: reject all signals, except signals corresponding to singleton sets $\{ \theta \}$, for $\theta \in \Theta$. This scheme is obviously optimal, an in fact results in expected utility equal to $1$ for the receiver.

\begin{proposition}
  For constrained delegation with proof of membership an optimal decision scheme can be found in polynomial time.
\end{proposition}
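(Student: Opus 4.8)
The plan is to write down one decision scheme, show it attains the information-theoretic maximum of receiver utility $1$, and note that --- despite the signal set being the exponentially large power set $\Sigma = 2^{\Theta}$ --- it admits a polynomial-size description. Concretely, I would have the receiver commit to accepting a signal $\sigma$ if and only if $\sigma$ is a singleton $\{\theta\}$ with $\theta \in \Theta_A$, and to rejecting every other signal (in particular rejecting $\{\theta\}$ for $\theta \in \Theta_R$ and rejecting every non-singleton set). Since this scheme is completely determined by the set $\Theta_A$, it is produced in $O(n)$ time and the exponential size of $\Sigma$ is never touched; here the input is understood to give $\Theta$, the partition $\Theta_A \cup \Theta_R$, and the $q_\theta$'s, with $\Sigma$ supplied implicitly as the power set of $\Theta$.

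The heart of the argument is the sender's best response to this commitment, and this is exactly where one must respect the \emph{adversarial} (not cooperative) nature of delegation. I would argue state by state. For $\theta \in \Theta_A$, the proof-of-membership constraint lets the sender report any $\sigma \ni \theta$, and in particular the accept signal $\{\theta\}$; since the sender strictly prefers acceptance she reports it and the receiver is correct, contributing $q_\theta$. For $\theta \in \Theta_R$ the sender would \emph{like} to reach an accept signal, but she is constrained to signals $\sigma$ with $\theta \in \sigma$; the only accept signals are singletons $\{\theta'\}$ with $\theta' \in \Theta_A$, and $\theta \in \{\theta'\}$ forces $\theta = \theta' \in \Theta_A$, contradicting $\theta \in \Theta_R$. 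Hence no accept signal is reachable from a rejectable state, the sender is forced onto a reject signal, and the receiver is again correct, contributing $q_\theta$. Summing over all states yields receiver utility exactly $\sum_{\theta \in \Theta} q_\theta = 1$.

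Since the receiver is correct on each state with probability at most $1$, her utility is at most $1$ for any scheme, so the scheme above is optimal and is found in polynomial time, which proves the proposition. The main obstacle I would emphasize is not any computation but a modeling subtlety: correctness must come from the structural fact that a rejectable state cannot forge membership in any acceptable state's singleton, rather than from imagining the sender helpfully disclosing the true state --- the latter reasoning would wrongly apply to rejectable senders as well. This also pins down the one place where the informal sketch needs sharpening: one accepts only the singletons of \emph{acceptable} states, since accepting a rejectable state's singleton would let that state be accepted and would reduce the utility to $q_A < 1$ in general.
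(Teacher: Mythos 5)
Your proof is correct and follows the same route as the paper: commit to accepting only singleton signals, reject everything else, observe that every state is then decided correctly, and conclude receiver utility $1$, which is the trivial upper bound. In fact your version is sharper than the paper's own one-line argument, which says to accept the singletons $\{\theta\}$ for all $\theta \in \Theta$; taken literally, that scheme is wrong, since a sender in a rejectable state $\theta \in \Theta_R$ can report $\{\theta\}$, be accepted, and drive the receiver's utility down to $q_A < 1$ in general. Your restriction to singletons of \emph{acceptable} states --- together with the explicit best-response argument that any signal $\sigma \ni \theta$ with $\theta \in \Theta_R$ is rejected, so rejectable senders are forced onto reject signals --- is exactly the fix needed, and you also make explicit a representation point the paper leaves implicit: although $|\Sigma| = 2^{|\Theta|}$, the scheme is fully described by $\Theta_A$, so it is computed in time polynomial in $n$ rather than in $|\Sigma|$.
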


\subsubsection{Laminar States} For laminar states, we can compute the optimal decision scheme in polynomial time using dynamic programming.

\begin{theorem}
   For constrained delegation with laminar states there is a polynomial-time algorithm to compute an optimal decision scheme $\psi^*$.
\end{theorem}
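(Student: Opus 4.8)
The plan is to reduce the receiver's problem to maximizing a weight function over sub-unions of a laminar family, and then to solve that by dynamic programming on the associated forest.

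First I would invoke Lemma~\ref{lem: GR 06 deterministic psi} to restrict attention to deterministic decision schemes $\psi : \Sigma \to \{0,1\}$. For such a $\psi$, let $A = \{\sigma \in \Sigma : \psi(\sigma) = 1\}$ be the set of accepted signals. The sender's best response reports an accepted signal from $N(\theta)$ whenever one exists, so the state $\theta$ leads to action A exactly when $\theta \in W := \bigcup_{\sigma \in A} N(\sigma)$. Writing $g_\theta = q_\theta$ for $\theta \in \Theta_A$ and $g_\theta = -q_\theta$ for $\theta \in \Theta_R$, a direct computation gives that the receiver's utility equals $q_R + \sum_{\theta \in W} g_\theta$. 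Hence the task is to choose $A$ so as to maximize $\sum_{\theta \in W} g_\theta$, and the feasible sets $W$ are precisely the unions $\bigcup_{\sigma \in A} N(\sigma)$, i.e., arbitrary unions of members of the laminar family $\{N(\sigma)\}_{\sigma \in \Sigma}$.

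Next I would exploit laminarity. The distinct sets in $\{N(\sigma)\}$ form a forest $F$ under containment, processing each connected component of $H$ separately; by the structural discussion each component has a global signal, so its maximal set is a genuine neighborhood and appears as a root of $F$. For a node $X$ of $F$ with children $X_1,\dots,X_c$, call $\theta \in X \setminus (X_1 \cup \cdots \cup X_c)$ the \emph{bare states} of $X$; by laminarity the only family set contained in $X$ that contains a bare state is $X$ itself. I would then compute, bottom-up, the quantity $f(X)$, defined as the best achievable $\sum_{\theta \in W'} g_\theta$ over feasible $W' \subseteq X$, via
\[
f(X) = \max\Bigl\{\, 0,\ \sum_{\theta \in X} g_\theta,\ \sum_{i=1}^{c} f(X_i)\,\Bigr\}.
\]
The three options correspond to taking nothing, selecting a signal with neighborhood $X$ (which forces all of $X$, including every descendant, into $W'$), or independently solving the disjoint subproblems at the children (which leaves the bare states of $X$ out, since they can only be covered by taking $X$). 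The optimal receiver utility is $q_R + \sum_{X \text{ a root of } F} f(X)$, and the scheme $\psi^*$ is recovered by setting $\psi^*(\sigma) = 1$ for one signal $\sigma$ with $N(\sigma) = X$ at every node $X$ where the ``take $X$'' option was selected.

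The step that needs the most care is the correctness of this recurrence rather than its running time (building $F$ and running the dynamic program are clearly polynomial). Specifically, the main points to justify are that selecting the signal with $N(\sigma) = X$ necessarily pulls \emph{every} state of $X$ into $W$ — so one cannot ``take $X$ but exclude a rejectable child'' — while conversely a bare state of $X$ cannot be covered without taking $X$; together these show that the two non-trivial branches of $f(X)$ are exhaustive. One must also argue that the choices across different subtrees are independent, which follows because distinct maximal sets of a laminar family are disjoint, so every feasible $W$ decomposes as a disjoint union across the trees of $F$. I expect this bookkeeping of the bare states to be the only delicate part of the proof.
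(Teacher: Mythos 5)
Your proof is correct and takes essentially the same route as the paper: both construct the containment forest of the (distinct) signal neighborhoods and run the identical bottom-up dynamic program, whose two nontrivial branches are ``accept a signal with neighborhood $X$, which covers all of $X$'' versus ``reject it and recurse on the children, leaving the bare states uncovered'' (the paper phrases the latter as rejecting the states whose only signal in the subinstance is $\sigma$). Your reformulation of the receiver's utility as $q_R + \sum_{\theta \in W} g_\theta$ maximized over unions $W$ of the laminar family is a cleaner packaging of the same argument the paper makes directly in terms of decision schemes on subtrees, but it is not a different algorithm or decomposition.
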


\begin{proof}
   For laminar states, the neighborhoods of signals $N(\sigma)$ form a laminar family of states. For the rest of the proof, we assume that the state-signal graph $H$ is connected, since otherwise we can apply the algorithm separately to each connected component of $H$. Moreover, if two signals $\sigma, \sigma'$ have the same neighborhood $N(\sigma) = N(\sigma')$, then w.l.o.g.\ $\psi^*$ treats them similarly with $\psi^*(\sigma) = \psi^*(\sigma')$. Hence, for the rest of the proof we assume that every signal $\sigma$ has a unique neighborhood $N(\sigma)$.
   
   Since $H$ is connected and signal neighborhoods are unique, we can construct a new graph $T = (\Sigma, E_T)$ of signals with edge set $E_T$ as follows. There is a directed edge $(\sigma_h, \sigma_l)$ iff $N(\sigma_l) \subset N(\sigma_h)$ and there is no other signal $\sigma$ with $N(\sigma_l) \subset N(\sigma) \subset N(\sigma_h)$. Since the sets are laminar, the graph $T$ is a rooted tree, where the global signal $\sigma_0$ with $N(\sigma_0) = \Theta$ is the root of the tree. 
   
   For any signal $\sigma$, if the optimal decision scheme sets $\psi^*(\sigma) = 1$ and makes $\sigma$ an accept signal, then we can assume that the sender sends $\sigma$ for every state in $N(\sigma)$. As a consequence, we can assume w.l.o.g.\ that all descendants of $\sigma$ in $T$ are accept signals in $\psi^*$ as well. 
   
   We use this insight to compute an optimal scheme bottom-up in the tree $T$ rooted in $\sigma_0$. For any signal $\sigma$, we restrict attention to the subinstance $H_\sigma$ given by all signals in the subtree $T_\sigma$ rooted at $\sigma$ and the stats in $N(\sigma)$. Consider optimal scheme for $H_\sigma$. There are two options: (1)~$\sigma$ is an accept signal, and so are all signals in $T_\sigma$. (2) $\sigma$ is a reject signal. In this case, all states $\theta$ with $N(\theta) \cap H_\sigma = \{ \sigma \}$ would be rejected. For all other states $\theta$ we can assume that $\sigma$ is never sent by the sender, since every descendant (reject or accept) signal is weakly preferred by the sender. Hence, in case (2)~we can recurse and apply the optimal decision schemes for the instances given by the subtrees $T_{\sigma'}$ rooted at the child signals $\sigma'$ of $\sigma$.
   
   The recursive procedure now starts at the leaves of the tree and computes the optimal choice for the subinstances with single signals. Then, the procedure works bottom-up in the tree. For $\sigma$ it compares (1) the all-accept scheme to (2) the combination of the optimal schemes computed for the subtrees rooted at the children and a reject decision for $\sigma$. The better of these two schemes is the optimal decision scheme for subtree $T_\sigma$. The resulting algorithm computes an optimal decision scheme in the instance with laminar states. The overall running time is polynomial in the size of the instance. 
\end{proof}

\subsubsection{Laminar Signals} For laminar signals, we also give a polynomial-time algorithm to compute the optimal decision scheme using dynamic programming.

\begin{theorem}
   For constrained delegation with laminar signals there is a polynomial-time algorithm to compute an optimal decision scheme $\psi^*$.
\end{theorem}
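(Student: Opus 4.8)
The plan is to reduce, via Lemma~\ref{lem: GR 06 deterministic psi}, to computing an optimal \emph{deterministic} scheme, which amounts to choosing the set of accepted signals $\Sigma_A = \{\sigma \in \Sigma : \psi(\sigma) = 1\}$. Given $\Sigma_A$, the sender steers a state $\theta$ to accept exactly when $N(\theta) \cap \Sigma_A \ne \emptyset$, so the receiver's utility is $\sum_{\theta \in \Theta_A,\, N(\theta) \cap \Sigma_A \ne \emptyset} q_\theta + \sum_{\theta \in \Theta_R,\, N(\theta) \cap \Sigma_A = \emptyset} q_\theta$. The key observation is that the only feature of $\Sigma_A$ relevant to a state $\theta$ is the single bit ``does $\Sigma_A$ meet $N(\theta)$?''. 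Since in the laminar-signals case the neighborhoods $\{N(\theta) \mid \theta \in \Theta\}$ form a laminar family over $\Sigma$, I would organize these bits along the forest induced by this family and compute the optimum by dynamic programming — structurally analogous to the laminar-states algorithm, but now the tree is built on the \emph{state neighborhoods} (subsets of $\Sigma$) rather than on signals.

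Concretely, I would build the laminar tree $\mathcal{T}$ whose nodes are the distinct sets $N(\theta)$ together with a virtual root $\Sigma$ (the union of all neighborhoods, since every signal lies in some $N(\theta)$ by the no-isolated-nodes assumption), with children given by maximal proper containment. For a node $X$ let $\mathrm{own}(X) = X \setminus \bigcup_{Y \in C(X)} Y$, where $C(X)$ is the set of children; the sets $\mathrm{own}(X)$ partition $\Sigma$. Let $a_X = \sum_{\theta \in \Theta_A,\, N(\theta) = X} q_\theta$ and $r_X = \sum_{\theta \in \Theta_R,\, N(\theta) = X} q_\theta$. Any $\Sigma_A$ induces bits $b_X := [\Sigma_A \cap X \ne \emptyset]$, and the objective equals $\sum_X (b_X a_X + (1 - b_X) r_X)$. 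I would compute, bottom-up, $f(X,0)$ and $f(X,1)$, the best total over the subtree at $X$ subject to $b_X = 0$ and $b_X = 1$ respectively. Because siblings in a laminar family are disjoint, the subtree contributions are independent, giving $f(X,0) = r_X + \sum_{Y \in C(X)} f(Y,0)$ (forcing every descendant bit to $0$), and, when $\mathrm{own}(X) \ne \emptyset$, $f(X,1) = a_X + \sum_{Y \in C(X)} \max\{f(Y,0), f(Y,1)\}$, since an own-signal freely witnesses $b_X = 1$. When $\mathrm{own}(X) = \emptyset$ (possible only for internal nodes), $b_X = 1$ requires an accepting child, so $f(X,1) = a_X + \max_{Y_0 \in C(X)}\bigl( f(Y_0,1) + \sum_{Y \in C(X) \setminus \{Y_0\}} \max\{f(Y,0), f(Y,1)\} \bigr)$. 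The answer is $\max\{f(\Sigma,0), f(\Sigma,1)\}$, and the realizing $\Sigma_A$ is recovered by standard backtracking (accepting one own-signal at each minimal accepting node).

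The main obstacle is the correctness argument, specifically showing that the DP ranges exactly over realizable configurations. Here I would verify two directions: that each $\Sigma_A$ yields a bit-assignment satisfying the monotonicity constraint (a child accepting forces its parent accepting, which is precisely why $f(X,0)$ only sums $f(Y,0)$) and the witness constraint (if $b_X = 1$ then either $\mathrm{own}(X) \ne \emptyset$ or some child is accepting); and, conversely, that every such bit-assignment is realized by some $\Sigma_A$ of the matching objective value — constructed by accepting one own-signal at each minimal accepting node, with laminarity guaranteeing these accepted signals do not leak into any $b_X = 0$ node. The $\mathrm{own}(X) = \emptyset$ forcing case is the one subtle point to get right. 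Finally, since a laminar family over $\Sigma$ has $O(m)$ distinct sets, the tree has $O(m)$ nodes and the DP runs in polynomial time, completing the proof.
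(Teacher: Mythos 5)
Your proof is correct and follows essentially the same route as the paper's: the paper also builds the laminar tree of the state neighborhoods, calls your own-signals ``high signals,'' and runs the same bottom-up dynamic program, where its two tables $\psi_v^*$ and $\psi_v^a$ (best scheme on the subtree, and best scheme containing at least one accept signal) correspond exactly to your $\max\{f(X,0),f(X,1)\}$ and $f(X,1)$, including the special forcing step when a node has no own signal. The only cosmetic differences are your virtual root (the paper instead assumes $H$ is connected and treats components separately) and your more explicit two-way realizability argument via the monotonicity and witness constraints, which the paper phrases as ``upward closedness'' of the low vertices of accept signals.
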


\begin{proof}
We again rely on dynamic programming and a tree structure. Observe that there are subtle differences to the approach taken in the previous theorem. We again assume that the graph $H$ is connected, since otherwise we can apply the observations for each component separately.

Consider a tree graph $T=(V_T,E_T)$ defined as follows. Each node $v \in V_T$ corresponds to a subset $\Sigma_v \subseteq \Sigma$ of signals that represents a neighborhood $N(\theta)$ for at least one state $\theta \in \Theta$ (note there can be several states $\theta,\theta'$ with $N(\theta) = N(\theta')$). The vertices of the tree are ordered top-down w.r.t.\ the subset relation of the associated signal sets. A direct child of vertex $v$ satisfies $\Sigma_{v'} \subset \Sigma_v$ and there is no vertex $v'' \in V$ with $\Sigma_{v'} \subset \Sigma_{v''} \subset \Sigma_v$. Due to connectedness of $H$ and laminarity of signal sets, the root $v_0$ of $T$ has $\Sigma_{v_0} = \Sigma$. 

For each node $v$, we define a set of \emph{high signals} $\Sigma_v^h \subseteq \Sigma_v$ as follows. Consider the subtree $T_v$ rooted at $v$ with vertex set $V_v$. Then $\Sigma_v^l = \Sigma_v \setminus \left(\bigcup_{u \in V_v, u\neq v} \Sigma_u\right)$, i.e., the signals in $\Sigma_v^l$ are not present at any descendant of $v$ (but, due to the definition of $T$, at all ancestors of $v$). Note that $\Sigma_v^l$ can be empty. Moreover, every signal $\sigma \in \Sigma$ is a high signal for exactly one vertex. We say this vertex is the \emph{low vertex} of $\sigma$.

First consider the decision scheme with $\psi_R(\sigma) = 0$ for all $\sigma \in \Sigma$. Now suppose we change a set $\Sigma_A$ to become accept signals. For every low vertex $v$ of an accept signal, all states associated with $v$ and ancestors of $v$ now have an accept signal in their neighborhood. Consequently, we can w.l.o.g.\ assume that all high signals of ancestors of $v$ are also accept signals. Put differently, w.l.o.g.\ the set of low vertices of $\Sigma_A$ is``upward closed'' in the tree. This is the main structural property that drives our dynamic programming algorithm.

The algorithm works bottom-up in the tree. At each node $v$ we compute schemes for the high signals in the subtree $T_v$ rooted at $v$. More formally, the algorithm computes the best scheme (denoted $\psi_v^*$) for the instance given by $T_v$, in which we restrict to states with neighborhoods represented by nodes in $T_v$ and the signals that correspond to high signals of nodes in $T_v$. In addition, the algorithm maintains the best scheme (denoted $\psi_v^a$) for the instance $T_v$, which contains at least one accept signal.

First, suppose there is a high signal at node $v$. Then either (1) the high signal at $v$ is a reject signal (and, by upward closedness, all high signals in the subtree $T_v$ rooted at $v$ are reject signals), or (2) the high signal at $v$ is an accept signal. In case (1), the optimal signaling scheme on the high signals in $T_v$ is $\psi_R$. In case (2), the optimal signaling scheme results from making the high signal at $v$ an accept signal and using the optimal signaling scheme $\psi_{v'}$ for every subtree rooted at a direct child $v'$ of $v$. Due to the structural properties, the scheme computed in case (2) is $\psi_v^a$. The better of the two schemes from cases (1) and (2) is the optimal scheme $\psi_v^*$.

Second, suppose there is no high signal at node $v$. Then either (1) all high signals in the subtree $T_v$ rooted at $v$ are reject signals or (2) at least one high signal in $T_v$ is an accept signal. In case (1), the optimal signaling scheme on the high signals in $T_v$ is $\psi_R$. In case (2), we consider adding the optimal schemes $\psi_{v'}^*$ for every direct child $v'$ of $v$. Note that if none of these optimal schemes contains an accept signal, we violate the assumption of case (2). In this case, we consider the subtree $T_{v'}$ such that the utility difference between $\psi_{v'}^a$ and $\psi_{v'}^*$ is smallest and switch to $\psi_{v'}^a$ in this subtree. Due to the structural properties, it is straightforward to see that the scheme computed in case (2) is $\psi_v^a$. The better of the two schemes from cases (1) and (2) is the optimal scheme $\psi_v^*$.

The resulting algorithm computes an optimal decision scheme in the instance with laminar signals. The overall running time is polynomial in the size of the instance. 
\end{proof}

\section{Constrained Persuasion}
Let us now turn to the constrained persuasion problem. The sender first commits to a signaling scheme $\varphi$, which she then uses to transmit information to the receiver, once the state of nature is revealed. Given that the sender has commitment power and the receiver knows $\varphi$, the receiver picks action A if and only if conditioned on receiving signal $\sigma$, the expected utility of A is more than R, i.e.,
\[ 
\sum_{\theta \in N(\sigma) \cap \Theta_A} \varphi(\theta,\sigma) \ge \sum_{\theta \in N(\sigma) \cap \Theta_R} \varphi(\theta,\sigma) \]
or, equivalently, $2 \cdot \sum_{\theta \in N(\sigma) \cap \Theta_A} \varphi(\theta,\sigma) \ge \sum_{\theta \in N(\sigma)} \varphi(\theta,\sigma)$.

In this case, we say that $\sigma$ is an \emph{accept signal}, otherwise we call $\sigma$ a \emph{reject signal}. An optimal signaling scheme $\varphi^*$ maximizes the expected utility of the sender, i.e., the total probability associated with accept signals. Note that if both accepting and rejecting are optimal actions for the receiver, we assume that she breaks ties in favor of the sender (so, in our case, accept). This mild assumption is standard in economic bilevel problems (e.g., when indifferent between buying and not buying, a potential customer is usually assumed to buy) and is often without loss of generality. This way we avoid obfuscating technicalities %(such as various $\varepsilon$-perturbations) 
in the definition of optimal schemes $\varphi^*$.

We study the computational complexity of finding $\varphi^*$ and polynomial-time approximation algorithms. In general, approximating $\varphi^*$ can be an extremely hard problem, even in the constrained persuasion problem. Our first insight in Section~\ref{subsec: signal partition} is that the main source of hardness in the problem is deciding on the optimal set of accept signals. We then provide a simple $2n$-approximation algorithm and an $n^{1-\varepsilon}$-hardness in Section~\ref{subsec: persuasion general}. The PTAS and the matching strong \classNP-hardness for instances with unique accepts as well as the efficient algorithm for unique rejects are discussed in Section~\ref{subsec: PTAS}. The section concludes with the discussion of instances with global signal or laminarity properties in Section~\ref{subsec: persuasion misc}.

\subsection{Signal Partitions}\label{subsec: signal partition}
A signaling scheme $\varphi$ partitions the signal space $\Sigma$ into $(\Sigma_A, \Sigma_R)$, in the sense that the receiver takes action $A$ if and only if she gets signal $\sigma \in \Sigma_A$ (and $R$ for $\Sigma_R$). Determining this partition of the signal set turns out to be the main source of computational hardness of finding $\varphi^*$: Given an optimal partition of the signal set, the reduced problem of computing appropriate optimal signaling probabilities is solved with a linear program. 

We prove this result in a general case of the persuasion problem, in which the receiver has an arbitrary finite set $\mathcal{A}$ of actions. Moreover, sender and receiver can have utilities $u_s, u_r : \mathcal{A} \times \Theta \to \RR$ that yield arbitrary positive or negative values for every (state of nature, action)-pair. 

\begin{proposition}
  \label{prop:givenSignals}
  Given a partition $P = (\Sigma_a)_{a \in \mathcal{A}}$ of the signal space such that the receiver's best action for a signal $\sigma \in \Sigma_a$ is action $a$, an optimal signaling scheme $\varphi_P^*$ for the general persuasion problem that (1) implements these receiver preferences and (2) maximizes the sender utility, can be computed by solving a linear program of polynomial size.
\end{proposition}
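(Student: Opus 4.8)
The plan is to reduce the problem, for the \emph{fixed} partition $P = (\Sigma_a)_{a \in \calA}$, to a single linear program whose variables are the edge probabilities $\varphi(\theta,\sigma)$, one for each edge $(\theta,\sigma) \in E$. Since $|E| \le nm$, there are only polynomially many variables. First I would impose the defining constraints of a signaling scheme: $\varphi(\theta,\sigma) \ge 0$ for every edge, and $\sum_{\sigma \in N(\theta)} \varphi(\theta,\sigma) = q_\theta$ for every state $\theta \in \Theta$. These are $|E|$ non-negativity constraints and $n$ equalities, all linear, and together they carve out exactly the polytope of valid signaling schemes.

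The crucial step is to encode ``the receiver best-responds with action $a$ whenever she sees a signal $\sigma \in \Sigma_a$'' as linear constraints. Conditioned on receiving $\sigma$, the receiver's posterior over states is proportional to $\varphi(\cdot,\sigma)$, so action $a$ is a (weak) best response exactly when its conditional expected utility dominates that of every competing action $a'$, i.e.
\[
\frac{\sum_{\theta \in N(\sigma)} \varphi(\theta,\sigma)\, u_r(a,\theta)}{\sum_{\theta \in N(\sigma)} \varphi(\theta,\sigma)} \;\ge\; \frac{\sum_{\theta \in N(\sigma)} \varphi(\theta,\sigma)\, u_r(a',\theta)}{\sum_{\theta \in N(\sigma)} \varphi(\theta,\sigma)}.
\]
Because the common denominator $\sum_{\theta \in N(\sigma)} \varphi(\theta,\sigma)$ is non-negative, clearing it yields the \emph{linear} constraint $\sum_{\theta \in N(\sigma)} \varphi(\theta,\sigma)\,(u_r(a,\theta) - u_r(a',\theta)) \ge 0$, one for each signal $\sigma \in \Sigma_a$ and each alternative action $a' \in \calA$, for a total of only $O(m|\calA|)$ constraints. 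The weak inequality is exactly the right condition given the tie-breaking convention: under $P$ the receiver is committed to playing $a$ on $\sigma$, so it suffices that $a$ be weakly optimal.

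For the objective I would take the sender's expected utility, which---because the receiver plays $a$ on every $\sigma \in \Sigma_a$---is the linear form $\sum_{a \in \calA} \sum_{\sigma \in \Sigma_a} \sum_{\theta \in N(\sigma)} \varphi(\theta,\sigma)\, u_s(a,\theta)$. Maximizing this over the polytope above is an LP of polynomial size, solvable in polynomial time; its optimizer is the desired $\varphi_P^*$, and infeasibility certifies that no scheme implements $P$. It remains to check the two directions of correctness, namely that every feasible LP point is a valid scheme implementing $P$ on its support, and conversely that every scheme implementing $P$ is LP-feasible with the same objective value.

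The one point requiring care---and the part I expect to be the main obstacle---is the treatment of signals sent with probability zero, for which the posterior, and hence the ``best response'' condition, is a priori undefined. Here the denominator-clearing is precisely what saves us: when $\sum_{\theta \in N(\sigma)} \varphi(\theta,\sigma) = 0$ every term $\varphi(\theta,\sigma)$ vanishes, so each best-response constraint degenerates to $0 \ge 0$ and is trivially satisfied, while the signal contributes nothing to the objective. Thus no spurious constraint is ever imposed on an unused signal, and the linearized constraints faithfully capture best response exactly on the support where it is meaningful. Verifying this carefully closes the correctness argument and shows the LP optimum equals the sender's optimal utility under $P$, as claimed.
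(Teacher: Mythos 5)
Your proposal is correct and takes essentially the same approach as the paper: the paper's proof writes down exactly the LP you describe, with edge variables $x_{\theta,\sigma}$, the feasibility constraints $\sum_{\sigma \in N(\theta)} x_{\theta,\sigma} = q_\theta$ and $x_{\theta,\sigma} \ge 0$, the denominator-cleared best-response inequalities $\sum_{\theta \in N(\sigma)} x_{\theta,\sigma}\,(u_r(a,\theta) - u_r(a',\theta)) \ge 0$ for all $\sigma \in \Sigma_a$, $a' \in \calA$, and the sender's expected utility as the linear objective. Your explicit treatment of zero-probability signals (where the constraint degenerates to $0 \ge 0$) is a detail the paper leaves implicit, but it does not change the argument.
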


\begin{proof}
Given $P = (\Sigma_a)_{a \in \mathcal{A}}$, consider the following linear program~\eqref{eq:LPgivenSignals}.
\begin{equation}\label{eq:LPgivenSignals}
\begin{array}{lrcll}
\renewcommand{\arraystretch}{0.5}
\mbox{Max.} & \multicolumn{3}{l}{\D  \sum_{a \in \mathcal{A}} \sum_{\sigma \in \Sigma_a} \sum_{\theta \in N(\sigma)} x_{\theta,\sigma} \cdot u_s(a,\theta) }\\
\mbox{s.t.} & \D \sum_{\theta \in N(\sigma)} x_{\theta,\sigma} \cdot u_r(a,\theta) & \ge & \D \sum_{\theta \in N(\sigma)} x_{\theta,\sigma} \cdot u_r(a',\theta) & \hspace{0.5cm} \mbox{for all } a \in \mathcal{A}, \sigma \in \Sigma_a, a' \in \mathcal{A} \\
& \D \sum_{\sigma \in N(\theta)} x_{\theta,\sigma} & = & q_{\theta} & \hspace{0.5cm} \mbox{for all } \theta \in \Theta\\
& x_{\theta,\sigma}                          & \ge & 0 &\hspace{0.5cm} \mbox{for all } \sigma \in \Sigma, \theta \in N(\sigma)
\end{array}
\end{equation}

For each $\sigma \in \Sigma_a$ and every action $a' \neq a$ we must satisfy that $\Ex{u_r(a,\theta) \mid \sigma} \ge \Ex{u_r(a',\theta) \mid \sigma}$, encoded by the first constraint. The other two constraints encode the feasibility of the scheme. Subject to these constraints, the objective is to maximize the expected utility of the sender. An optimal LP-solution $x^*$ directly implies an optimal signaling scheme $\varphi_P^*(\theta,\sigma) = x^*_{\theta,\sigma}$. 
\end{proof}

\subsection{A 2n-Approximation Algorithm and Hardness}\label{subsec: persuasion general}

Going back to constrained persuasion with binary actions, we start by giving a simple $2n$-approximation algorithm. First, we give a useful benchmark for the optimal scheme.

\begin{lemma}
\label{lem:trivialUB}
An optimal signaling scheme $\varphi^*$ yields a sender utility of at most $\min \{1, 2 q_A\}$.
\end{lemma}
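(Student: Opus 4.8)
The plan is to establish the two upper bounds $1$ and $2q_A$ separately. The bound of $1$ is immediate: the sender's utility equals the total probability mass assigned to accept signals, and since probabilities over all signals sum to $1$, the accept probability cannot exceed $1$. The content of the lemma is therefore the bound $2q_A$, and I would spend my effort there.

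For the bound $2q_A$, the key idea is to exploit the defining inequality of an accept signal. Recall that $\sigma$ is an accept signal exactly when $\sum_{\theta \in N(\sigma) \cap \Theta_A} \varphi(\theta,\sigma) \ge \sum_{\theta \in N(\sigma) \cap \Theta_R} \varphi(\theta,\sigma)$, which is equivalent to $2\sum_{\theta \in N(\sigma)\cap\Theta_A}\varphi(\theta,\sigma) \ge \sum_{\theta\in N(\sigma)}\varphi(\theta,\sigma)$. In words, on any accept signal at least half of its total probability mass comes from acceptable states. Let $\Sigma_A$ denote the set of accept signals under the optimal scheme $\varphi^*$. The sender's utility is $\sum_{\sigma \in \Sigma_A} \sum_{\theta \in N(\sigma)} \varphi^*(\theta,\sigma)$. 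Summing the accept-signal inequality over all $\sigma \in \Sigma_A$ gives
\[
\sum_{\sigma \in \Sigma_A} \sum_{\theta \in N(\sigma)} \varphi^*(\theta,\sigma) \;\le\; 2 \sum_{\sigma \in \Sigma_A} \sum_{\theta \in N(\sigma)\cap\Theta_A} \varphi^*(\theta,\sigma).
\]
The right-hand side double-counts only probability mass that sits on acceptable states, so I would bound it by the total acceptable mass: $\sum_{\sigma \in \Sigma_A} \sum_{\theta \in N(\sigma)\cap\Theta_A} \varphi^*(\theta,\sigma) \le \sum_{\theta \in \Theta_A} \sum_{\sigma \in N(\theta)} \varphi^*(\theta,\sigma) = \sum_{\theta \in \Theta_A} q_\theta = q_A$, where the inner equality is the feasibility constraint $\sum_{\sigma \in N(\theta)}\varphi^*(\theta,\sigma) = q_\theta$. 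Chaining these yields sender utility $\le 2q_A$.

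I do not anticipate a serious obstacle; this is essentially a direct computation. The one point requiring a little care is the step where I restrict the summation to acceptable states and then extend the sum over signals to the full neighborhood $N(\theta)$: I must make sure each term $\varphi^*(\theta,\sigma)$ with $\theta \in \Theta_A$ is counted at most once on the right, which holds because a signal is either an accept signal or a reject signal but never both, and I am only adding nonnegative terms when passing from $\sum_{\sigma \in \Sigma_A, \theta \in N(\sigma)\cap\Theta_A}$ to $\sum_{\theta \in \Theta_A, \sigma \in N(\theta)}$. Combining the two bounds gives $\varphi^*$-utility $\le \min\{1, 2q_A\}$, completing the proof.
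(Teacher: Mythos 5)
Your proof is correct and is essentially identical to the paper's own argument: the bound of $1$ is dismissed as trivial, and the bound $2q_A$ follows by summing the accept-signal inequality over $\Sigma_A$ and then relaxing the double sum to all of $\Theta_A$ via the feasibility constraint $\sum_{\sigma \in N(\theta)} \varphi^*(\theta,\sigma) = q_\theta$. The care you take about non-double-counting in the sum exchange is exactly what justifies the paper's second inequality, so nothing is missing.
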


\begin{proof}
The upper bound of $1$ is trivial. $\varphi^*$ partitions the signal space into $(\Sigma_A, \Sigma_R)$, the accept and reject signals, respectively. The expected utility of the sender is
\[
\sum_{\sigma \in \Sigma_A} \sum_{\theta \in N(\sigma)} \varphi^*(\theta,\sigma) \le \sum_{\sigma \in \Sigma_A} \sum_{\theta \in N(\sigma) \cap \Theta_A}  2 \cdot \varphi^*(\theta,\sigma) \le 2 \sum_{\theta \in \Theta_A} q_{\theta} = 2 \cdot q_A\enspace. 
\]
\end{proof}

Our simple algorithm considers the $m$ partitions with a single accept signal $\Sigma_A = \{\sigma\}$, for every $\sigma \in \Sigma$. For each such partition, the algorithm determines an optimal scheme and then picks the best one, among all $m$ partitions. Instead of solving the LP of Proposition~\ref{prop:givenSignals}, given a proposed partition we proceed as follows. Assign as much probability mass from $\Theta_A \cap N(\sigma)$ to $\sigma$ and at most the same amount from $\Theta_R \cap N(\sigma)$ --- this ensures that $\sigma$ is an accept signal. The remaining probability mass is assigned arbitrarily to other signals. Note that if this is impossible, there is no scheme that makes $\sigma$ an accept signal. %The algorithm returns the best feasible scheme found during the process. 

\begin{proposition}
  For constrained persuasion there is a $2n$-approximation algorithm that runs in polynomial time.
\end{proposition}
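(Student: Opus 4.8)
The plan is to prove that the algorithm described just before the statement---which enumerates the $m$ candidate partitions of the form $\Sigma_A = \{\sigma\}$, computes the best scheme for each, and outputs the best one overall---achieves a $2n$-approximation. The key benchmark is Lemma~\ref{lem:trivialUB}, which bounds $\varphi^*$ by $\min\{1, 2q_A\} \le 2 q_A$. So it suffices to show that the single-signal strategy recovers at least a $\frac{1}{n}$ fraction of $q_A$.

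First I would argue that the greedy single-signal procedure is correct and runs in polynomial time: for each $\sigma$, checking whether $\sigma$ can be made an accept signal amounts to verifying that $\sum_{\theta \in N(\sigma)\cap\Theta_A} q_\theta \ge$ (some achievable reject mass), which the described mass-assignment procedure handles directly without invoking the full LP of Proposition~\ref{prop:givenSignals}. The utility collected from the partition $\Sigma_A = \{\sigma\}$ equals the total probability mass routed to $\sigma$, which the procedure makes as large as possible while keeping $\sigma$ an accept signal; in particular it is at least the acceptable mass $\sum_{\theta \in N(\sigma)\cap \Theta_A} \varphi^*(\theta,\sigma)$ that $\varphi^*$ itself sends to $\sigma$ (since that alone certifies $\sigma$ as an accept signal under $\varphi^*$, and can be replicated here).

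The heart of the argument is an averaging step. Take the optimal scheme $\varphi^*$ with its accept set $\Sigma_A$. By Lemma~\ref{lem:trivialUB} the acceptable mass $\sum_{\sigma \in \Sigma_A}\sum_{\theta \in N(\sigma)\cap\Theta_A}\varphi^*(\theta,\sigma)$ is comparable to the sender's optimum. Since $|\Sigma_A| \le m$ but more importantly the acceptable states number at most $n$, there must exist a single accept signal $\sigma_0 \in \Sigma_A$ whose share of the objective is at least a $\frac{1}{n}$-fraction of $q_A$ (more carefully: the total acceptable mass is at most $q_A$, so some signal carries a $1/|\Sigma_A|$ share, and one can bound the relevant count by $n$ via the acceptable states rather than the signals). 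Restricting to that $\sigma_0$ and running the single-signal subroutine recovers at least this much utility. Combining with $\Opt \le 2 q_A$ then yields the ratio $2n$.

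The main obstacle I anticipate is pinning down exactly which quantity is bounded by $n$ versus $m$ in the averaging step, so that the final ratio is $2n$ and not $2m$. The clean way is to charge the optimum's acceptable-state mass rather than its signal mass: since $\sum_{\theta \in \Theta_A} q_\theta = q_A$ and there are at most $n$ states in total, there is an acceptable state $\theta_0$ (equivalently, a signal in its neighborhood carrying its mass) contributing an $\Omega(q_A/n)$ share, and a single accept signal serving $\theta_0$ can be isolated and exploited by the subroutine. Verifying that the subroutine indeed attains at least this charged amount---and that ties are broken in the sender's favor as stipulated---is the one place requiring care, but it follows from the explicit mass-assignment description given above.
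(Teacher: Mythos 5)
Your overall route (enumerate the single-accept-signal partitions, then use an averaging argument against the optimum) is the same as the paper's, but the quantitative claims your proof rests on are false, and the flaw is your choice of benchmark. You propose that ``it suffices to show that the single-signal strategy recovers at least a $\frac{1}{n}$ fraction of $q_A$'' and to combine this with $\OPT \le 2q_A$ from Lemma~\ref{lem:trivialUB}. No algorithm can prove that claim, because it is false: take $\Theta_A = \{\theta_a\}$ with $q_{\theta_a} = 0.4$, $\Theta_R = \{\theta_r\}$ with $q_{\theta_r} = 0.6$, and a single signal $\sigma$ adjacent to both states. Then $\OPT = 0$ (the lone signal is always rejected), so every feasible scheme, including the algorithm's, has value $0 < q_A/n$. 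The quantity you must charge against is not $q_A$ but the acceptable mass that $\varphi^*$ itself routes to its accept signals, $M_A = \sum_{\sigma \in \Sigma_A} \sum_{\theta \in N(\sigma) \cap \Theta_A} \varphi^*(\theta,\sigma)$. The accept-signal condition gives $\OPT \le 2M_A$; this is the first inequality \emph{inside} the proof of Lemma~\ref{lem:trivialUB}, not its statement, which is strictly weaker and insufficient here.

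Your averaging step is also incorrect as stated: ``there must exist a single accept signal $\sigma_0 \in \Sigma_A$ whose share of the objective is at least a $\frac{1}{n}$-fraction'' fails when, for instance, $\varphi^*$ spreads the mass of one acceptable state evenly over $m \gg n$ accept signals --- then every individual signal's share of the objective is roughly $\OPT/m$, far below $\OPT/n$. The correct argument (and the paper's) averages over the at most $n$ \emph{acceptable states}: some $\theta'$ satisfies $\sum_{\sigma \in \Sigma_A \cap N(\theta')} \varphi^*(\theta',\sigma) \ge M_A/n \ge \OPT/(2n)$, but this contribution may be spread across many signals, so no single signal inherits it directly. What rescues the single-signal algorithm is consolidation: when it runs with unique accept signal $\sigma' \in \Sigma_A \cap N(\theta')$, the run is feasible (since $\varphi^*$ already makes $\sigma'$ an accept signal, the algorithm --- which puts strictly more acceptable mass on $\sigma'$ and only matches it with rejectable mass --- can do so as well), and it routes \emph{all} of $q_{\theta'}$ onto $\sigma'$, so its value is at least $q_{\theta'} \ge \sum_{\sigma \in \Sigma_A \cap N(\theta')} \varphi^*(\theta',\sigma) \ge \OPT/(2n)$. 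You gesture at this step (``a single accept signal serving $\theta_0$ can be isolated and exploited''), but you never state the key inequality that the algorithm's value is at least $q_{\theta'}$, and without replacing $q_A$ by $M_A$ throughout, the chain of inequalities you actually write down does not hold.
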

\begin{proof}
Suppose $\theta' \in \Theta_A$ is an acceptable state from which $\varphi^*$ assigns the largest amount to accept signals, i.e., $\theta' = \arg \max_{\theta \in \Theta_A} \sum_{\sigma \in \Sigma_A \cap N(\theta)} \varphi^*(\theta,\sigma)$. Clearly, the optimum accumulates on the accept signals at most $n$ times this probability mass from the set of acceptable states, and at most the same from rejectable states. Hence, $\sum_{\sigma \in \Sigma_A \cap N(\theta')} \varphi^*(\theta',\sigma) < q_{\theta'}$ is at least a $1/(2n)$-fraction of the optimal sender utility.

Consider the accept signals $\Sigma_A$ in $\varphi^*$ and any such signal $\sigma' \in N(\theta') \cap \Sigma_A$. When our algorithm checks the partition with $\sigma'$ as the unique accept signal, it finds a feasible scheme, since the optimum scheme makes $\sigma'$ an accept signal and the algorithm only assigns more probability from $\Theta_A$ to $\sigma'$. The value of this solution is at least $q_{\theta'}$. 
\end{proof}

In addition to this simple algorithm, we show a number of strong hardness results for constrained persuasion. The proofs of the following two theorems are relegated to Appendix~\ref{app: persuasion hardness}.
\begin{theorem}
  \label{thm:verifyHard}
  For any constant $\varepsilon > 0$, constrained persuasion is \classNP-hard to approximate within a factor of $n^{1-\varepsilon}$, even for instances with degree-2 states and degree-1 accepts.
\end{theorem}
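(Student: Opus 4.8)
The plan is to prove the hardness of constrained persuasion by reduction from a suitable NP-hard problem, exploiting the insight from Proposition~\ref{prop:givenSignals} that the difficulty lies entirely in selecting the partition $(\Sigma_A, \Sigma_R)$ of signals. Since we want a $n^{1-\varepsilon}$ inapproximability factor, a natural starting point is \IS\ or \textsc{Max Independent Set}, which is known to be \classNP-hard to approximate within $n^{1-\varepsilon}$; alternatively one could reduce from \SAT\ and amplify. The key modeling challenge is to encode the combinatorial structure so that the value of the optimal signaling scheme corresponds (up to the desired gap) to the size of a maximum independent set, while simultaneously respecting the stringent structural constraints of degree-2 states and degree-1 accepts.

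Concretely, I would attempt the following encoding. Each acceptable state $\theta \in \Theta_A$ should have degree $1$, so it is ``pinned'' to a single signal; this means that whether $\theta$ contributes to the sender's utility is determined solely by whether its unique signal $\sigma$ ends up as an accept signal. The rejectable states, with degree at most $2$, then serve as the ``conflict gadgets'': a rejectable state adjacent to two signals $\sigma, \sigma'$ can be used to force that making both $\sigma$ and $\sigma'$ accept signals is costly, because the receiver's accept condition $2\sum_{\theta \in N(\sigma) \cap \Theta_A}\varphi(\theta,\sigma) \ge \sum_{\theta \in N(\sigma)}\varphi(\theta,\sigma)$ can be violated when too much rejectable mass must be routed to those signals. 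The intent is to set up an instance where a set of signals can simultaneously be accept signals if and only if the corresponding vertices form an independent set in the source graph, so that the achievable sender utility tracks the independence number. One must choose the probabilities $q_\theta$ on acceptable states to weight the objective correctly and choose the rejectable-state probabilities large enough to enforce the conflicts.

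The main obstacle I anticipate is reconciling two tensions. First, the receiver's acceptance threshold is a ``global'' per-signal condition: a signal is an accept signal based on the \emph{total} acceptable versus rejectable mass routed to it, and the sender has freedom in how to split mass. Because acceptable states have degree $1$ (they cannot reroute), but the sender still gets to choose how rejectable mass is distributed among a rejectable state's (at most two) neighbors, I must ensure the conflict gadgets genuinely preclude simultaneous acceptance rather than being dodged by clever mass-splitting. This likely requires that the rejectable ``conflict'' mass be forced onto both endpoints in a way that cannot be fully shunted to one side --- which is delicate given degree-$2$ rejects give the sender a binary routing choice. Second, I must verify that a \emph{feasible} scheme exists realizing any claimed independent set, i.e., that enough acceptable mass is available at each chosen signal to dominate whatever rejectable mass is unavoidably routed there.

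To get the clean $n^{1-\varepsilon}$ bound rather than a mere constant, I expect to introduce an amplification or blow-up: replicate the acceptable-state gadget (e.g.\ attach many degree-$1$ acceptable states, or use a product/tensoring construction as in the delegation reduction sketched earlier via $U^\ell$) so that the gap between the YES and NO cases is polynomially large in the instance size $n$. Here $n = |\Theta|$, so the reduction must keep $|\Theta|$ polynomially bounded while achieving a polynomial gap in sender utility; balancing the parameters of the independent-set gap against $|\Theta|$ is where the quantitative work lies. Once the gadget correctly characterizes achievable accept-signal sets as independent sets and the objective is scaled to read off the independence number, the inapproximability of \IS\ transfers directly, and the degree-$2$/degree-$1$ structural guarantees hold by construction. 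The complete argument, with the probability assignments and both directions of the correctness proof, is given in Appendix~\ref{app: persuasion hardness}.
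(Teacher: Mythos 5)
Your reduction skeleton is exactly the paper's: signals are the vertices of an \IS\ instance, each vertex $v$ carries a degree-1 acceptable state pinned to $\sigma_v$, each edge carries a degree-2 rejectable state adjacent to its two endpoint signals, and accept-signal sets are meant to correspond to independent sets. But the proposal stops precisely where the proof lives. The ``mass-splitting'' obstacle you flag is resolved by one concrete quantitative choice plus a pigeonhole argument, which you never supply: set $q_{\theta_v} = \frac{1}{|V|+3|E|}$ for every vertex state and $q_{\theta_e} = \frac{3}{|V|+3|E|}$ for every edge state. Then for any edge $e=\{v,w\}$, the entire mass $q_{\theta_e}$ must be split between $\sigma_v$ and $\sigma_w$, while the total acceptable mass available at the two signals \emph{combined} is only $\frac{2}{|V|+3|E|} < q_{\theta_e}$; hence at least one of the two signals receives strictly more rejectable than acceptable mass and becomes a reject signal. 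No routing cleverness by the sender can dodge this, because the bound is on the sum over both endpoints, not on either one separately. Your feasibility worry is likewise handled by an explicit scheme: for each $v$ in a maximum independent set $I^*$, route exactly $\frac{1}{3}q_{\theta_e} = q_{\theta_v}$ of one incident edge-state to $\sigma_v$ and all remaining edge mass to endpoints outside $I^*$ (possible since $I^*$ is independent); with ties broken in the sender's favor, $\sigma_v$ is accepted, giving utility exactly $2|I^*|/(|V|+3|E|)$, which matches the upper bound.

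The second genuine problem is your amplification step: it is both unnecessary and, as described, ineffective. No blow-up is needed, because the sender's optimal utility equals $2|I^*|/(|V|+3|E|)$, i.e., it is exactly proportional to $\alpha(G)$, so the known $|V|^{1-\varepsilon}$ \classNP-hardness of approximating \IS\ transfers as a multiplicative gap directly. Conversely, replicating degree-1 acceptable states at a signal scales the YES-case and NO-case utilities by the same factor --- the multiplicative gap is unchanged --- while inflating $|\Theta|$, so it moves you away from, not toward, a bound in terms of $n=|\Theta|$; tensoring fares no better, since products inflate the state space at least as fast as the gap. You have in fact put your finger on a real subtlety the paper glosses over: the gap is $|V|^{1-\varepsilon}$ while $|\Theta| = |V|+|E|$ can be quadratic in $|V|$, and the paper's proof resolves this by measuring hardness in terms of the number of signals ($|\Sigma|=|V|$), not by any amplification. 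So the two things your plan needs to become a proof are the specific $1$-versus-$3$ probability assignment with the pigeonhole argument, and the realization that the \IS\ gap already suffices without blow-up.
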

For instances with degree-1 rejects a similar result follows with an adjustment of the reduction. 
\begin{theorem}
  \label{thm:verifyHard2}
   For any constant $\varepsilon > 0$, constrained persuasion is \classNP-hard to approximate within a factor of $n^{1-\varepsilon}$, even for instances with degree-2 states and degree-1 rejects.
\end{theorem}
In contrast to constrained delegation, the optimal signaling scheme for constrained persuasion does not necessarily have the ``credibility'' property, i.e., it might not represent a best response against the induced behavior of the receiver in the game without commitment power. As such, it is a natural question to ask for the best signaling scheme that enjoys this property: Compute the best signaling scheme of any Bayes-Nash equilibrium of the game without commitment power. We term this problem \emph{constrained persuasion with equilibrium schemes}.

Inspecting the reduction of Theorem~\ref{thm:verifyHard2}, we observe that in each of these instances the optimal signaling scheme has this property, i.e., it is a best response against the induced action scheme of the receiver. As such, constrained persuasion with equilibrium schemes is also \classNP-hard to approximate within a factor of $n^{1-\varepsilon}$, even for instances with degree-2 states and degree-1 rejects.

\begin{corollary}
	\label{cor:verifyHard3}
	For any constant $\varepsilon > 0$, constrained persuasion with equilibrium schemes is \classNP-hard to approximate within a factor of $n^{1-\varepsilon}$, even for instances with degree-2 states and degree-1 rejects.
\end{corollary}

\subsection{Unique Accepts and Rejects}
\label{subsec: PTAS}

\subsubsection{Unique Accepts}

In this section, we examine instances in which there is only a single acceptable state, for which we prove \classNP-hardness and give a PTAS. It will be convenient to state a lemma which allows us to get a better handle on the sender utility in an optimal signaling scheme for a given signal partition. This lemma will be helpful in both our hardness and algorithm analyses.

To state this lemma, we need some additional notation: for every subset $\tSigma \subseteq \Sigma$, we use $\Theta_R(\tSigma)$ to denote $\{\theta \in \Theta_R \mid N(\theta) \subseteq \tSigma\}$; when $\tSigma = \{\sigma\}$ is a singleton, we write $\Theta_R(\sigma)$ in place of $\Theta_R(\{\sigma\})$ for brevity. Moreover, let $N(\tSigma)$ denote $\bigcup_{\sigma \in \tSigma} N(\sigma)$. The lemma can now be stated as follows.

\begin{lemma} \label{lem:unique-accept-characterization}
Suppose that there exists a unique accept state $\theta_a$. For any partition $P = (\Sigma_A, \Sigma_R)$ of the signal space such that $\Sigma_A \ne \emptyset$, we have
\begin{enumerate}
\item There exists a signaling scheme $\varphi$ such that every signal in $\Sigma_A$ is accepted and every signal in $\Sigma_R$ is rejected by the receiver if and only if $\Sigma_A \subseteq N(\theta_a)$ and $\sum_{\theta \in \Theta_R(\Sigma_A)} q_{\theta} \leq q_{\theta_a}$.
\item When the above condition holds, any optimal signaling scheme $\varphi^*$ for the sender has utility equal to \[ \min\{2q_{\theta_a}, \sum_{\theta \in N(\Sigma_A)} q_{\theta}\},\] and, such a signaling scheme can be computed in polynomial time.
\end{enumerate}
\end{lemma}

We remark that the algorithm for finding $\varphi^*$ in the above lemma is a simple greedy algorithm that tries to ``put as much probability mass from rejectable states as possible'' in $\Sigma_A$ and then use the probability mass of the acceptable state $\theta_a$ to ``balance out'' the mass from the rejectable states, so that eventually the signals in $\Sigma_A$ are accepted. This is in contrast to the generic linear program-based algorithm in Proposition~\ref{prop:givenSignals}. The simpler greedy algorithm allows us to consider more concrete conditions and exactly compute the utility as stated in Lemma~\ref{lem:unique-accept-characterization}.

\begin{proof}[Proof of Lemma~\ref{lem:unique-accept-characterization}]
\begin{enumerate}%[leftmargin=*]
\item $(\Rightarrow)$ First, assume that there is such a signaling scheme $\varphi$. Clearly, every signal not in $N(\theta_a)$ must be rejected, which implies that $\Sigma_A \subseteq N(\theta_a)$. Furthermore, for all $\sigma \in \Sigma_A$, we must have $\varphi(\theta_a, \sigma) \geq \sum_{\theta \in N(\sigma) \cap \Theta_R} \varphi(\theta, \sigma)$. Summing up over all $\sigma \in \Sigma_A$ gives
\begin{align*}
q_{\theta_a} &\geq \sum_{\sigma \in \Sigma_A} \sum_{\theta \in N(\sigma) \cap \Theta_R} \varphi(\theta, \sigma) \\
&\geq \sum_{\sigma \in \Sigma_A} \sum_{\theta \in \Theta_R(\Sigma_A)} \varphi(\theta, \sigma)
= \sum_{\theta \in \Theta_R(\Sigma_A)} \sum_{\sigma \in \Sigma_A} \varphi(\theta, \sigma)
= \sum_{\theta \in \Theta_R(\Sigma_A)} q_{\theta}.
\end{align*}

$(\Leftarrow)$ Assume that $\emptyset \ne \Sigma_A \subseteq N(\theta_a)$ and $\sum_{\theta \in \Theta_R(\Sigma_A)} q_{\theta} \leq q_{\theta_a}$. We may construct a desired signaling scheme $\varphi$ as follows. First, we assign $\varphi(\theta, \sigma)$ arbitrarily for all $\theta \in \Theta_R(\Sigma_A)$. Then, we assign $\varphi(\theta_a, \sigma)$ such that $\varphi(\theta_a, \sigma) = 0$ for all $\sigma \notin \Sigma_A$ and that $\varphi(\theta_a, \sigma) \geq \sum_{\theta \in \Theta_R(\Sigma_A)} \varphi(\theta, \sigma)$ for all $\sigma \in \Sigma_A$. The former is possible because $\Sigma_A \ne \emptyset$ and the latter possible because $\sum_{\theta \in \Theta_R(\Sigma_A)} q_{\theta} \leq q_{\theta_a}$. Finally, for each $\theta \in \Theta_R \setminus \Theta_R(\Sigma_A)$, assign $\varphi(\theta, \sigma) = 0$ for all $\sigma \in \Sigma_A$. It is straightforward from the construction that this $\varphi$ is a desired signaling scheme.
\item First, we will show that any signaling scheme $\varphi$ has utility at most $\min\{2q_{\theta_a}, \sum_{\theta \in N(\Sigma_A)} q_{\theta}\}$ for the sender. Observe that the upper bound $2q_{\theta_a}$ follows trivially from Lemma~\ref{lem:trivialUB}. Thus, it suffices for us to prove that the utility is at most $\sum_{\theta \in N(\Sigma_A)} q_{\theta}$. To do so, let us rearrange the utility as follows:
\begin{align*}
\sum_{\sigma \in \Sigma_A} \sum_{\theta \in N(\sigma)} \varphi(\theta, \sigma) \leq \sum_{\theta \in N(\Sigma_A)} \sum_{\sigma \in N(\theta)} \varphi(\theta, \sigma)
= \sum_{\theta \in N(\Sigma_A)} q_{\theta}.
\end{align*}

Finally, we will construct a signaling scheme $\varphi^*$ with utility equal to $\min\{2q_{\theta_a}, \sum_{\theta \in N(\Sigma_A)} q_{\theta}\}$. The algorithm is a modification of the algorithm from the first part, and it works in four steps:
\begin{itemize}
\item For every $\theta \in \Theta_R(\Sigma_A)$, assign $\varphi(\theta, \sigma)$ arbitrarily.
\item For every $\theta \in (N(\Sigma_A) \cap \Theta_R) \setminus \Theta_R(\Sigma_A)$, assign $\varphi(\theta, \sigma)$ so that 
\[ \sum_{\sigma \in \Sigma_A} \sum_{\theta \in N(\sigma) \cap \Theta_R} \varphi(\theta, \sigma) = \min\{q_{\theta_a}, \sum_{\theta \in N(\Sigma_A) \cap \Theta_R} q_{\theta}\}.\] 
(Note that this step is possible because $\sum_{\theta \in \Theta_R(\Sigma_A)} q_{\theta} \leq q_{\theta_a}$.)
\item Assign $\varphi(\theta_a, \sigma)$ so that $\varphi(\theta_a, \sigma) = 0$ for all $\sigma \notin \Sigma_A$, and that $\varphi(\theta_a, \sigma) \geq \sum_{\theta \in N(\sigma) \cap \Theta_R} \varphi(\theta, \sigma)$ for all $\sigma \in \Sigma_A$. (Note that this is possible because, from the previous step, we must have $\sum_{\sigma \in \Sigma_A} \sum_{\theta \in N(\sigma) \cap \Theta_R} \varphi(\theta, \sigma) \leq q_{\theta_a}$.)
\item All other remaining assignments are made arbitrarily in order to turn $\varphi$ into a feasible signaling scheme.
%For every $\theta \notin N(\Sigma_A)$, assign $\varphi(\theta, \sigma)$ so that $\varphi(\theta, \sigma) = 0$ for all $\sigma \in \Sigma_A$.
\end{itemize}
It is straightforward to check that $\varphi^*$ is the desired signaling scheme with utility equal to 
\[ q_{\theta_a} + \min\{q_{\theta_a}, \sum_{\theta \in N(\Sigma_A) \cap \Theta_R} q_{\theta}\} = \min\{2q_{\theta_a}, \sum_{\theta \in N(\Sigma_A)} q_{\theta}\}.\]
\end{enumerate}
\end{proof}

With Lemma~\ref{lem:unique-accept-characterization} ready, we now prove \classNP-hardness of the problem.

\begin{theorem}
  Constrained persuasion with unique accepts is \classNP-hard.
\end{theorem}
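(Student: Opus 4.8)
The plan is to prove \classNP-hardness of constrained persuasion with unique accepts by reducing from a known \classNP-hard problem whose structure matches the combinatorial condition extracted in Lemma~\ref{lem:unique-accept-characterization}. Since there is a single acceptable state $\theta_a$, by Lemma~\ref{lem:unique-accept-characterization} the sender's attainable utility for a signal partition with accept set $\Sigma_A \subseteq N(\theta_a)$ is exactly $\min\{2q_{\theta_a}, \sum_{\theta \in N(\Sigma_A)} q_\theta\}$, provided the feasibility condition $\sum_{\theta \in \Theta_R(\Sigma_A)} q_\theta \le q_{\theta_a}$ holds. Thus the optimization reduces to a purely combinatorial problem: choose a subset $\Sigma_A$ of signals adjacent to $\theta_a$ so as to maximize the covered rejectable mass $\sum_{\theta \in N(\Sigma_A) \cap \Theta_R} q_\theta$, subject to the budget constraint that the mass of rejectable states \emph{fully contained} in $\Sigma_A$ (those $\theta$ with $N(\theta) \subseteq \Sigma_A$) does not exceed $q_{\theta_a}$. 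This tension between ``states you cover'' and ``states you must fully swallow'' is exactly the kind of packing/covering interaction that encodes hard problems.

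First I would set up the reduction so that the budget constraint becomes a knapsack- or partition-type constraint. The natural candidate is a reduction from \PART\ (or \textsc{Subset-Sum}), which would give weak \classNP-hardness and is consistent with the table entry for this row. Concretely, I would create one signal $\sigma_i$ per item $i$ of the \PART\ instance, all adjacent to $\theta_a$. For each item I would attach a rejectable state $\theta_i$ with $N(\theta_i) = \{\sigma_i\}$ (a degree-1 reject, hence fully contained in $\Sigma_A$ whenever $\sigma_i \in \Sigma_A$) carrying probability proportional to the item's value; these contribute to the budget $\sum_{\theta \in \Theta_R(\Sigma_A)} q_\theta$. Simultaneously I would attach rejectable states that $\sigma_i$ helps \emph{cover} but does not swallow---states with additional neighbors outside $\Sigma_A$---so that including $\sigma_i$ raises the covered mass $\sum_{\theta \in N(\Sigma_A)} q_\theta$. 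Setting $q_{\theta_a}$ equal to half the total item weight forces a \PART-style balance: the sender can push the covered mass up to the threshold $2q_{\theta_a}$ (saturating the trivial upper bound of Lemma~\ref{lem:trivialUB}) if and only if the items admit an exact partition.

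I would then argue the two directions. In the YES direction, a valid partition yields a choice of $\Sigma_A$ whose swallowed mass exactly meets the budget $q_{\theta_a}$ while its covered mass reaches $2q_{\theta_a}$, so by part (2) of Lemma~\ref{lem:unique-accept-characterization} the sender achieves utility exactly $2q_{\theta_a} = \min\{2q_{\theta_a}, \sum_{\theta \in N(\Sigma_A)} q_\theta\}$. In the NO direction, I would show that any feasible $\Sigma_A$ either violates the budget (so it is infeasible by part (1)) or leaves strictly less than $2q_{\theta_a}$ of covered mass, hence the sender utility is bounded strictly below the target; a gap argument using the integrality/granularity of the \PART\ weights separates the two cases. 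The objective $\min\{2q_{\theta_a}, \sum_{\theta\in N(\Sigma_A)} q_\theta\}$ from the lemma is what makes this clean, since it converts the two-sided balance of \PART\ into a single scalar to optimize.

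The main obstacle I anticipate is the interaction between the \emph{budget} set $\Theta_R(\Sigma_A)$ and the \emph{coverage} set $N(\Sigma_A)\cap\Theta_R$: these are defined by different conditions ($N(\theta)\subseteq \Sigma_A$ versus $N(\theta)\cap\Sigma_A\ne\emptyset$), and a careless gadget lets the sender cover mass ``for free'' without paying into the budget, collapsing the reduction. The delicate part of the construction is therefore arranging the neighborhoods so that the only way to increase covered mass toward $2q_{\theta_a}$ is to include signals whose swallowed degree-1 rejects load the budget by precisely the corresponding amount, thereby faithfully simulating the ``sum equals half'' constraint of \PART. I expect getting these two set-systems to be in exact correspondence---so that the feasibility threshold and the coverage threshold are met simultaneously exactly when the partition exists---to require the most care, and it is where I would spend the bulk of the verification.
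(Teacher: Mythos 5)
Your plan is workable and does prove the literal statement, but it takes a genuinely different route from the paper. The paper's proof reduces from \MkVC: each signal $\sigma_v$ gets an auxiliary degree-1 rejectable state of large mass $\propto (|E|+1)$, so that the budget condition $\sum_{\theta \in \Theta_R(\Sigma_A)} q_\theta \le q_{\theta_a}$ of Lemma~\ref{lem:unique-accept-characterization} collapses to the cardinality constraint $|\Sigma_A| \le k$, while the edge states (mass $\propto 1$) supply the coverage term, so the sender utility tracks the \MkVC\ objective $|\bigcup_{v \in V'} E(v)|$. Because all probabilities in that reduction are polynomially bounded rationals, it yields \emph{strong} \classNP-hardness, which is what the paper actually claims (see the table and introduction) and needs: together with the PTAS of Section~\ref{subsec: PTAS}, strong hardness rules out an FPTAS. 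Your reduction from \PART\ only gives weak \classNP-hardness; it is silent about pseudopolynomial algorithms and hence cannot support the paper's ``no FPTAS'' conclusion. Interestingly, your construction is essentially the paper's own reduction for \emph{laminar signals} (Appendix~\ref{app: persuasion misc}): that instance has a single acceptable state adjacent to all signals, so it is simultaneously a unique-accepts instance, and it establishes exactly the weak hardness you are aiming for.

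One concrete simplification you should make: the separate ``coverage-only'' states you propose are both unnecessary and, as described, ill-defined --- a rejectable state that is ``covered but never swallowed'' needs a neighbor signal that can never enter $\Sigma_A$, i.e., a signal outside $N(\theta_a)$, since $\Sigma_A$ is chosen adversarially among all subsets of $N(\theta_a)$. Drop them entirely and let the degree-1 budget states double as the coverage: with $q_{\theta_i} \propto a_i$ and $q_{\theta_a}$ equal to half the total rejectable mass, one has coverage $\sum_{\theta \in N(\Sigma_A)} q_\theta = q_{\theta_a} + \sum_{i \in S} q_{\theta_i}$ and budget $\sum_{i \in S} q_{\theta_i} \le q_{\theta_a}$, so by Lemma~\ref{lem:unique-accept-characterization} the optimal utility equals $2q_{\theta_a}$ (the bound of Lemma~\ref{lem:trivialUB}) if and only if some subset sums to exactly $q_{\theta_a}$, i.e., iff the \PART\ instance is a YES instance. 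This removes the delicate two-set-system calibration you flagged as the main obstacle; the obstacle disappears once budget and coverage are the same states.
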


\begin{proof}
  We reduce from the \MkVC\ problem, where we have a graph $G=(V,E)$. The goal is to choose a set $V'$ of $k$ vertices in order to maximize the number of edges incident to at least one vertex in $V'$. For every vertex $v \in V$, let $E(v)$ be the set of incident edges, then we try to pick a subset $V'$ of $k$ vertices to maximize $|\bigcup_{v \in V'} E(v)|$. 
   
  For each edge $e \in E$, we introduce a rejectable state $\theta_e$ with $q_{\theta_e} = \frac{1}{(|V|+k)(|E|+1)+2|E|}$. For each vertex $v$ we introduce a signal $\sigma_v$. The graph $H$ between states and signals expresses the incident property of edges and vertices. In addition, for each signal $\sigma$, we introduce auxiliary rejectable states that have $\sigma$ as their unique signal. Each auxiliary state $\theta$ has $q_{\theta} = \frac{|E| + 1}{(|V|+k)(|E|+1)+2|E|}$. Finally, the unique acceptable state $\theta_a$ is incident to all signals and has probability $$q_{\theta_a} = \frac{k(|E|+1) + |E|}{(|V|+k)(|E|+1)+2|E|}\enspace.$$  
  
  From Lemma~\ref{lem:unique-accept-characterization}, the optimal signaling scheme has sender utility equal to
  \begin{align*}
  \max_{\Sigma_A} \min\left\{2q_{\theta_a}, \sum_{\theta \in N(\Sigma_A)} q_{\theta}\right\},
  \end{align*}
  where the maximum is over non-empty $\Sigma_A \subseteq \Sigma$ such that $\sum_{\theta \in \Theta_R(\Sigma_A)} q_{\theta} \leq q_{\theta_a}$. Notice that, in our construction, this condition is satisfied iff $|\Sigma_A| \leq k$. This means that $\Sigma_A = \{\sigma_v\}_{v \in V'}$ for some subset $V'$ of size at most $k$. It is also not hard to see that 
  \begin{align*}
  \min\left\{2q_{\theta_a}, \sum_{\theta \in N(\Sigma_A)} q_{\theta}\right\} = \sum_{\theta \in N(\Sigma_A)} q_{\theta} = \frac{(|V'| + k)(|E|+1) + |\bigcup_{v \in V'} E(v)|}{(|V|+k)(|E|+1)+|E|}.
  \end{align*}
  In other words, the utility is maximized iff $V'$ is an optimal solution to the instance of \MkVC. Since the latter is \classNP-hard, we can conclude that constrained persuasion with unique accepts is also \classNP-hard. 
\end{proof}

We next give a PTAS for the problem. Before we formalize our PTAS, let us give an informal intuition. Observe that the condition in Lemma~\ref{lem:unique-accept-characterization} implies that $q_{\theta_a} \geq \sum_{\sigma \in \Sigma_A} \left(\sum_{\theta \in \Theta_R(\sigma)} q_{\theta}\right)$. This latter constraint is a \emph{knapsack constraint}. One generic strategy to solve knapsack problems is to first brute-force enumerate all possibilities of selecting ``heavy items'', which in our case are the signals with large $\sum_{\theta \in \Theta_R(\sigma)} q_{\theta}$. Then, use a simple greedy algorithm for the remaining ``light items''. Our PTAS follows this blueprint. However, since neither our constraints nor our objective function are exactly the same as in knapsack problems, we cannot use results from there directly and have to carefully argue the approximation guarantee ourselves.

\begin{theorem}
For constrained persuasion with unique accepts, for every fixed $\varepsilon \in (0, 1]$, Algorithm~\ref{algo: ptas for unique accepts} runs in time $m^{O(1/\varepsilon)} n^{O(1)}$ and outputs a $(1+\varepsilon)$-approximate solution.
\end{theorem}

\begin{algorithm}[t]
\caption{A PTAS for Constrained Persuasion with Unique Accepts.}\label{algo: ptas for unique accepts}
\DontPrintSemicolon
\KwIn{Graphs $H$ with a single acceptable state $\theta_a$, and $\varepsilon > 0$.}
\KwOut{Signaling scheme $\varphi_{\ALG}$.}

Set $\Sigma_{\geq \varepsilon} =  \left\{ \sigma \in \Sigma \growingmid \sum_{\theta \in \Theta_R(\sigma)} q_{\theta} \geq \varepsilon q_{\theta_a}\right\}$ and $\Sigma_{< \varepsilon} = \Sigma \setminus \Sigma_{\geq \varepsilon}$.\;
Initialize $\varphi_{\ALG}$ as an arbitrary signaling scheme.\;
\For{every (possibly empty) subset $S \subseteq \Sigma_{\geq \varepsilon}$ of size at most $1 / \varepsilon$}{\label{step:enum-large-signals}
Let $T = S$\;
\While{$\sum_{\theta \in \Theta_R(T)} q_{\theta} \leq q_{\theta_a}$}{\label{step:loop-adding-signals}
If the utility of $\varphi_{\ALG}$ is less than $\min\{2q_{\theta_a}, \sum_{\theta \in N(T)} q_{\theta}\}$, then let $\varphi_{\ALG}$ be the optimal\\
\ \ signaling scheme consistent with signaling partition $\Sigma_A = T$, which can be computed in \\
\ \ polynomial time due to Lemma~\ref{lem:unique-accept-characterization}. \label{step:update-best-signaling}\;
If $T = \Sigma_{< \varepsilon} \cap N(\theta_a)$, break from the loop.\;
\ \ Otherwise, add an arbitrary signal from $(\Sigma_{< \varepsilon} \cap N(\theta_a)) \setminus T$ to $T$.
  }
  }
\end{algorithm}

\begin{proof}
It is clear that our algorithm runs in time $m^{O(1/\varepsilon)} n^{O(1)}$. Let $\varphi^*$ be any optimal signaling scheme, with utility $\OPT$ for the sender. We prove that the utility of $\varphi_{\ALG}$ is at least $(1 - 0.5\epsilon) \OPT$. 

Without loss of generality we assume that the utility of $\varphi^*$ is non-zero. Now, let $(\Sigma^*_A, \Sigma^*_R)$ denote the signal partition of $\varphi^*$; since the utility of $\varphi^*$ is non-zero, we must have $\Sigma^*_A \ne \emptyset$. Furthermore, the first item of Lemma~\ref{lem:unique-accept-characterization} implies that $\Sigma^*_A \cap \Sigma_{\geq \varepsilon}$ must be of size at most $1 / \varepsilon$. As a result, our algorithm must consider $S = (\Sigma^*_A \cap \Sigma_{\geq \varepsilon})$ in the for-loop~\eqref{step:enum-large-signals}. For this particular $S$, let $T'$ denote the largest $T$ for which Line~\eqref{step:update-best-signaling} is executed. We next consider two cases, based on whether or not we have $T' = S \cup (\Sigma_{< \varepsilon} \cap N(\theta_a))$.

\begin{itemize}%[leftmargin=*]
\item Case I: $T' = S \cup (\Sigma_{< \varepsilon} \cap N(\theta_a))$. Notice that $T' \supseteq \Sigma_A^*$. Lemma~\ref{lem:unique-accept-characterization}, implies that the utility of $\varphi_{\ALG}$ must be at least $\OPT$.

\item Case II:  $T' \ne S \cup (\Sigma_{< \varepsilon} \cap N(\theta_a))$. This means that there exists a signal $\sigma^* \in (\Sigma_{< \varepsilon} \cap N(\theta_a))$ whose addition to $T'$ breaks the condition of the while-loop~\eqref{step:loop-adding-signals}, i.e.,
$q_{\theta_a} < \sum_{\theta \in \Theta_R(T' \cup \{\sigma^*\})} q_{\theta}$.
The right hand side of this inequality is equal to
\begin{align*}
\sum_{\substack{\theta \in \Theta_R \\ N(\theta) \subseteq (T' \cup \{\sigma^*\})}} q_{\theta} &\leq \sum_{\substack{\theta \in \Theta_R\\ N(\theta) \cap T' \ne \emptyset}} q_{\theta} + \sum_{\substack{\theta \in \Theta_R \\ N(\theta) = \{\sigma^*\}}} q_{\theta} \\
&= \sum_{\theta \in N(T') \cap \Theta_R} q_{\theta} + \sum_{\theta \in \Theta_R(\sigma^*)} q_\theta \\
&< \sum_{\theta \in N(T') \cap \Theta_R} q_{\theta} + \varepsilon q_{\theta_a}\enspace,
\end{align*}
where the last inequality since $\sigma$ belongs to $\Sigma_{< \varepsilon}$. Combining the two inequalities we have
\begin{align} \label{eq:lower-bound-util-intermediate}
\sum_{\theta \in N(T') \cap \Theta_R} q_{\theta} > (1 - \varepsilon)q_{\theta_a}\enspace.
\end{align}

On the other hand, from Lemma~\ref{lem:unique-accept-characterization}, when we execute line Line~\eqref{step:update-best-signaling} for $T = T'$, it must result in a signaling scheme of utility
\begin{align*}
\min\left\{2q_{\theta_a}, \sum_{\theta \in N(T')} q_{\theta} \right\} &= \min\left\{2q_{\theta_a}, q_{\theta_a} + \sum_{\theta \in N(T') \cap \Theta_R} q_{\theta} \right\} \overset{\eqref{eq:lower-bound-util-intermediate}}{>} (2 - \epsilon) q_{\theta_a}\enspace,
\end{align*}
which is at least $(1 - 0.5\varepsilon)\OPT$ due to Lemma~\ref{lem:trivialUB}.
\end{itemize}
Hence, we can conclude that our algorithm always outputs a signaling scheme with sender utility at least $(1 - 0.5\varepsilon)\OPT$. In other words, its approximation ratio is at most $\frac{1}{1 - 0.5\varepsilon} \leq 1 + \varepsilon$. 
\end{proof}

\subsubsection{Unique Rejects}
In contrast to the case with unique accepts studied in Section~\ref{subsec: PTAS} above, the problem can be solved in polynomial time for the unique rejects case. The main insight is that we can restrict attention to signaling schemes with at most $1$ reject signal, and then use Proposition~\ref{prop:givenSignals}.

\begin{lemma}
  For constrained persuasion with unique rejects there is a polynomial-time algorithm to compute the optimal signaling scheme $\varphi^*$.
\end{lemma}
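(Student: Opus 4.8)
The plan is to exploit the defining feature of \emph{unique rejects}, namely that there is a single rejectable state $\theta_r$ (so $\Theta_R=\{\theta_r\}$ and $q_R=q_{\theta_r}$), and to follow the hint that no optimal scheme needs more than one reject signal. Concretely, I would first prove a structural lemma bounding the number of reject signals, and then turn it into an algorithm that enumerates the single reject signal and, for each resulting signal partition, invokes the linear program of Proposition~\ref{prop:givenSignals} to compute the sender-optimal scheme consistent with that partition.

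\textbf{Structural lemma.} I claim there is an optimal scheme $\varphi^*$ whose induced partition $(\Sigma_A,\Sigma_R)$ satisfies $|\Sigma_R|\le 1$, and I would prove this by two exchange arguments applied to an arbitrary optimal scheme. First, every reject signal contributes $0$ to the sender's utility, so acceptable mass is never usefully placed on a reject signal: if an acceptable state $\theta_a$ has a neighbor in $\Sigma_A$, moving its mass from a reject signal to that accept neighbor strictly increases the accept mass while only helping the accept condition $\sum_{\theta\in N(\sigma)\cap\Theta_A}\varphi(\theta,\sigma)\ge\varphi(\theta_r,\sigma)$ there, contradicting optimality. Hence acceptable mass can sit on reject signals only for ``trapped'' states $\theta_a$ with $N(\theta_a)\subseteq\Sigma_R$; for such a state I would promote one signal $\sigma\in N(\theta_a)$ to an accept signal, displacing any $\theta_r$-mass on $\sigma$ onto a single designated sink, which does not lower the utility. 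Second, because $\theta_r$ is the \emph{only} rejectable state, the sole remaining purpose of reject signals is to absorb the part of $q_{\theta_r}$ that cannot be ``covered'' by acceptable mass on accept signals; since this excess all originates from one state, it can be poured onto a single signal of $N(\theta_r)$ (and if there is no excess, no reject signal is needed). Combining the two arguments yields a scheme with at most one reject signal and utility equal to $\OPT$.

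\textbf{Algorithm and correctness.} For each candidate $\sigma_r\in\Sigma$ I form the partition $\Sigma_R=\{\sigma_r\}$, $\Sigma_A=\Sigma\setminus\{\sigma_r\}$, and I also consider $\Sigma_R=\emptyset$, $\Sigma_A=\Sigma$. For each of these at most $m+1$ partitions I solve the linear program of Proposition~\ref{prop:givenSignals} (specialized to the binary-action utilities), which either reports infeasibility or returns the sender-optimal scheme consistent with that partition; the algorithm outputs the best scheme found. Every LP value is the utility of a genuine scheme, hence at most $\OPT$; conversely, the structural lemma produces an optimal scheme whose unique reject signal (if any) is some $\sigma_r$, with every other signal accepting or unused, so this scheme is feasible for the partition $(\Sigma\setminus\{\sigma_r\},\{\sigma_r\})$ and the corresponding LP value is at least $\OPT$. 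Thus the maximum over the enumeration equals $\OPT$, and since we solve $O(m)$ polynomial-size LPs the total running time is polynomial.

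\textbf{Main obstacle.} The delicate part is the structural lemma, in particular the treatment of ``trapped'' acceptable states whose entire neighborhood would otherwise lie in $\Sigma_R$, together with the need to keep all of $\theta_r$'s displaced mass on one reject sink. I would make these exchanges rigorous by ordering them: first route every freely-movable acceptable mass onto accept signals, then spread $\theta_r$'s mass greedily across accept signals up to their acceptable-mass capacity, and finally deposit the remainder on a single signal of $N(\theta_r)$, checking at each step that the per-signal accept condition is preserved and the sender utility never decreases.
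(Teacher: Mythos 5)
Your proposal is correct and follows essentially the same route as the paper: the paper likewise argues that at most one reject signal is ever needed (necessarily in $N(\theta_r)$, since signals outside it carry only acceptable mass and are thus accepted), enumerates that signal over $N(\theta_r)$, and invokes the linear program of Proposition~\ref{prop:givenSignals} for each resulting partition, handling the all-accept case separately where you instead include $\Sigma_R=\emptyset$ in the enumeration. The one step to tighten is the ``designated sink'' displacement in your structural lemma --- dumping $\theta_r$-mass onto a sink that is currently an \emph{accept} signal can flip it and lose its entire mass, so the sink must be an existing reject signal (i.e., merge all $\theta_r$-mass from the other reject signals onto one of them, whereupon the vacated signals carry only acceptable mass and become accept signals, never decreasing utility) --- which is exactly what the ordered rebuild in your final paragraph accomplishes.
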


\begin{proof}
  We denote the single rejectable state by $\theta_r$ and its set of incident signals by $\Sigma'$. Note that all signals in $\Sigma \setminus \Sigma'$ must be accept signals. Our scheme $\varphi^*$ sends deterministic signals for acceptable states, but possibly a randomized one for $\theta_r$. First, for every acceptable state we pick an incident signal from $\Sigma'$ if possible. Now consider two cases.
  
  If the total probability mass of acceptable states incident to $\Sigma'$ is more than $q_{\theta_r}$, 
  when the state of nature is $\theta_r$ our signaling scheme will randomize over signals in $\Sigma'$ in a way that all signals become accept signals. Consequently, all $\sigma \in \Sigma$ are accept signals. This is obviously optimal for the sender.
  
  If the total probability mass of acceptable states incident to $\Sigma'$ is less than $q_{\theta_r}$, it suffices to create a single reject signal in $\Sigma'$. Suppose $\sigma \in \Sigma'$ is chosen to become the unique reject signal. Then we can use Proposition~\ref{prop:givenSignals} to compute an optimal signaling scheme with $\Sigma_R = \{\sigma\}$ and $\Sigma_A = \Sigma \setminus \{\sigma\}$. There are at most $m$ signals in $\Sigma'$, hence, constructing an optimal scheme for each of them can be done in polynomial time. Among these $m$ schemes, the one that maximizes the sender utility is an optimal scheme $\varphi^*$. 
\end{proof}

\subsection{Global Signal and Laminarity}\label{subsec: persuasion misc}

\subsubsection{Global Signal}
In this section, we study the natural scenario with a global signal $\sigma_0 \in \Sigma$ that can be sent from every state of nature. We think of this as a ``stay silent'' or ``no evidence'' option. We consider the general persuasion problem (as in Proposition~\ref{prop:givenSignals}) with $k = 2$ actions. 

\begin{theorem}
  \label{thm:2-action-persuasion}
  For the two-action constrained persuasion problem with global signal there is a polynomial-time algorithm to compute the optimal signaling scheme $\varphi^*$.
\end{theorem}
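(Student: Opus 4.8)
The plan is to reduce the search over signal partitions to just two candidates and then invoke Proposition~\ref{prop:givenSignals} on each. Recall that for any fixed partition of $\Sigma$ into the sets of signals inducing each action, Proposition~\ref{prop:givenSignals} computes an optimal scheme consistent with that partition via a polynomial-size LP, so the only real difficulty is choosing the partition. Fix one global signal $\sigma_0$ (with $N(\sigma_0) = \Theta$). Since $\sigma_0$ can be sent from every state, I claim that with $\mathcal{A} = \{a_1, a_2\}$ the only partitions one ever needs to consider are $(\Sigma_{a_1}, \Sigma_{a_2}) = (\Sigma \setminus \{\sigma_0\}, \{\sigma_0\})$ and its mirror image $(\{\sigma_0\}, \Sigma \setminus \{\sigma_0\})$. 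The algorithm then simply solves the LP of Proposition~\ref{prop:givenSignals} for both of these partitions and returns the better of the two schemes; this is clearly polynomial time.

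Correctness rests on a consolidation lemma: in any scheme, all probability mass routed to signals inducing the same action as $\sigma_0$ can be rerouted onto $\sigma_0$ without changing the sender's utility and while keeping the scheme feasible. Concretely, let $\varphi^*$ be an optimal scheme and let $a$ be the action the receiver takes at $\sigma_0$ (choosing $a$ arbitrarily if $\sigma_0$ is unused); write $a'$ for the other action. For every \emph{other} signal $\sigma$ on which the receiver also plays $a$, I would set $\varphi'(\theta,\sigma) = 0$ and add the removed mass $\varphi^*(\theta,\sigma)$ to $\varphi'(\theta,\sigma_0)$, leaving every signal on which the receiver plays $a'$ untouched. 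Feasibility $\sum_{\sigma \in N(\theta)} \varphi'(\theta,\sigma) = q_\theta$ is preserved because $\sigma_0 \in N(\theta)$ for every $\theta$ (it is global), and the sender's utility is unchanged because her payoff rate $u_s(a,\theta)$ depends only on the induced action, not on the particular signal carrying the mass.

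The point that needs care --- and the main obstacle --- is verifying that $\sigma_0$ still induces $a$ after absorbing all this mass. This holds because the receiver's incentive at $\sigma_0$ in $\varphi'$ is exactly the sum, over all signals $\sigma$ that induced $a$ in $\varphi^*$, of the per-signal incentive terms $\sum_{\theta} \varphi^*(\theta,\sigma)\bigl(u_r(a,\theta) - u_r(a',\theta)\bigr)$; each such term is nonnegative precisely because $a$ was a best response at $\sigma$ in $\varphi^*$, so their sum is nonnegative and $a$ remains a best response at $\sigma_0$. After consolidation, $\varphi'$ is feasible for the partition in which $\sigma_0$ is the unique signal inducing $a$ while every other signal is placed on the $a'$ side (signals that end up with zero mass satisfy their best-response constraint vacuously). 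Hence the LP of Proposition~\ref{prop:givenSignals} for this partition attains value at least the utility of $\varphi'$, which equals $\OPT$; and since every LP-feasible scheme is a genuine signaling scheme, neither of the two LPs can exceed $\OPT$. Taking the maximum over the two partitions therefore yields exactly $\OPT$, proving Theorem~\ref{thm:2-action-persuasion}.
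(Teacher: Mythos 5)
Your proposal is correct and follows essentially the same route as the paper: consolidate all mass from signals inducing the same action as the global signal $\sigma_0$ onto $\sigma_0$ itself (feasible since $N(\sigma_0)=\Theta$), observe that the receiver's preference at $\sigma_0$ is preserved and the sender's utility is unchanged, and then solve the LP of Proposition~\ref{prop:givenSignals} for the two partitions $(\{\sigma_0\},\Sigma\setminus\{\sigma_0\})$ and $(\Sigma\setminus\{\sigma_0\},\{\sigma_0\})$, taking the better scheme. Your only refinements are cosmetic: you merge all same-action signals at once rather than one at a time, and you spell out the incentive check at $\sigma_0$ as a sum of nonnegative per-signal terms, which the paper asserts without the explicit computation.
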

\begin{proof}
The main idea is that the problem can be reduced to Bayesian persuasion using Proposition~\ref{prop:givenSignals}. Consider an optimal partition $(\Sigma_A, \Sigma_R)$ of the signal set into accept and reject signals, and w.l.o.g.\ assume that $\sigma_0 \in \Sigma_A$. Given a scheme $\varphi$ such that there is $\sigma_1 \in \Sigma_A$ with $\sigma_1 \neq \sigma_0$, we design another scheme $\varphi'$ that never uses $\sigma_1$: we set $\varphi'(\theta, \sigma_0) = \varphi(\theta, \sigma_0) + \varphi(\theta, \sigma_1)$ and $\varphi'(\theta, \sigma_1) = 0$ for all $\theta \in \Theta$.  Since in $\varphi'$ the signal $\sigma_1$ is never issued, w.l.o.g.\ we can assume that in this scheme $\sigma_1 \in \Sigma_R$. Moreover, in $\varphi'$ the signal $\sigma_0$ represents an accumulation of the accept signals $\sigma_1$ and $\sigma_0$ from $\varphi$, so in both schemes the receiver prefers the accept action when given $\sigma_0$. As a consequence, both $\varphi'$ and $\varphi$ yield the same expected utility for the sender. 
Therefore, by repeating this argument, we see that there is an optimal scheme with $\Sigma_A = \{\sigma_0\}$ and $\Sigma_R = \Sigma \setminus \{\sigma_0\}$.

Thus, we only need to consider two partitions, $(\{\sigma_0\}, \Sigma \setminus \{\sigma_0\})$ and $ (\Sigma \setminus \{\sigma_0\}, (\{\sigma_0\})$. For each of the partitions we solve the LP in Proposition~\ref{prop:givenSignals}. If the LP is feasible, we obtain an optimal scheme for the corresponding partition of signals. The better of the two schemes represents an optimal scheme $\varphi^*$ for the general 2-action persuasion problem with silence. 
\end{proof}

\paragraph{Proof of Membership.}
Proof of membership structure is a special case of the global signal property. Here we can limit attention to $|\Theta_R|$ signals of the form $\sigma_\theta = \{ \theta \}$, corresponding to states $\theta \in \Theta_R$, as well as the global signal corresponding to $\Theta$. If $q_A \ge q_R$, the optimal scheme $\varphi$ only sends the global signal and obtains sender utility 1. Otherwise, $\varphi$ sends the global signal for all acceptable states, and for an arbitrary portion of $q_A$ from the rejectable states. The global signal is still acceptable, and therefore the sender gets expected utility $2q_A$. By Lemma~\ref{lem:trivialUB} the scheme is optimal.

\begin{proposition}
    For constrained persuasion with proof of membership there is a polynomial-time algorithm to compute an optimal signaling scheme $\varphi^*$.
\end{proposition}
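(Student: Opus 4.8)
The plan is to produce an explicit optimal scheme whose support involves only polynomially many signals, and then certify its optimality against the upper bound of Lemma~\ref{lem:trivialUB}. The key observation is that although $\Sigma = 2^{\Theta}$ is exponentially large, proof of membership is a special case of the global-signal property: the set $\sigma_0 = \Theta$ satisfies $\theta \in \sigma_0$ for every $\theta$, so $N(\sigma_0) = \Theta$ is a global signal. I would therefore route as much acceptable probability mass as possible through the single accept signal $\sigma_0$, and absorb any excess rejectable mass using the singleton signals $\{\theta\}$ for $\theta \in \Theta_R$. Each such singleton is forced to be a reject signal, since $N(\{\theta\}) = \{\theta\}$ contains no acceptable state, so its accept condition (the displayed inequality in Section~4) fails whenever it carries positive mass. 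This reduces the effective signal set to $\{\sigma_0\} \cup \{\,\{\theta\} : \theta \in \Theta_R\,\}$, of size at most $|\Theta_R| + 1$.

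I would then split into two cases according to whether $q_A \ge q_R$ (equivalently $q_A \ge 1/2$, using $q_A + q_R = 1$). If $q_A \ge q_R$, every state sends $\sigma_0$; the accept condition reads $2 q_A \ge q_A + q_R$, which holds, so $\sigma_0$ is an accept signal and the sender obtains utility $1$. If instead $q_A < q_R$, I would send $\sigma_0$ for all the acceptable mass $q_A$ together with exactly $q_A$ units of rejectable mass (possible since $q_A < q_R$); the accept condition for $\sigma_0$ then holds with equality, $2 q_A \ge 2 q_A$, so by the tie-breaking convention in favor of the sender $\sigma_0$ stays an accept signal. The leftover rejectable mass $q_R - q_A$ is routed to the singleton reject signals $\{\theta\}$, contributing nothing. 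The resulting sender utility is $2 q_A$.

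To conclude, I would invoke Lemma~\ref{lem:trivialUB}, which caps the sender's utility at $\min\{1, 2 q_A\}$. In the first case this bound is $1$ (since $q_A \ge 1/2$), and in the second it is $2 q_A$ (since $q_A < 1/2$ forces $2 q_A < 1$); in both cases the constructed scheme meets the bound and is hence optimal. Since the scheme has support on at most $|\Theta_R| + 1$ signals, it can be written down in polynomial time; alternatively one could feed the partition $\Sigma_A = \{\sigma_0\}$ together with these singleton reject signals into the linear program of Proposition~\ref{prop:givenSignals} restricted to this polynomial-size signal set. No real obstacle arises: the only point needing care is justifying that collapsing the exponential signal set down to $\{\sigma_0\} \cup \{\{\theta\} : \theta \in \Theta_R\}$ loses no sender utility, and this is precisely what the matching upper bound of Lemma~\ref{lem:trivialUB} certifies.
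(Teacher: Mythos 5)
Your proposal is correct and matches the paper's own argument essentially verbatim: both restrict attention to the global signal $\Theta$ plus the singleton signals $\{\theta\}$ for $\theta \in \Theta_R$, split on whether $q_A \ge q_R$ (utility $1$ via sending $\Theta$ always, or $2q_A$ via balancing the acceptable mass with an equal amount of rejectable mass on $\Theta$), and certify optimality with the $\min\{1, 2q_A\}$ bound of Lemma~\ref{lem:trivialUB}.
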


\paragraph{Laminar States}
As outlined in the model section, we can compute the optimal signaling scheme for each component of the state-signal graph $H$ separately. Due to the laminarity of signal neighborhoods, in each component there is a signal $\sigma$ that has a maximal set of incident states, which must be the all states of this component. Hence, $\sigma$ represents a global signal in this component. We can apply Theorem~\ref{thm:2-action-persuasion} to obtain the following corollary.

\begin{corollary}
    For constrained persuasion with laminar states there is a polynomial-time algorithm to compute an optimal signaling scheme $\varphi^*$.
\end{corollary}

\subsubsection{Laminar Signals}

In contrast to laminar states, the condition of laminarity for state neighborhoods does not result in a polynomial-time algorithm.

\begin{theorem}
  Constrained persuasion with laminar signals is \classNP-hard.
\end{theorem}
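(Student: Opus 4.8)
The plan is to give a (weak) \classNP-hardness reduction from \PART. Recall that an instance of \PART\ consists of positive integers $a_1, \dots, a_n$ with even total sum $A = \sum_{i=1}^n a_i$, and we must decide whether some subset sums to exactly $B := A/2$. I would build the following constrained persuasion instance: introduce one signal $\sigma_i$ for each item $i \in [n]$; a single acceptable state $\theta_a$ with neighborhood $N(\theta_a) = \{\sigma_1, \dots, \sigma_n\}$ and $q_{\theta_a} = 1/3$; and, for every $i \in [n]$, a rejectable state $\theta_i$ with singleton neighborhood $N(\theta_i) = \{\sigma_i\}$ and probability $q_{\theta_i} = a_i/(3B)$. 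Since $\sum_i a_i = 2B$, the probabilities sum to $1/3 + 2/3 = 1$, as required.

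Two structural observations make this instance easy to analyze. First, the family of state neighborhoods $\{N(\theta_a)\} \cup \{N(\theta_i)\}_{i \in [n]}$ is laminar: each singleton $\{\sigma_i\}$ is contained in the top set $\{\sigma_1, \dots, \sigma_n\}$, and any two distinct singletons are disjoint. Hence this is a laminar-signals instance. Second, the instance has a \emph{unique accept state}, so I can apply Lemma~\ref{lem:unique-accept-characterization} directly. For a nonempty accept set $\Sigma_A$, write $S = \{i : \sigma_i \in \Sigma_A\}$; then $\Theta_R(\Sigma_A) = \{\theta_i : i \in S\}$ and $N(\Sigma_A) = \{\theta_a\} \cup \{\theta_i : i \in S\}$. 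The feasibility condition $\sum_{\theta \in \Theta_R(\Sigma_A)} q_\theta \le q_{\theta_a}$ becomes exactly $\sum_{i \in S} a_i \le B$, and when it holds the optimal sender utility for this partition equals
\[
\min\left\{\tfrac{2}{3},\ q_{\theta_a} + \sum_{i \in S} q_{\theta_i}\right\} = \min\left\{\tfrac{2}{3},\ \tfrac{1}{3} + \frac{\sum_{i \in S} a_i}{3B}\right\} = \tfrac{1}{3} + \frac{\sum_{i \in S} a_i}{3B},
\]
the last equality because $\sum_{i \in S} a_i \le B$. Thus maximizing over feasible $\Sigma_A$ reduces to maximizing $\sum_{i \in S} a_i$ subject to $\sum_{i \in S} a_i \le B$.

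The gap argument is then immediate. In a YES-instance of \PART\ some $S$ achieves $\sum_{i \in S} a_i = B$, giving optimal sender utility exactly $\tfrac{1}{3} + \tfrac{1}{3} = \tfrac{2}{3}$ (the empty partition contributes only $0$, so it does not affect the maximum). In a NO-instance every feasible $S$ has integer sum at most $B-1$, so the optimal utility is at most $\tfrac{1}{3} + \tfrac{B-1}{3B} = \tfrac{2}{3} - \tfrac{1}{3B} < \tfrac{2}{3}$. Hence deciding whether the optimal sender utility reaches $2/3$ decides \PART, and since the reduction is polynomial in the bit-length of the $a_i$, this establishes \classNP-hardness.

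I expect the only real work to be bookkeeping: checking that the laminar-signals and unique-accepts properties hold simultaneously for this very simple family, and confirming that the threshold $2/3$ separates the two cases. There is no genuine combinatorial obstacle here—the heavy lifting is done by Lemma~\ref{lem:unique-accept-characterization}, which collapses persuasion on this family to the subset-sum question underlying \PART. The one point worth flagging is that the hardness obtained is only \emph{weak} (pseudo-polynomial) \classNP-hardness, matching the table entry: the $a_i$ enter as numerical probabilities, and the laminar-signals restriction is in fact lenient enough that a dynamic program over attainable subset sums of $\{q_{\theta}\}$ should solve the problem in pseudo-polynomial time, so one should not expect strong hardness.
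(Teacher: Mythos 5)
Your reduction is exactly the paper's: the same \PART\ gadget with a single (global) acceptable state of mass $1/3$ and singleton-neighborhood rejectable states of mass $a_i/(3B) = 2a_i/(3A)$, with the same utility threshold $2/3$ separating YES from NO instances. The only difference is in the bookkeeping: you verify correctness by invoking Lemma~\ref{lem:unique-accept-characterization} (since the instance also has unique accepts), whereas the paper argues the probability-mass matching directly using Lemma~\ref{lem:trivialUB}; both verifications are sound, and your closing remark that the hardness is only weak matches the paper's Table~\ref{Table1} entry.
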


\begin{proof}
We reduce from the \PART\ problem. In this problem we are given $n$ positive integers $a_1,\ldots,a_n$. We denote their sum by $A = \sum_{i=1}^n a_i$. The goal is to decide whether there is a subset $S \subset \{1,\ldots,n\}$ such that $\sum_{i \in S} a_i = A/2$. 

For each integer $a_i$, we introduce a signal $\sigma_i$. For each signal there is a rejectable state $\theta_i$ with $N(\theta_i) = \{\sigma_i\}$ and probability $q_{\theta_i} = 2a_i/(3A)$. Finally, there is a global acceptable state $\theta_A$ with $N(\theta_A) = \Sigma$ and $q_{\theta_A} = A/(3A) = 1/3$. 

If the \PART\ instance has a solution $S$, then we choose accept signals $\Sigma_A = \{ \sigma_i \mid i \in S\}$. We distribute the probability mass of $\theta_A$ to exactly match the mass of $\theta_i$ for each $i \in S$. In this way, the sender obtains a utility of $2q_{\theta_A} = 2/3$, which is optimal by Lemma~\ref{lem:trivialUB}.

Conversely, suppose the sender obtains a utility of $2q_{\theta_A} = 2/3$. Then the signaling scheme must assign the probability mass of $\theta_A$ in a way such that it is exactly matched by the mass of the rejectable states for every accept signal. Consequently, the set of accept signals satisfies $\sum_{\sigma_i \in \Sigma_A} q_{\theta_i} = \sum_{\sigma_i \in \Sigma_A} 2a_i/(3A) = A/(3A)$, or put differently, $\sum_{\sigma_i \in \Sigma_A} a_i = A/2$. Hence, the set of accept signals infers a solution for the \PART\ instance. 
\end{proof}

We leave a non-trivial approximation algorithm for constrained persuasion with laminar signals as an interesting open problem.

\section*{Acknowledgments}

This work was done in part while Martin Hoefer and Alexandros Psomas were visiting the Simons Institute for the Theory of Computing. The authors acknowledge financial support and the invitation by the organizers to join the stimulating work environment. This work was done in part while Alexandros Psomas was visiting Google Research, Mountain View. 

Martin Hoefer is supported by DFG grants Ho 3831/6-1, 7-1, and 9-1.

Alexandros Psomas is supported in part by an NSF CAREER award CCF-2144208, a Google AI for Social Good award, and research awards from Google and Supra.

\bibliographystyle{ACM-Reference-Format} % outcomment this and next line in Case 1
\bibliography{refs.bib}

\begin{appendix}

\section{Missing Proofs for Constrained Equilibrium}
\subsection{Proof of Theorem~\ref{thm:equilibrium}}
\label{app: BNE Computation}
\renewcommand{\ot}{\leftarrow}
\newcommand{\bfFor}{\textbf{for}}
\newcommand{\bfDo}{\textbf{do}}
\newcommand{\bfIf}{\textbf{if}}
\newcommand{\bfThen}{\textbf{then}}
\begin{algorithm}[t]
	\caption{Bayes-Nash Equilibrium for Games with Two Actions}
	\label{algo:bayesNash_eq}
	\DontPrintSemicolon
	\KwIn{Graph $H$, distribution $q$, utility functions $u_r$ and $u_s$.}
	\KwOut{Equilibrium schemes $(\varphi,\psi)$.}
	
	Compute normalized representation of utility functions\;
	Construct initial flow network $N^0$ with edge capacities $c^0$ and set	$j \ot 0$\;
	Let $f^0$ be a maximum $s$-$t$-flow in $N^0$\;
	\While{there is an unsaturated edge $(s,\theta')$ in $N^j$}{
		Let $\Theta^j_c$ and $\Sigma^j_c$ be states and signals reachable by augementing paths from $\theta'$, resp.\;
		Set $\psi(\sigma) \ot A$ for all $\sigma \in \Sigma^j_c$\;
		
		\For{$\theta \in \Theta_{R,R} \cap \Theta_c^j$}{
			\bfIf\ \emph{$\theta$ has a neighboring signal $\sigma \in \Sigma^j \setminus \Sigma^j_c$} \bfThen\ remove $\theta$ from $\Theta_c^j$\;
		}
		\For{$\theta \in \Theta_{A,A} \cap (\Theta^j \setminus \Theta_c^j)$}{
			\bfIf\ \emph{$\theta$ has a neighboring signal $\sigma \in \Sigma^j_c$} \bfThen\ set $\varphi(\theta,\sigma) \ot c^j(s,\theta)/u_r(A,\theta)$\;
		}
		
		\bfFor\ \emph{each $\theta \in \Theta_R \cap \Theta^j_c$} \bfDo\ set $\varphi(\theta,\sigma) \ot f^j(\sigma,\theta) / u_r(R,\theta)$ for each $\sigma \in \Sigma^j_c$ \;
		\For{each $\theta \in \Theta_A \cap \Theta^j_c$}{
			Set $\varphi(\theta,\sigma) \ot f^j(\theta,\sigma) / u_r(A,\theta)$ for each $\sigma \in \Sigma^j_c$ \;
			$\varepsilon(\theta) \ot c^j(s,\theta) - \sum_{\sigma \in \Sigma^j} f^t(\theta,\sigma)$ \;
			\If{$\varepsilon(\theta) > 0$}{
				Pick arbitrary $\sigma \in \Sigma^j_c$ adjacent to $\theta$\;
				Increase $\varphi(\theta,\sigma) \ot \varphi(\theta,\sigma) + \varepsilon(\theta) / u_r(A,\theta)$
			}
		}
		Construct network $N^{j+1}$: Remove $\Theta^j_c$ and $\Sigma^j_c$ from $N^j$ along with all their incident edges.\;
		Let $f^{j+1}$ be a maximum $s$-$t$-flow in $N^{j+1}$\;
		$j \ot j+1$
	}
	
	Set $\psi(\sigma) \ot R$ for all $\sigma \in \Sigma^j$\;
	\bfFor\ \emph{each $\theta \in \Theta_A \cap \Theta^j$} \bfDo\ set $\varphi(\theta,\sigma) \ot f^j(\theta,\sigma) / u_r(A,\theta)$ for each $\sigma \in \Sigma^j$ \;
	\For{each $\theta \in \Theta_R \cap \Theta^j$}{
		Set $\varphi(\theta,\sigma) \ot f^j(\sigma,\theta) / u_r(R,\theta)$ for each $\sigma \in \Sigma^j$ \;
		$\varepsilon(\theta) \ot c^j(\theta,t) - \sum_{\sigma \in \Sigma^j} f^t(\sigma,\theta)$ \;
		\If{$\varepsilon(\theta) > 0$}{
			Pick arbitrary $\sigma \in \Sigma^j$ adjacent to $\theta$\;
			Increase $\varphi(\theta,\sigma) \ot \varphi(\theta,\sigma) + \varepsilon(\theta) / u_r(R,\theta)$
		}
	}		
	\bfFor\ \emph{each $\theta \in \Theta_N$} \bfDo\ set $\varphi(\theta,\sigma) \ot q_{\theta}$ for an incident $\sigma \in \Sigma$ as preferred by the sender\;
\end{algorithm}
%
%\begin{proof}
	Our algorithm (see Algorithm~\ref{algo:bayesNash_eq} for pseudo-code) starts by adopting a normalized representation for the utility functions of the agents. For each state $\theta$, we set $\delta_r(\theta) = \min\{u_r(A,\theta), u_r(R,\theta)\}$ and adjust the utilities to $u_r(x,\theta) - \delta_r(\theta)$ for both $x \in \{A,R\}$ in each state $\theta$. In this way, we subtract $\delta_r(\theta)$ from the utility of the receiver in each state $\theta$, no matter which action is taken. Clearly, such a shift has no influence on the preference of the receiver for choosing an action. In a similar fashion we define $\delta_s(\theta)$ for the sender and shift the utilities $u_s$. In the normalized representation, for each state, an agent has positive utility for at most one action.
	
	Based on the normalized utilities, let us define the following sets of states:
	\begin{itemize}	
		\item $\Theta_{A,A} = \{ \theta \in \Theta \mid u_r(A,\theta) > 0 \text{ and } u_s(A,\theta) \ge 0 \}$
		\item $\Theta_{A,R} = \{ \theta \in \Theta \mid u_r(A,\theta) > 0 \text{ and } u_s(R,\theta) > 0 \}$
		\item $\Theta_{R,A} = \{ \theta \in \Theta \mid u_r(R,\theta) > 0 \text{ and } u_s(A,\theta) \ge 0 \}$
		\item $\Theta_{R,R} = \{ \theta \in \Theta \mid u_r(R,\theta) > 0 \text{ and } u_s(R,\theta) > 0 \}$
		\item $\Theta_A = \Theta_{A,A} \cup \Theta_{A,R}$, the (receiver) accept states
		\item $\Theta_R = \Theta_{R,A} \cup \Theta_{R,R}$, the (receiver) reject states.
		\item $\Theta_N = \{ \theta \in \Theta \mid u_r(A,\theta) = u_r(R,\theta) \}$
	\end{itemize}

	The remaining algorithm is an iterative procedure. In each round $j$, it uses a flow network $N^j = (s,t,\Theta^j,\Sigma^j,E^j, c^j)$, which is governed by the part of the graph $H$, for which signaling and action schemes have not been decided yet. In $N^j$, there is a source $s$ and a sink $t$. We start with states $\Theta^0 = \Theta \setminus \Theta_N$ and all their adjacent signals $\Sigma^0 = \{\sigma \in \Sigma \mid \text{there is } \{\theta,\sigma\} \in E \text{ with } \theta \in \Theta^0\}$. In the initial edge set $E^0$, there are directed edges $(s,\theta)$ for each $\theta \in \Theta_A$, directed edges $(\theta, \sigma)$ for each $\{\theta,\sigma\} \in (\Theta_A \times \Sigma^0) \cap E$, directed edges $(\sigma,\theta)$ for each $\{\theta,\sigma\} \in (\Sigma^0 \times \Theta_R) \cap E$ and an edge $(\theta,t)$ for each $\theta \in \Theta^R$.
	
	The flow network is a five-layer network, in which edges go from $s$ to each of the accept states, then from each accept state to every signal adjacent in $H$, and then from each signal to every adjacent reject state, and then from each reject state to the sink $t$. The initial edge capacities $c^0$ are given as follows. Each edge between states and signals has $c^j(\theta, \sigma) = \infty$ and $c^j(\sigma,\theta) = \infty$. For edges from the source $c^0(s,\theta) = q_{\theta} \cdot u_r(A,\theta)$, for the sink $c^0(\theta,t) = q_{\theta} \cdot u_r(R,\theta)$. Here we use the normalized utility values to define edge capacities.
	
	The interpretation of a flow is expected receiver utility. Consider a max-flow $f^0$ in $N^0$ and a single signal $\sigma$. Suppose $\sigma$ is sent from every state $\theta \in \Theta_A$ with a probability $f^0(\theta,\sigma)/u_r(A,\theta)$ and every $\theta \in \Theta_R$ with probability $f^0(\sigma,\theta)/u_r(R,\theta)$. Then the (unconditional) expected utility upon receiving $\sigma$ for the receiver is
	\begin{align*}
		\Ex{u_r(A,\theta) \mid \sigma } \cdot \Pr{\sigma} &= \sum_{\theta \in \Theta_A} \frac{f^0(\theta,\sigma)}{u_r(A,\theta)} \cdot u_r(A,\theta) + \sum_{\theta \in \Theta_R} \frac{f(\sigma,\theta)}{u_r(R,\theta)} \cdot u_r(A,\theta) = \sum_{\theta \in \Theta_A} f^0(\theta,\sigma) \\
		\Ex{u_r(R,\theta) \mid \sigma } \cdot \Pr{\sigma} &= \sum_{\theta \in \Theta_A} \frac{f^0(\theta,\sigma)}{u_r(A,\theta)} \cdot u_r(R,\theta) + \sum_{\theta \in \Theta_R} \frac{f(\sigma,\theta)}{u_r(R,\theta)} \cdot u_r(R,\theta) = \sum_{\theta \in \Theta_R} f^0(\sigma,\theta)
	\end{align*}
	Hence, by flow conservation, such an assignment yields $\Ex{u_r(A,\theta) \mid \sigma } = \Ex{u_r(R,\theta) \mid \sigma }$, i.e., it leaves the receiver indifferent between both choices upon receiving signal $\sigma$. 
	
	In each iteration $j$, our algorithm checks if the max-flow leaves an edge $(s,\theta')$ to a state $\theta' \in \Theta^j \cap \Theta_A$ unsaturated. This implies that there is a part of the graph $H$, in which the expected utility from the accept side exceeds the one from the reject side. The part of the graph is identified by states $\Theta_c^j$ and $\Sigma_j^c$ considering augmenting paths from $\theta'$. This ensures that for each signal $\sigma \in \Sigma_j^c$ all states that route flow from or to $\sigma$ are included in $\Theta^j_c$. Then by assigning $\varphi(\theta,\sigma) = f^j(\theta,\sigma)/u_r(A,\theta)$ for $\theta \in \Theta^j_c \cap \Theta_A$ and $\varphi(\theta,\sigma) = f^j(\sigma,\theta)/u_r(R,\theta)$ for $\theta \in \Theta^j_c \cap \Theta_R$, we obtain a ``break-even'' assignment with the same conditional expectations for the receiver utility of both actions as above. 
	
	We then have to incorporate additional probability mass from $\theta'$ and other states in $\Theta_A$, which tilts the preference of the receiver to action $A$. Consequently, we assign each signal $\sigma \in \Sigma^j_c$ to $\psi(\sigma) = A$. If there are states $\theta \in \Theta^j_c \cap \Theta_A$ that are not saturated, we add the remaining probability mass to an arbitrary adjacent signal from $\Sigma^c_j$. This only increases the preference of the receiver for $A$. For a state $\theta \in \Theta_{R,R}$, we include $\theta$ in our assignment only if all remaining adjacent signals are in $\Theta_j^c$. This implies that all adjacent signals (from $\Theta^j_c$ in current and potentially previous iterations) are accept signals. Otherwise, removing the contribution of $\theta$ only increases the receiver preference for $A$ for the adjacent signals. Finally, for states $\theta \in \Theta_{A,A}$, both sender and receiver want that $A$ is chosen. As such, whenever a signal $\sigma \in \Sigma^j_c$ has an adjacent state $\theta \in \Theta_{A,A}$, then we always signal $\sigma$ in state $\theta$. This can only increase the preference of the receiver for $A$ in $\sigma$.
	
	Now suppose the while-loop of the algorithm breaks. Since the remaining network (if any) has no unsaturated edge $(s,\theta)$, a ``break-even'' assignment of $\varphi(\theta,\sigma) = f^j(\theta,\sigma)/u_r(A,\theta)$ for $\theta \in \Theta^j_c \cap \Theta_A$ and $\varphi(\theta,\sigma) = f^j(\sigma,\theta)/u_r(R,\theta)$ for $\theta \in \Theta^j_c \cap \Theta_R$ completely assigns all probability mass of the remaining accept states in $\Theta^j \cap \Theta_A$. As such, we can only have an excess utility on the reject side. We simply add any excess probability from the remaining states in $\Theta^j \cap \Theta_R$ arbitrarily to the signals in $\Sigma^j$. We then turn every signal $\sigma \in \Sigma^j$ into a reject signal $\psi(\sigma) = R$. This clearly is aligned with the preference of the receiver. 
	
	In the final step, we assign each $\theta \in \Theta_N$ according to the preference of the sender for the actions of the adjacent signals. In $\theta$, the receiver is completely indifferent between both actions. As such, an assignment of $\theta$ has no influence on the preference of the receiver for an action upon receiving a signal. Overall, this shows that the assignment of actions $\psi$ for all signals is aligned with the receiver preferences, i.e., $\psi$ is a best response against $\varphi$.
	
	Now consider the incentives for the sender. For every state $\theta \in \Theta_{A,A} \cup \Theta_{R,A}$ for which $\varphi(\theta,\sigma)$ is assigned during the while-loop, the scheme only sends accept signals. This is clearly in the interest of the sender. For a state $\theta \in \Theta_{R,R}$, $\varphi(\theta,\sigma)$ gets assigned during the while-loop only when all adjacent signals are accept signals in $\psi$. As such, the sender must send an accept signal in $\theta$, and $\varphi$ is a best response against $\psi$. Consider a state $\theta \in \Theta_{A,R}$ assigned during the while loop. In the iteration $j$, in which $\varphi(\theta,\sigma)$ gets assigned, we have $\theta \in \Theta_j^c$. Hence, every adjacent signal $\sigma$ either (a) was decided to be an accept signal in an earlier iteration, or (b) becomes an accept signal in this iteration, because $\sigma \in \Sigma^j_c$ -- any augementing path to $\theta$ can be extended along $(\theta,\sigma)$ with infinite capacity. Hence, all adjacent signals $\sigma$ have $\psi(\sigma) = A$, and $\varphi(\theta,\sigma)$ sending only accept signals is a best response against $\psi$. 
	
	Now consider the states assigned to reject signals after the while-loop breaks. For any state $\theta \in \Theta_{R,R} \cup \Theta_{A,R}$, sending reject signals is in the interest of the sender, and hence $\varphi$ is a best response. If a state $\theta \in \Theta_{A,A}$ is assigned after the loop, it has no adjacent accept signal, since otherwise it would have been assigned to that signal before. As such, all adjacent signals are reject signals, and $\varphi$ is a best response. Now consider a state $\theta \in \Theta_{R,A}$. Suppose $\theta$ has an adjacent accept signal $\sigma$ and consider the iteration $j$ where $\sigma \in \Sigma^j_c$. Then we can extend the augmenting path from $\sigma$ via $(\sigma, \theta)$, so $\theta \in \Theta^j_c$, and $\varphi$ for $\theta$ must have been assigned during iteration $j$. This is a contradiction -- as such, if $\theta \in \Theta_{R,A}$ is assigned after the while-loop, all adjacent signals must be reject signals under $\psi$. $\varphi$ is again a best response. Finally, every state $\theta \in \Theta_N$ is assigned according to the preference of the sender to adjacent signals. Overall, this shows that the assignment of $\varphi$ for all states is aligned with the sender preferences, i.e., $\varphi$ is a best response against $\psi$.

\section{Missing Proofs for Constrained Delegation}

\subsection{Proof of Lemma~\ref{lem: GR 06 deterministic psi}}\label{app: proof of GR 06}

In constrained delegation, for every state $\theta$ the sender always picks the signal that maximizes the probability to accept. Consider any optimal scheme $\psi^*$. For every signal $\sigma \in \Sigma$, $\psi^*(\sigma)$ is the probability that the receiver accepts in $\psi^*$. We set $p^{max} = \max_{\sigma} \psi^*(\sigma)$ and $p^{min} = \min_{\sigma} \psi^*(\sigma)$.

Consider the set $\Sigma_1 = \{ \sigma \in \Sigma \mid \psi^*(\sigma) = p^{max}\}$ of signals with largest acceptance probability and the set $\Sigma_2 = \{ \sigma \in \Sigma \mid \psi^*(\sigma) = \max_{\sigma \in \Sigma \setminus \Sigma_1} \psi^*(\sigma)\}$ with second-largest acceptance probability $p^{2nd} = \psi^*(\sigma)$ for $\sigma \in \Sigma_2$. If $\Sigma_2$ is empty, we let $p^{2nd} = 0$.

If $\Ex{u_r(A, \theta) \mid \sigma \in \Sigma_1} \ge \Ex{u_r(R, \theta) \mid \sigma \in \Sigma_1}$, then it is profitable for the receiver to raise all probabilities $\psi^*(\sigma)$ of signals in $\sigma \in \Sigma_1$ to 1. After this step, the signals in $\Sigma_1$ remain to ones with largest acceptance probability. Hence, this does not change the preferences of the sender and the resulting assignment of states of nature to signals. Otherwise, it is profitable to lower all probabilities $\psi^*_{\sigma}$ of signals in $\sigma \in \Sigma_1$ down to $p^{2nd}$. As long as the probability stays strictly above $p^{2nd}$, the signals in $\Sigma_1$ remain to ones with largest acceptance probability. As such, this does not change the preferences of the sender and the resulting assignment of states of nature to signals. When the probability becomes equal to $p^{2nd}$, the set $\Sigma_1$ is joined with $\Sigma_2$. At this point the sender might change the assignment due to different tie-breaking among signals in $\Sigma_1 \cup \Sigma_2$. However, we assume that the tie-breaking is executed in favor of the receiver. As such, the resulting scheme becomes even more profitable for the receiver.

Applying this procedure iteratively, we see that the probabilities for signals in $\Sigma_1$ are either raised to 1 or lowered to $p^{2nd}$. In the former case, we proceed to $\Sigma_2$ and apply the same argument. In the latter case, we proceed with $\Sigma_1 \cup \Sigma_2$ and repeat the argument. This eventually leads to an optimal deterministic assignment with all probabilities in $\{0,1\}$. \qed

\subsection{Proof of Theorem~\ref{thm:2-inapprox-dsv}}\label{app: lower bound for delegation}

In the \textsc{Colored Bipartite Vertex Expansion (CBVE)} problem, we are given a bipartite graph $(U, V, E)$ where the left vertex set $U$ is partitioned into $U_1 \cup \cdots \cup U_k$; we refer to each $U_i$ as a \emph{color class}. A subset $S \subseteq U$ is said to be \emph{colorful} iff $|S \cap U_i| \leq 1$ for all $i \in [k]$.  The goal of \textsc{CBVE} is to find a colorful subset $S \subseteq U$ of a given size such that $N(S)$ is minimized.

In this section, we will prove \classNP-hardness of approximating constrained delegation (Theorem~\ref{thm:2-inapprox-dsv}). The reduction is divided into two main parts. First, we show the \classNP-hardness of approximating \textsc{CBVE} (Theorem~\ref{thm:cbvx}), akin to Khot and Saket's hardness of \BVE\ (Theorem~\ref{thm:bvx}). This is done in the following two subsections. Then, we reduce from \CBVE\ to constrained delegation in Section~\ref{sec:cbvx-to-simple-del}; this reduction is similar to that from \BVE\ sketched in Section~\ref{subsec: hardness of delegation}.

Since we often deal with multiple graphs in this section, we may write $N_G$ instead of $N$ to stress that we are referring to the neighborhood set in graph $G$ to avoid any ambiguity.

\subsubsection{From Label Cover to Colored Bipartite Vertex Expansion}

We will prove the following \classNP-hardness of \CBVE. We remark that this is not yet the final hardness we use to reduce to constrained delegation yet; in particular, unlike Theorem~\ref{thm:bvx}, the NO case can still have $\alpha/t \ll 1$. We will ``boost'' the NO case so that the coefficient is arbitrarily close to 1 in the next subsection.

\begin{theorem} \label{thm:cbvx-intermediate}
For any constants $\tau \in (0, 1)$ and $\alpha > 1$, there exists $t = t(\tau, \alpha)$ such that, given a bipartite graph $(U, V, E)$ together with a partition $U = U_1 \cup \cdots \cup U_k$, it is \classNP-hard to distinguish between the following two cases:
\begin{itemize}
\item (YES) There exists a colorful $S^* \subseteq U$ of size $k$ such that $|N(S^*)| = \frac{1}{t} \cdot |V|$.
\item (NO) For any colorful $S \subseteq U$ of size at least $\tau k$, we have $|N(S)| > \frac{\alpha}{t} \cdot |V|$.
\end{itemize}
\end{theorem}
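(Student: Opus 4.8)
The plan is to reduce directly from \textsc{Label Cover}, whose \classNP-hardness with an arbitrarily small constant soundness follows from the PCP theorem together with parallel repetition. I would fix a bi-regular instance $(A, B, \mathcal{E})$ with $A$-label set $[L]$, $B$-label set $[R]$, left-degree $d_A$, right-degree $d_B$, projection constraints $\{\pi_e : [L] \to [R]\}_{e \in \mathcal{E}}$, and soundness $\eta$ chosen as a function of $\tau$ and $\alpha$: in the YES case some labeling satisfies every edge, and in the NO case every labeling satisfies at most an $\eta$ fraction of the edges. The \CBVE\ instance is then built as follows. The color classes are indexed by $A$, with $U_a := \{u_{a,\ell} : \ell \in [L]\}$, so $k = |A|$ and a colorful set of size $k$ is exactly a labeling $\phi : A \to [L]$. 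The right side is $V := \{v_{b,r} : b \in B, r \in [R]\}$, so $|V| = |B| \cdot R$, and I would connect $u_{a,\ell}$ to $v_{b,r}$ iff $e = (a,b) \in \mathcal{E}$ and $\pi_e(\ell) = r$, finally setting $t := R$.

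For completeness, a satisfying assignment $\phi$ yields a colorful set $S^*$ of size $k$; since every edge at a vertex $b$ projects $\phi$ to the common value $\phi(b)$, the neighborhood of $S^*$ inside the block of $b$ is the single vertex $v_{b,\phi(b)}$. Hence $|N(S^*)| = |B| = |V|/R = |V|/t$, exactly as required.

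The soundness analysis is where the real work lies. For a colorful $S$, viewed as a partial labeling $\phi$ on labeled vertices $A' \subseteq A$, write $c_b := |\{\pi_{(a,b)}(\phi(a)) : a \in A' \cap N(b)\}|$ for the number of distinct projected labels seen at $b$; then $|N(S)| = \sum_b c_b$. The first ingredient is a ``no popular label'' claim: if more than a $\tau'$ fraction of the $b$'s had some label agreeing with more than a $\rho$ fraction of their incident labeled edges, then labeling each such $b$ by that popular label would satisfy more than a $\tau' \rho$ fraction of all edges, so choosing $\eta < \tau' \rho$ contradicts soundness; thus at most $\tau'|B|$ blocks are ``popular'' (the bound transfers from full to partial labelings since labeled-edge agreement is at most the agreement of any full extension). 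The second ingredient is that at a non-popular block every projected label is used at most $\rho d_B$ times, so by pigeonhole $c_b \ge (\text{labeled edges at } b)/(\rho d_B)$. Summing and using bi-regularity with $|A'| \ge \tau k$ and $k d_A = |B| d_B = |\mathcal{E}|$, the number of labeled edges at non-popular blocks is at least $(\tau - \tau')|\mathcal{E}|$, whence $|N(S)| \ge (\tau - \tau')|B|/\rho$. Choosing $\tau' := \tau/2$, then $\rho$ small enough that $(\tau - \tau')/\rho > \alpha$, and finally $\eta < \tau' \rho$, gives $|N(S)| > \alpha |B| = (\alpha/t)|V|$ for every colorful $S$ with $|S| \ge \tau k$. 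All parameters $\rho, \tau', \eta$, and hence $t = R$, depend only on $\tau$ and $\alpha$.

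The main obstacle is obtaining an arbitrary gap factor $\alpha$ while simultaneously handling partial colorful sets of size only $\tau k$: a naive reduction that merely rewards consistency yields a factor close to $2$, and smaller sets cover fewer right vertices, which works against the NO lower bound. Both difficulties are resolved by the low-agreement/pigeonhole argument above, which forces $\Omega(1/\rho)$ distinct projected labels at almost every block in the NO case and lower-bounds the covered incidences through regularity rather than through $|S|$ directly. I would then double-check the edge cases (no degree-zero blocks, by bi-regularity; the popular-block bound transfers to partial labelings) and confirm that the parallel-repetition blow-up is polynomial because $\eta$ is a fixed constant, so \classNP-hardness is preserved.
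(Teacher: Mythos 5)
Your construction is exactly the paper's: both build the label-extended graph of a Label Cover instance (left vertices are (vertex, label) pairs grouped into color classes, right vertices are (vertex, label) pairs, edges given by the projection constraints), and completeness is argued identically, with $t$ equal to the right label-set size. Where you diverge is the soundness decoding: the paper labels each right vertex $b$ with a uniformly random element of the set of projected labels it receives and lower-bounds the expected fraction of satisfied edges via Cauchy--Schwarz (yielding soundness $\varepsilon = \tau^2/\alpha$), whereas you run a deterministic ``popular label'' pigeonhole argument. Your route is a legitimate alternative with the same flavor of parameter dependence, but as written it contains a genuine error.

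The problem is your first ingredient. You declare $b$ popular when some label agrees with more than a $\rho$ fraction of its incident \emph{labeled} edges, and then claim that labeling the popular blocks with their popular labels satisfies more than a $\tau'\rho$ fraction of \emph{all} edges. That count does not follow: a popular block $b$ only yields more than $\rho L_b$ satisfied edges, where $L_b \le d_B$ is the number of labeled edges at $b$, so the total is more than $\rho \sum_{b \text{ popular}} L_b$, which can be far below $\tau'\rho|\mathcal{E}| = \tau'\rho|B|d_B$. Concretely, any block with a single labeled edge is automatically popular (one agreeing edge is a $100\%$ fraction of its labeled edges), so there can be far more than $\tau'|B|$ popular blocks while the popular-label assignment satisfies only about one edge per such block, i.e.\ a fraction on the order of $\tau'/d_B$ of all edges --- not a constant, since $d_B$ grows with the parallel-repetition instance. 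Hence choosing a constant $\eta < \tau'\rho$ gives no contradiction, the bound ``at most $\tau'|B|$ popular blocks'' is unjustified, and your parenthetical about transferring from partial to full labelings conflates counts (which do transfer) with fractions (which do not). The fix is small and restores the proof: define popularity against all $d_B$ incident edges, i.e.\ $b$ is popular if some label agrees with more than $\rho d_B$ of its labeled edges. Then more than $\tau'|B|$ popular blocks really do yield more than $\tau'|B| \cdot \rho d_B = \tau'\rho|\mathcal{E}|$ satisfied edges, contradicting soundness; and your second ingredient --- which you already stated in the form ``each projected label is used at most $\rho d_B$ times,'' i.e.\ consistently with this corrected definition --- together with your summation $|N(S)| \ge (\tau - \tau')|B|/\rho$ and the choices $\tau' = \tau/2$, $\rho < \tau/(2\alpha)$, $\eta < \tau'\rho$ then goes through verbatim.
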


To prove Theorem~\ref{thm:cbvx-intermediate}, we reduce from the Label Cover problem, a canonical problem used as a starting point in numerous hardness of approximation results. Below we summarize the definition and hardness of Label Cover needed for our purpose.

\begin{definition}[Label Cover]
A \emph{Label Cover} instance $\calL = (A, B, E, \{\pi\}_{e \in E}, \Lambda)$ consists of
\begin{itemize}
\item a bi-regular bipartite graph $(A, B, E)$, which we refer to as the \emph{constraint graph},
\item the \emph{label set}  $\Lambda$,
\item for each edge $e \in E$, the \emph{constraint} (or \emph{projection}) $\pi_e: \Lambda \to \Lambda$.
\end{itemize}
We say that an assignment $\phi: (A \cup B) \to \Lambda$ \emph{satisfies} an edge $(a, b) \in E$ iff $\pi_{(a, b)}(\phi(a)) = \phi(b)$. The goal is to find an assignment that satisfies as large a fraction of edges as possible.
\end{definition}

\begin{theorem}[\cite{Raz98}] \label{thm:label-cover}
For every constant $\varepsilon > 0$, there exists $t = t(\varepsilon)$ such that, given a Label Cover instance $\calL$ with $|\Lambda| = t$, it is \classNP-hard to distinguish between the following two cases:
\begin{itemize}
\item (YES) There exists an assignment that satisfies all the edges. 
\item (NO) Every assignment satisfies less than an $\varepsilon$-fraction of the edges.
\end{itemize}
\end{theorem}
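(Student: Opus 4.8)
The plan is to derive this statement from two classical ingredients that are standard in the hardness-of-approximation literature: the PCP theorem and Raz's parallel repetition theorem. First I would invoke the PCP theorem in its gap form, which asserts that there is an absolute constant $c < 1$ such that it is \classNP-hard to distinguish satisfiable 3-CNF formulas from formulas in which no assignment satisfies more than a $c$-fraction of the clauses. One may additionally assume the formula is regular --- every clause has exactly three literals and every variable occurs in the same number of clauses --- by a standard expander-based regularization that degrades the gap by only a constant factor.

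Next I would repackage this gap problem as a two-prover one-round game, equivalently a Label Cover instance, with a \emph{constant} alphabet and \emph{constant} soundness gap. Take $A$ to be the set of clauses and $B$ the set of variables, with an edge $(a, b)$ whenever variable $b$ occurs in clause $a$; bi-regularity of the constraint graph follows from regularity of the formula. The $A$-side labels are the (at most seven) satisfying partial assignments to a clause, the $B$-side labels are $\{0,1\}$, and both are padded into a common alphabet $\Lambda_0$ of constant size. The projection $\pi_{(a,b)}$ sends a clause-assignment to the bit it assigns to $b$. Completeness is perfect, since a satisfying assignment induces a labeling satisfying every edge. For soundness, a routine averaging argument turns a labeling satisfying a $(1 - \eta)$-fraction of edges into a variable assignment satisfying a $(1 - O(\eta))$-fraction of clauses, so the PCP gap forces the value of this base game $G$ to be at most some constant $1 - \delta_0 < 1$.

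To push the soundness below an arbitrary $\varepsilon$ while preserving perfect completeness, I would apply Raz's parallel repetition theorem to the $r$-fold product game $G^{\otimes r}$. The theorem guarantees $\mathrm{val}(G^{\otimes r}) \le (1 - \delta_0')^{\Omega(r)}$, where $\delta_0' > 0$ is a constant depending only on $\delta_0$ and $|\Lambda_0|$ (for projection games such as ours one may use the cleaner bound of Rao). Choosing $r = r(\varepsilon)$ to be a sufficiently large constant makes this value at most $\varepsilon$. The repeated game is again a bi-regular Label Cover instance: its vertex sets are $A^r$ and $B^r$, its alphabet is $\Lambda_0^r$ of constant size $t(\varepsilon) := |\Lambda_0|^r$, its constraints are coordinate-wise projections (hence still projections), completeness remains $1$, and the instance is built in time polynomial in the size of the original formula because $r$ is a constant. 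This yields exactly the claimed YES/NO dichotomy.

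The main obstacle is concentrated entirely in the parallel repetition step. The value of the repeated game does \emph{not} simply multiply, and establishing the near-multiplicative decay $(1 - \delta_0')^{\Omega(r)}$ is precisely the deep content of Raz's theorem (later simplified and sharpened by Holenstein, Rao, and others). Everything else --- the PCP gap, the clause--variable encoding, the regularization, and the completeness/soundness bookkeeping --- is elementary. Accordingly, in a self-contained write-up I would cite parallel repetition as a black box and devote care only to verifying that its hypotheses, namely a projection game of constant value bounded away from $1$ over a constant alphabet, are met by the base instance $G$ constructed above.
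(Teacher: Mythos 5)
Your proposal is correct and is precisely the standard derivation behind the citation: the paper itself offers no proof of Theorem~\ref{thm:label-cover}, attributing it to~\cite{Raz98}, and your route --- gap 3-SAT from the PCP theorem, the regularized clause--variable projection game with constant alphabet and constant gap, then Raz's parallel repetition with a constant number $r(\varepsilon)$ of rounds to drive soundness below $\varepsilon$ while preserving perfect completeness, bi-regularity, and the projection property --- is exactly the argument this citation encapsulates. The bookkeeping (the averaging step giving soundness $1-(1-c)/3$ for the base game, and the alphabet $t(\varepsilon)=|\Lambda_0|^{r}$ remaining constant) is handled correctly, so there is nothing to add.
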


We now prove Theorem~\ref{thm:cbvx-intermediate}. The proof is relatively simple and is based on a viewpoint of the whole Label Cover instance $\calL$ as a so-called \emph{labelled-extended graph}, where the left vertex set is $A \times \Lambda$, the right vertex set is $B \times \Lambda$ and the edges are defined naturally based on the constraints. It is not hard to see that, in the YES case, picking a subset according to satisfying assignment results in a subset that does not expand much into the right vertex set. The NO case can also be argued as expected. We formalize this intuition below.

\begin{proof}[Proof of Theorem~\ref{thm:cbvx-intermediate}]
Suppose $\eps = \tau^2 / \alpha$ and $t = t(\eps)$ as in Theorem~\ref{thm:label-cover}. Consider an instance of Label Cover $\calL = (A', B', E', \{\pi_e\}_{e \in E'}, \Lambda)$  where $|\Lambda| = t$. We construct an instance $G = (U = U_1 \cup \cdots \cup U_k, V, E)$ of \CBVE\ as follows.
\begin{itemize}
\item $U = A' \times \Lambda$ and $V = B' \times \Lambda$.
\item Add an edge between $(a, \lambda_a) \in U$ and $(b, \lambda_b) \in V$ to $E$ iff $(a, b) \in E'$ and $\pi_{(a, b)}(\lambda_a) = \lambda_b$.
\item $k = |A'|$. Rename the vertices in $A'$ as $1, \dots, k$. The $i$-th color class is given by $U_i = \{i\} \times \Lambda$.
\end{itemize}
%We remark that the graph above is often referred to in literature as the \emph{label-extended graph} of $\calL$.

We will now prove correctness of the reduction. First, it is obvious that the reduction can be implemented in polynomial time. Below, we will show completeness (i.e., the YES case in Theorem~\ref{thm:label-cover} results in the YES case in Theorem~\ref{thm:cbvx-intermediate}) and soundness (i.e., the NO case in Theorem~\ref{thm:label-cover} results in the NO case in Theorem~\ref{thm:cbvx-intermediate}). Together with Theorem~\ref{thm:label-cover}, these complete the proof of Theorem~\ref{thm:cbvx-intermediate}.

\paragraph{(Completeness)}
Suppose that there exists an assignment $\phi^*: (A' \cup B') \to \Lambda$ that satisfies all the edges in $\calL$. Let $S^* = \{(a, \phi^*(a)) \mid a \in A'\}$. Since $\phi^*$ satisfies all edges in $\calL$, we have $N_G(S^*) = \{(b, \phi^*(b)) \mid b \in B'\}$. As a result, we have $|N_G(S^*)| = |B'| = \frac{|V|}{t}$ as desired.

\paragraph{(Soundness)}
We will prove this contrapositively. Specifically, assume that there exists $S \subseteq U$ of size at least $\tau k$ with $|N_G(S)| \leq \frac{\alpha}{t} \cdot |V| = \alpha \cdot |B'|$. We will show that there exists an assignment that satisfies at least an $\eps$-fraction of the edges in $\calL$.

For convenience, let $H = (A', B', E')$, and let us denote by $d_{A'}$ and $d_{B'}$ the degree of each vertex in $A'$ and the degree of each vertex in $B'$, respectively.

We define $T = \{a \in A' \mid \exists \lambda \in \Lambda, (a, \lambda) \in S\}$. Now since $S$ is colorful, we must have $|T| = |S| \geq \tau k = \tau |A'|$. Furthermore, for every $b \in N_H(T)$, let $\Lambda(b) = \{\lambda \in \Lambda \mid (b, \lambda) \in N_G(T)\}$. We define a random assignment $\phi: (A' \cup B') \to \Lambda$ as follows:
\begin{itemize}
\item For every $a \in T$, let $\phi(a)$ be the unique label such that $(a, \phi(a)) \in S$. (The uniqueness is due to colorfulness of $S$.)
\item For every $b \in T$, let $\phi(b)$ be a random label from $\Lambda(b)$.
\item For other vertices $c \in (A' \cup B') \setminus (T \cup N_G(T))$, let $\phi(c)$ be an arbitrary label from $\Lambda$.
\end{itemize}
Observe that, for $a \in T$ and $b \in N_H(a)$, the probability that $(a, b)$ is satisfied by $\phi$ is exactly $\frac{1}{|\Lambda(b)|}$. Hence, the expected fraction of constraints satisfied by $\phi$ is at least
\begin{align*}
\frac{1}{|E'|} \cdot \sum_{a \in T} \sum_{b \in N_H(a)} \frac{1}{|\Lambda(b)|} 
&\geq \frac{1}{|E'|} \cdot \frac{\left(\sum_{a \in T} \sum_{b \in N_H(a)} 1\right)^2}{\left( \sum_{a \in T} \sum_{b \in N_H(a)} |\Lambda(b)|\right)} \\
&= \frac{1}{|E'|} \cdot \frac{\left(|T| \cdot d_{A'}\right)^2}{\left( \sum_{b \in N_H(T)} \sum_{a \in N_H(b) \cap T} |\Lambda(b)|\right)} \\
&\geq \frac{1}{|E'|} \cdot \frac{\left(\tau |A'| \cdot d_{A'}\right)^2}{d_{B'} \left(\sum_{b \in N_H(T)} |\Lambda(b)|\right)} \\
&= \frac{1}{|E'|} \cdot \frac{\left(\tau |E'|\right)^2}{d_{B'} \cdot |N_G(S)|} \\
&\geq \frac{1}{|E'|} \cdot \frac{\left(\tau |E'|\right)^2}{d_{B'} \cdot \alpha \cdot |B'|} \\
&= \frac{\tau^2}{\alpha} \\
&= \varepsilon,
\end{align*}
where the first inequality follows from the Cauchy-Schwarz inequality, the second inequality from $|T| \geq \tau|A'|$, and the third inequality from $|N_G(S)| \leq \alpha \cdot |B'|$.

Hence, we can conclude that there exists an assignment that satisfies at least an $\eps$-fraction of the edges in $\calL$ as desired. 
\end{proof}

\subsubsection{Amplifying Completeness of \CBVE}

Our next step is to translate the hardness from Theorem~\ref{thm:cbvx-intermediate} into a form similar to that of Theorem~\ref{thm:bvx}. Specifically, we have to ``boost'' the NO case so that $|N(S)|$ is at least $(1 - \gamma)|V|$. We give a more precise statement below.

\begin{theorem} \label{thm:cbvx}
For any constants $\tau, \gamma \in (0, 1)$, given a bipartite graph $(U, V, E)$ together with a partition $U = U_1 \cup \cdots \cup U_k$, it is \classNP-hard to distinguish between the following two cases:
\begin{itemize}
\item (YES) There exists a colorful $S^* \subseteq U$ of size $k$ such that $|N(S^*)| \leq \gamma |V|$.
\item (NO) For any colorful $S \subseteq U$ of size at least $\tau k$, we gave $|N(S)| > (1 - \gamma)|V|$.
\end{itemize}
\end{theorem}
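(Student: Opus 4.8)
The plan is to boost the gap in Theorem~\ref{thm:cbvx-intermediate} by a \emph{disperser-style} (OR-)product construction on the right vertex set, which keeps small neighborhoods small while pushing even moderately large neighborhoods up to almost all of $V$. Concretely, given the \CBVE\ instance $G = (U = U_1 \cup \cdots \cup U_k, V, E)$ produced by Theorem~\ref{thm:cbvx-intermediate} for parameters $\tau$ and $\alpha$ (to be fixed below), together with an integer $r$ (also to be chosen), I would build $G' = (U', V', E')$ with $U' = U$ and the \emph{same} color classes $U_1, \dots, U_k$, with $V' = V^r$, and with an edge between $u \in U$ and $\vec v = (v_1, \dots, v_r) \in V'$ precisely when $v_i \in N_G(u)$ for at least one coordinate $i \in [r]$. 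Since $U'$ and its coloring coincide with those of $G$, the colorful subsets of $U'$ are exactly those of $U$, so the conditions ``size at least $\tau k$'' and ``size $k$'' transfer verbatim.

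The key identity is that for any $S \subseteq U$, a tuple $\vec v$ lies in $N_{G'}(S)$ iff some coordinate lies in $N_G(S)$, whence
\[
|N_{G'}(S)| = |V|^r - (|V| - |N_G(S)|)^r, \qquad \text{so} \qquad \frac{|N_{G'}(S)|}{|V'|} = 1 - \left(1 - \frac{|N_G(S)|}{|V|}\right)^r .
\]
As $x \mapsto 1 - (1-x)^r$ is increasing, the YES/NO thresholds of Theorem~\ref{thm:cbvx-intermediate} translate monotonically. I would then fix parameters as follows: take $\alpha$ to be any constant with $\alpha > \max\{1, 2\ln(1/\gamma)/\gamma\}$, obtain $t = t(\tau, \alpha)$ from Theorem~\ref{thm:cbvx-intermediate}, and set $r := \lceil \gamma t / 2\rceil$. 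Here $\alpha$, $t$, $r$ are all constants depending only on $\tau, \gamma$, so $|V'| = |V|^r$ is polynomial and the reduction runs in polynomial time.

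For the YES case, the colorful $S^*$ of size $k$ with $|N_G(S^*)| = |V|/t$ gives, using Bernoulli's inequality and $r \le \gamma t$,
\[
\frac{|N_{G'}(S^*)|}{|V'|} = 1 - \left(1 - \tfrac{1}{t}\right)^r \le \frac{r}{t} \le \gamma .
\]
For the NO case, any colorful $S$ of size at least $\tau k$ satisfies $|N_G(S)| > \alpha|V|/t$, so by monotonicity and $1 - x \le e^{-x}$,
\[
\frac{|N_{G'}(S)|}{|V'|} > 1 - \left(1 - \tfrac{\alpha}{t}\right)^r \ge 1 - e^{-r\alpha/t} \ge 1 - \gamma ,
\]
where the last step uses $r \ge t\ln(1/\gamma)/\alpha$, which holds because $r = \lceil \gamma t/2\rceil \ge \gamma t/2 \ge t\ln(1/\gamma)/\alpha$ by the choice $\alpha \ge 2\ln(1/\gamma)/\gamma$. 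Together with Theorem~\ref{thm:cbvx-intermediate}, this yields Theorem~\ref{thm:cbvx}.

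The main obstacle is the parameter balancing just performed: the YES bound forces $r \le \gamma t$ while the NO bound forces $r \ge t\ln(1/\gamma)/\alpha$, and these are jointly feasible only once $\alpha$ is taken large relative to $\ln(1/\gamma)/\gamma$ --- which is precisely why the intermediate hardness is stated with a free amplification parameter $\alpha$ rather than a fixed constant. A minor point to discharge is that $t = t(\tau, \alpha)$ must be large enough (say $t \ge 2/\gamma$) for the ceiling in $r = \lceil \gamma t/2\rceil$ not to break the bound $r \le \gamma t$; since $t(\varepsilon)$ in Theorem~\ref{thm:label-cover} grows as $\varepsilon = \tau^2/\alpha \to 0$, we may assume $t$ is as large as needed.
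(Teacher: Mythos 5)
Your proposal is correct and takes essentially the same route as the paper's proof: the identical OR-product construction on the right vertex set, the same identity $\frac{|N_{G'}(S)|}{|V'|} = 1 - \bigl(1 - \frac{|N_G(S)|}{|V|}\bigr)^{r}$, and the same Bernoulli/exponential estimates, differing only in constants ($r = \lceil \gamma t/2 \rceil$ versus the paper's $\ell = \lfloor \gamma t \rfloor$, and a different but equally serviceable $\alpha$). The one loose end you flagged --- guaranteeing $t \geq 2/\gamma$ --- is closed in the paper without any ``take $t$ larger'' assumption, by observing that \classNP-hardness in Theorem~\ref{thm:cbvx-intermediate} forces its NO case to be non-vacuous (hence $t > \alpha$) and choosing $\alpha$ above $2/\gamma$; you can achieve the same by putting $2/\gamma$ into the maximum defining your $\alpha$, since your current choice can drop below $2/\gamma$ when $\gamma > 1/e$.
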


The proof of Theorem~\ref{thm:cbvx} follows a standard technique of using graph products to amplify gaps. In particular, it is almost the same as what is referred to as the ``OR-product'' in~\cite{KhotS16}, except that we only apply it on one vertex set. We provide the full argument below.

\begin{proof}[Proof of Theorem~\ref{thm:cbvx}]
We extend our reduction in the proof of Theorem~\ref{thm:cbvx-intermediate}. Let $\alpha = \frac{100}{\gamma} \cdot \ln(10/\gamma)$ and let $t = t(\tau, \alpha)$ be as in Theorem~\ref{thm:cbvx-intermediate}. Furthermore, we set $\ell = \lfloor \gamma t \rfloor$.

Suppose $H = (U = U_1 \cup \cdots \cup U_k, V', E')$ is the hard \CBVE\ instance from Theorem~\ref{thm:cbvx-intermediate}. We create a new \CBVE\ instance $G = (U = U_1 \cup \cdots \cup U_k, V, E)$ as follows.
\begin{itemize}
\item $U$ and its partition $U_1 \cup \cdots \cup U_k$ are the same as in the original instance.
\item $V = (V')^{\ell}$ is the set of all $\ell$-tuples of vertices in $V'$.
\item Add an edge in $E$ between $u \in U$ and $(v_1, \dots, v_\ell) \in V$ iff $(u, v_i) \in E'$ for at least one $i \in [\ell]$.
\end{itemize}

It is obvious that the reduction can be implemented in polynomial time. Before we prove the completeness and soundness of the reduction, let us observe that the following identity holds for all $S \subseteq U$:
\begin{align} \label{eq:or-product-bound}
\frac{|N_G(S)|}{|V|} \quad = \quad 1 - \left(1 - \frac{|N_H(S)|}{|V'|}\right)^{\ell}\enspace.
\end{align}
It is simple to check that the above identity holds, because a vertex $(v_1, \dots, v_\ell) \in V$ does not belong to $N_G(S)$ iff $(v_1, \dots, v_\ell) \in (V' \setminus N_H(S))^{\ell}$. With this identity in mind, we now proceed to prove the completeness and soundness of the reduction.

\paragraph{(Completeness)}
Suppose that there exists a colorful $S^* \subseteq U$ such that $|N_H(S)| = \frac{1}{t} \cdot |V'|$. From~\eqref{eq:or-product-bound}, we have
\begin{align*}
\frac{|N_G(S^*)|}{|V|}\quad = \quad 1 - \left(1 - \frac{1}{t}\right)^{\ell} \quad \overset{\substack{\text{Bernoulli's}\\\text{inequality}}}{\leq} \quad \frac{\ell}{t} \quad \leq \quad \gamma\enspace.
\end{align*}
In other words, $|N_G(S^*)| \leq \gamma |V|$ as desired.

\paragraph{(Soundness)}
Suppose that any colorful set $S \subseteq U$ satisfies $|N_H(S)| > \frac{\alpha}{t} \cdot |V'|$. From~\eqref{eq:or-product-bound}, we also have
\begin{align*}
\frac{|N_G(S)|}{|V|} \quad > \quad 1 - \left(1 - \frac{\alpha}{t}\right)^{\ell}\enspace.
\end{align*}
By our choice of $\alpha$ and soundness of Theorem~\ref{thm:cbvx-intermediate}, we must have $t \geq \frac{1}{\alpha} > 2 / \gamma$, meaning that $\ell \geq \gamma t / 2$. Plugging this into the above inequality,
\begin{align*}
\frac{|N_G(S)|}{|V|}  \quad > \quad 1 - \left(1 - \frac{\alpha}{t}\right)^{\gamma t / 2} \quad \geq  \quad 1 - (e^{-\alpha/t})^{\gamma t / 2}  \quad \geq  \quad 1 - \gamma\enspace,
\end{align*}
where the last inequality follows from our choice of $\alpha$. This completes our proof. 
\end{proof}

\subsubsection{From \CBVE\ to Constrained Delegation}
\label{sec:cbvx-to-simple-del}

Finally, we reduce from the \classNP-hardness of \CBVE\ in Theorem~\ref{thm:cbvx} to the \classNP-hardness of constrained delegation (Theorem~\ref{thm:2-inapprox-dsv}). The reduction closely follows the sketch in Section~\ref{subsec: hardness of delegation}, except that the acceptable states are now used to check the ``colorfulness'' of $S$ instead of its size.

\begin{proof}[Proof of Theorem~\ref{thm:2-inapprox-dsv}]
For any constant $\varepsilon \in (0, 1)$, set $\gamma = \tau = 0.1\varepsilon$. Let $(U = U_1 \cup \cdots \cup U_k, V, E)$ be the input to the \BVE\ problem. We construct an instance of constrained delegation as follows.
\begin{itemize}
\item For every vertex $u \in U$, create a signal $\sigma_u$.
\item The set of rejectable states is $\Theta_R = \{\theta_v \mid v \in V\}$. For each $\theta_v \in \Theta_R$, its set of allowed signals is $N(\theta_v) = \{ \sigma_u \mid u \in N(v) \}$. The probability is $q_{\theta_v} = \frac{1}{2|V|}$.
\item The set of acceptable states is $\Theta_A = \{\theta_i \mid i \in [k]\}$. For each $\theta_i \in \Theta_A$, its set of allowed signals is $N(\theta_i) = U_i$. The probability is $q_{\theta_i} = \frac{1}{2k}$.
\end{itemize}
Observe that $q_A = q_R = 0.5$.

It is obvious to see that the above reduction can be implemented in polynomial time. We will now prove the completeness and soundness properties of our reduction. Specifically, let $\OPT = 0.5 \cdot (2 - \gamma)$; we will show below that the YES case (of Theorem~\ref{thm:cbvx}) results in a decision scheme with utility at least $\OPT$, whereas the NO case implies that any decision scheme has utility less than $\frac{\OPT}{2 - \varepsilon}$. Note that this, together with Theorem~\ref{thm:cbvx}, completes the proof of Theorem~\ref{thm:2-inapprox-dsv}.

\paragraph{(Completeness)}
Suppose that there exists a colorful $S^* \subseteq U$ of size $k$ such that $|N(S^*)| \leq \gamma |V|$. Consider the (deterministic) decision scheme $\psi^*$ where $\psi^*(\sigma_u) = 1$ iff $u \in S^*$. Since $S^*$ is colorful and has size $k$, every acceptable state is accepted. On the other hand, a rejectable state $\theta_v \in \Theta_R$ is accepted iff $v \in N(S^*)$. Hence, the utility of $\psi^*$ is at least
\begin{align*}
\frac{1}{2|V|}(|V| - |N(S^*)|) + \frac{1}{2} \quad \geq \quad  0.5(2 - \gamma) \quad = \quad \OPT\enspace,
\end{align*}
where the inequality follows from $|N(S^*)| \leq \gamma |V|$.

\paragraph{(Soundness)}
Suppose that, for any colorful set $S \subseteq U$ of size at least $\tau k$, we have $|N(S)| > (1 - \gamma) |V|$. Consider an optimal decision scheme $\psi$. We will show that the utility achieved by $\psi$ is at most $\frac{OPT}{2 - \varepsilon}$. From Lemma~\ref{lem: GR 06 deterministic psi}, we may assume that $\psi$ is deterministic, i.e., $\psi(\sigma) \in \{0, 1\}$ for all $\sigma \in \Sigma$. Observe further that, if there exist distinct vertices $u, u'$ from the same color class $U_i$ such that $\psi(\sigma_u) = \psi(\sigma_{u'}) = 1$, then we may modify $\psi(\sigma_u)$ to zero without decreasing the utility\footnote{Specifically, the utility with respect to $\Theta_A$ remains the same, whereas the utility with respect to $\Theta_R$ does not decrease.}. In other words, we may assume that $\sigma_u = 1$ for at most one vertex $u$ in each color class $U_i$. Let $S = \{u \in U \mid \psi(\sigma_u) = 1\}$. The aforementioned assumption implies that $S$ is colorful. We consider two cases, based on whether $|S| \geq \tau k$.
\begin{itemize}
\item Case I: $|S| \geq \tau k$.

In this case, from our assumption, we must have $|N(S)| > (1 - \gamma)|V|$. Since every rejectable state $\sigma_v$ for $v \in N(S)$ is accepted, the utility of $\psi$ is at most
\begin{align*}
\frac{1}{2|V|}(|V| - |N(S)|) + \frac{1}{2} \quad < \quad 0.5(1 + \gamma) \quad \leq \quad \frac{\OPT}{2 - \eps}\enspace,
\end{align*}
where the last inequality follows with our choice of $\gamma$.
\item Case II: $|S| < \tau k$. 

In this case, at most $|S|$ acceptable states are accepted; this means that the utility of $\psi$ is at most
\begin{align*}
\frac{1}{2} + \frac{|S|}{2k} \quad < \quad 0.5(1 + \tau) \quad \leq \quad \frac{\OPT}{2 - \eps}\enspace,
\end{align*}
where the last inequality follows with our choice of $\tau$.
\end{itemize}
Hence, in both cases, the utility of the decision scheme is at most $\frac{OPT}{2 - \varepsilon}$ as desired. 
\end{proof}

\subsection{APX-hardness with Degree-2 States and Degree-1 Rejects}\label{app:APX}

The following result is a consequence of~\cite[Theorem 7]{Sher14}.

\begin{corollary}
  \label{cor:APX}
  Constrained delegation is \classAPX-hard for instances with degree-2 states and degree-1 rejects.
\end{corollary}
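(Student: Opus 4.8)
The plan is to revisit the reduction behind \cite[Theorem 7]{Sher14} (which I reconstruct below) and observe two things: it can be started from an \classAPX-hard source, and the instances it produces already have degree-$2$ states and degree-$1$ rejects. Concretely, I would reduce from \IS\ (equivalently \VC) on bounded-degree graphs, which is \classAPX-hard. Given a graph $G = (V, E)$ of maximum degree $\Delta = O(1)$, introduce one signal $\sigma_v$ per vertex $v \in V$; one rejectable state $\theta_v$ per vertex, with single allowed signal $N(\theta_v) = \{\sigma_v\}$; and one acceptable state $\theta_e$ per edge $e = \{u,v\}$, with $N(\theta_e) = \{\sigma_u, \sigma_v\}$. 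Taking all $q_\theta$ equal makes this a valid instance; by construction every rejectable state has degree $1$ and every state has degree at most $2$, so the instance lies in the claimed class.

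The first step is to rewrite the receiver's objective in closed form. By Lemma~\ref{lem: GR 06 deterministic psi} we may restrict to deterministic $\psi$; let $S = \{v : \psi(\sigma_v) = 1\}$ be the accepted signals and set $I = V \setminus S$. Since a rejectable state $\theta_v$ is answered correctly exactly when $\sigma_v$ is rejected, and an acceptable edge-state $\theta_e$ is answered correctly exactly when at least one endpoint of $e$ lies in $S$, a short computation gives
\begin{equation*}
\text{receiver utility} \;=\; \frac{|E| + \bigl(|I| - e(I)\bigr)}{|E| + |V|},
\end{equation*}
where $e(I)$ is the number of edges with both endpoints in $I$. I would then prove the identity $\max_I\,(|I| - e(I)) = \alpha(G)$, the independence number: if $I$ induces a vertex of degree $\ge 2$, deleting that vertex does not decrease $|I| - e(I)$, so the maximum is attained on an induced subgraph of maximum degree $\le 1$, from which one extracts an independent set of the same value. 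Hence the optimal receiver utility equals $(|E| + \alpha(G))/(|E| + |V|)$.

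The remaining step is to check that a constant multiplicative gap survives. Using the \classAPX-hardness of \IS\ on bounded-degree graphs — there are constants $s < c$ and a family of bounded-degree graphs for which it is \classNP-hard to distinguish $\alpha(G) \ge c|V|$ from $\alpha(G) \le s|V|$ — the optimal utility is at least $(|E| + c|V|)/(|E|+|V|)$ in the YES case and at most $(|E| + s|V|)/(|E|+|V|)$ in the NO case. Because $\Delta = O(1)$ forces $|E| \le \tfrac{\Delta}{2}|V|$, the ratio of these two values is at least $1 + \tfrac{c - s}{\Delta/2 + s} > 1$, a constant bounded away from $1$; this yields \classAPX-hardness, and since all produced states have degree at most $2$ while all rejectable states have degree $1$, the corollary follows.

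The main obstacle is precisely this last point: the additive term $|E|$ — equivalently, the fact that the trivial scheme already secures utility $\max\{q_A, q_R\} \ge 1/2$ (Proposition~\ref{prop: trivial 2 approximation}) — dilutes the source gap, so a naive reduction from general \VC\ would only give a $(1 + o(1))$-gap. Restricting the source to bounded-degree instances, where $|E| = \Theta(|V|) = \Theta(\alpha(G))$, is exactly what keeps the gap a fixed constant; verifying the clean identity $\max_I\,(|I| - e(I)) = \alpha(G)$ is the other place where a little care is needed.
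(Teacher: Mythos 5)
Your proposal is correct and matches the paper's own proof essentially step for step: the same construction (one signal and one degree-1 rejectable state per vertex, one degree-2 acceptable state per edge, uniform distribution), the same source of \classAPX-hardness (\IS/\VC\ on bounded-degree graphs; the paper uses 3-regular graphs where $|E| = 1.5|V|$), and the same key observation that bounded degree keeps the additive $|E|$ term from washing out the constant gap. The only cosmetic difference is that you analyze via independent sets and the identity $\max_I\,(|I| - e(I)) = \alpha(G)$, whereas the paper argues dually that the accepted signal set may w.l.o.g.\ be taken to correspond to a vertex cover, so the objective becomes $|E| + |V| - |V'|$ — these are the same statement.
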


\begin{proof}
Consider an instance of the \VC\ problem given by an undirected graph $G=(V,E)$. For each vertex $v \in V$ we introduce a signal $\sigma_v$. For every edge $e \in E$ we introduce an acceptable state of nature $\theta_e$, i.e., $\Theta_A = \{ \theta_e \mid e\in E \}$. For every vertex $v \in V$ we introduce a rejectable state of nature $\theta_v$. For every $e = \{v,w\}$ the state $\theta_e$ has two allowed signals $\sigma_v, \sigma_w$. For every vertex $v$ the state $\theta_v$ has the allowed signal $\sigma_v$. The distribution over states is the uniform distribution, i.e., $q_{\theta} = 1/(|E| + |V|)$ for every $\theta \in \Theta$.

We can restrict the optimal scheme $\psi^*$ to be deterministic. For an accept signal the acceptance probability is 1, for a reject signal it is 0. To ensure the correct action in $\theta_v$, signal $\sigma_v$ must be a reject signal. To ensure the correct action in $\theta_e$, at least one incident signal $\sigma_v, \sigma_w$ must be an accept signal. Now consider any subset $\Sigma'$ of accept signals and the corresponding subset $V'$ of vertices in $G$. For this subset the expected utility for $R$ is
\[ 
\frac{1}{|E| + |V|} \cdot (|E(V')| + (|V| - |V'|))
\]
where $E(V')$ is the set edges incident to at least one vertex in $V'$. 

For an edge $e$, suppose there is no incident vertex in $V'$. Then adding one (say $v$) to $V'$ can only increase the profit ($\theta_e$ action becomes correct, $\theta_v$ action becomes wrong). Hence, w.l.o.g.\ we assume $E(V') = E$, i.e., $V'$ is a vertex cover. As such, the optimal decision scheme $\psi^*$ has a profit of at least $\alpha$ if and only if $G$ has a vertex cover of size at most $(1-\alpha)(|E| + |V|)$. This proves \classNP-hardness.

For \classAPX-hardness, we consider 3-regular graphs with $|E| = 1.5|V|$, where every vertex cover $V'$ has size $|V'| = \Theta(|V|)$. In these graphs, \VC\ is \classAPX-hard~\cite{AlimontiK00}. Therefore, with our reduction we also obtain a constant hardness of approximation for the objective \[|E| + |V| - |V'| = 2.5|V| - |V'|. \]
\end{proof}

\section{Hardness for Constrained Persuasion}\label{app: persuasion hardness}

We first prove the approximation hardness for the general case, which applies in the cases of degree-1 accepts and degree-2 states (Theorem~\ref{thm:verifyHard}). We then also prove the result for degree-1 rejects (Theorem~\ref{thm:verifyHard2}).

\subsection{Proof of Theorem~\ref{thm:verifyHard}}
We build a reduction from the \IS\ problem. In this problem, we are given an undirected graph $G = (V,E)$. An independent set is a subset $I \subseteq V$ of the vertices such that no two vertices in $I$ are connected via an edge from $E$. The goal is to find an independent set of maximum cardinality. Without loss of generality, we assume there are no isolated vertices in $G$, since these vertices are trivially in the optimum solution. % and, hence, make it easier to approximate. 

%\pasin{I changed the following statement; I hope the new one suffices \& is clearer.}

Hastad~\cite{Hastad99} proved that it is \classNP-hard to approximate the maximum independent set problem to within a factor of $|V|^{1 - \varepsilon}$ for any constant $\varepsilon > 0$.
For a given instance $G$ of \IS\ we build a constrained persuasion problem such that the optimum utility of the sender is proportional to the cardinality of the largest independent set in $G$. %From this equivalence, it follows that distinguishing instances with large and small optimal sender utility is computationally intractable. 
As a consequence, constrained persuasion cannot be approximated within a factor of $|V|^{1-\varepsilon}$, for any constant $\varepsilon > 0$.

Our construction works as follows. For every vertex $v \in V$, we introduce an acceptable state of nature $\theta_v$ with probability $q_{\theta_v} = \frac{1}{|V| + 3|E|}$. For every edge $e \in E$ we introduce an rejectable state $\theta_e$ with probability $q_{\theta_e} = \frac{3}{|V| + 3|E|}$. For every vertex $v \in V$ we introduce a signal $\sigma_v$. Note that $n = |\Sigma| = |V|$ and $m = |\Theta| = |V| + |E|$. For the state-signal graph $H$, we insert an edge $(\theta_v, \sigma_v)$ for every $v \in V$. Moreover, we add $(\theta_e, \sigma_v)$ iff $v$ is incident to $e$. As such, in state $\theta_v$ we are forced to signal $\sigma_v$. In state $\theta_e$, we can choose from two signals corresponding to the incident vertices of $e$.

Hence, in any signaling scheme $\varphi$, we only need to determine $\varphi(\theta_e,\sigma_v)$ for one vertex $v$ incident to $e$. %\pasin{I rephrased below; the ``choosing 0.5'' in the earlier version was a bit confusing to me.}
Observe that, for any edge $(v, w) \in E$, $\sigma_v$ and $\sigma_w$ cannot both imply an accept decision for the receiver because
\begin{multline*}
\sum_{\theta \in N(\sigma_v) \cap \Theta_A} \varphi(\theta,\sigma_v) + \sum_{\theta \in N(\sigma_w) \cap \Theta_A} \varphi(\theta,\sigma_w)
= \frac{2}{|V| + 3|E|} < \frac{3}{|V| + 3|E|} = q_{\theta_e} \\
\leq \sum_{\theta \in N(\sigma_v) \setminus \Theta_A} \varphi(\theta, \sigma_v) + \sum_{\theta \in N(\sigma_w) \setminus \Theta_A} \varphi(\theta, \sigma_w).
\end{multline*}

%If we simply choose 0.5 throughout, we obtain for every signal $\sigma$
%
%\[ \sum_{\theta \in N(\sigma) \cap \Theta_A} \varphi(\theta,\sigma) \; = \; \frac{1}{|V| + 3|E|} \;\; < \;\; \frac{1.5}{|V| + 3|E|} \; \le \; \sum_{\theta \in N(\sigma) \setminus \Theta_A} \varphi(\theta, \sigma)\enspace.\]
%
%This means that, for any signaling scheme $\varphi$, the signals $\sigma_v, \sigma_w$ for two adjacent vertices $v,w$ cannot both imply an accept decision for the receiver. 

Hence, the set of accept signals $\Sigma_A$ always represents an independent set in $G$. Moreover, for any accept signal $\sigma_v$, it holds that 
\[\sum_{\theta \in N(\sigma_v)} \varphi(\theta, \sigma_v) \le 2 \sum_{\theta \in N(\sigma_v) \cap \Theta_A} \varphi(\theta, \sigma_v) = \frac{2}{|V| + 3|E|}\enspace. \]
Thus, the maximum utility that the sender can obtain is at most $|I^*| \cdot \frac{2}{|V| + 3|E|}$, where $I^*$ is a maximum independent set in $G$. Finally, we construct a simple optimal signaling scheme $\varphi^*$ based on $I^*$ using which the sender obtains this maximal utility. For every vertex $v \in I^*$, we pick one incident edge $e = \{v,w\}$ and set $\varphi^*(\theta_e, \sigma_v) = \frac{1}{3} q_{\theta_e}$ and $\varphi^*(\theta_e, \sigma_w) = \frac{2}{3} q_{\theta_e}$ (since $w \not\in I^*$ by construction). For all other edges $e \in E$, we set $\varphi(\theta_e, \sigma_w) = q_{\theta_e}$ for some incident vertex $w \not\in I^*$. It is straightforward to see that for $\varphi^*$ any signal $\sigma_v$ that has non-zero probability leads to an accept decision of the receiver if and only if $v \in I^*$. Moreover, the total probability that the receiver accepts is $|I^*| \cdot \frac{2}{|V| + 3|E|}$.
\qed

%Observe that in the hard class of constrained persuasion problems, every state of nature there has at most two possible signals. We can also show \classNP-hardness if \emph{every} state has two possible signals. However, our adjustment is not approximation factor preserving.
%
%\begin{corollary}
%Constrained persuasion problem is \classNP-hard even when every state of nature has exactly two allowed signals.
%\end{corollary}
%
%\begin{proof}
%In the construction above, the states with a unique signal are exactly the states in $\Theta_A$, i.e., states $\theta_v$ for $v \in V$. We introduce an auxiliary signal $\sigma_1$ which can be given whenever the state of nature is $\theta \in \Theta_A$. For every $\theta_v$, by signaling $\sigma_1$ we can obtain at most a utility of $\frac{1}{|V| + 3|E|}$. When turning the signal $\sigma_v$ into an accept signal, however, we can obtain a utility of $\frac{2}{|V| + 3|E|}$. As such, an upper bound on the utility for the sender is $|I^*| \cdot \frac{2}{|V| + 3|E|} + (|V|-|I^*|) \cdot \frac{1}{|V| + 3|E|} = \frac{|V| + |I^*|}{|V| + 3|E|}$. This can be obtained by adjusting the optimal scheme above to assign $\varphi^*(\theta_v) = \sigma_1$ whenever $v \not\in I^*$.
%\end{proof}

\subsection{Proof of Theorem~\ref{thm:verifyHard2}}
We again build a reduction from the \IS\ problem. Given a graph $G =(V,E)$, there is a signal $\sigma_v$ for every vertex $v \in V$. Moreover, for every vertex $v \in V$ there is a rejectable state $\theta_v$ with weight $|V|$ and an acceptable state $\theta'_v$ with weight $|V| - \deg(v)$. For both $\theta_v, \theta'_v$ we are forced to signal $\sigma_v$. For every edge $e = \{u,v\}$ there is an acceptable state $\theta_e$ with weight $1$, and in $\theta_e$ we can signal $\sigma_u$ or $\sigma_v$. The distribution $q$ assigns every state a probability proportional to its weight. The sum of all weights is $|E| + 2|V|^2 - \sum_v \deg(v) = 2|V|^2 - |E|$.

For any signal $\sigma_v$, the receiver will pick action A if and only if the signal is sent deterministically for all incident acceptable states, i.e., $\theta'_v$ as well as the $\deg(v)$ many states $\theta_e$ with $e = \{u,v\}$. Due to the construction, this implies that no two signals $\sigma_u$ and $\sigma_v$ for neighboring vertices in $G$ can simultaneously be accept signals. Consequently, the set of accept signals corresponds to an independent set of $G$. 

If $\sigma_v$ is an accept signal, the sender obtains a utility from this signal of $(2|V|)/(2|V|^2 - |E|)$. Hence, the utility of the sender is linear in the number of accept signals, i.e., proportional to the size of independent set. As such, the \classNP-hardness of approximation within a factor of $|V|^{1-\varepsilon} = n^{1-\varepsilon}$ for \IS\ applies.
\qed
\end{appendix}

\end{document}